\newtheorem{theorem}{Theorem}[section]
\newtheorem{lemma}[theorem]{Lemma}
\newtheorem{corollary}[theorem]{Corollary}
\newtheorem{postulate}{Postulate}
\theoremstyle{definition}
\newtheorem{definition}{Definition}[section]
 \newtheorem{example}{Example}[section]
\theoremstyle{remark}
\newtheorem{remark}{Remark}
\numberwithin{figure}{section}
\newcommand{\setofstates}[2]{\mathcal{#1}_#2}
\newcommand{\multiset}[1]{\{\!\!\{{#1}\}\!\!\}}
\journal{Theoretical Computer Science}
\begin{document}
 
\begin{frontmatter}



\title{A New Thesis concerning Synchronised Parallel Computing -- Simplified Parallel ASM Thesis\footnote{The research reported in this paper results
    from the project \textit{Behavioural Theory and Logics for
      Distributed Adaptive Systems} supported by the \textbf{Austrian
      Science Fund (FWF): [P26452-N15]}.}}


\author[1]{Flavio Ferrarotti}
\author[1,2]{Klaus-Dieter Schewe}
\author[1]{Loredana Tec}
\author[3]{Qing Wang}
 
\address[1]{Software Competence Center Hagenberg, Austria, \textrm{[flavio.ferrarotti$\mid$kd.schewe$\mid$loredana.tec]@scch.at}}
\address[2]{Johannes-Kepler-University Linz, Austria, \textrm{kd.schewe@cdcc.faw.jku.at}}
\address[3]{Research School of Computer Science, The Australian National University, Australia, \textrm{qing.wang@anu.edu.au}}

\begin{abstract}

A behavioural theory consists of machine-independent postulates characterizing a particular class of algorithms or systems, an abstract machine model that provably satisfies these postulates, and a rigorous proof that any algorithm or system stipulated by the postulates is captured by the abstract machine model. The class of interest in this article is that of synchronous parallel algorithms. For this class a behavioural theory has already been developed by Blass and Gurevich, which unfortunately, though mathematically correct, fails to be convincing, as it is not intuitively clear that the postulates really capture the essence of (synchronous) parallel algorithms. 

In this article we present a much simpler (and presumably more convincing) set of four postulates for (synchronous) parallel algorithms, which are rather close to those used in Gurevich's celebrated sequential ASM thesis, i.e. the behavioural theory of sequential algorithms. The key difference is made by an extension of the bounded exploration postulate using multiset comprehension terms instead of ground terms formulated over the signature of the states. In addition, all implicit assumptions are made explicit, which amounts to considering states of a parallel algorithm to be represented by meta-finite first-order structures. 

The article first provides the necessary evidence that the axiomatization presented in this article characterizes indeed the whole class of deterministic, synchronous, parallel algorithms, then formally proves that parallel algorithms are captured by Abstract State Machines (ASMs). The proof requires some recourse to methods from finite model theory, by means of which it can be shown that if a critical tuple defines an update in some update set, then also every other tuple that is logically indistinguishable defines an update in that update set.

\end{abstract}

\begin{keyword}
parallel algorithm \sep abstract state machine \sep ASM thesis \sep behavioural theory
\end{keyword}

\end{frontmatter}


\section{Introduction}

The starting point for this research was the attempt of Gurevich to characterize algorithms by means of Abstract State Machines (ASMs). The so-called ASM thesis, first proposed in 1985 in a note to the American Mathematical Society~\cite{[Gurevich85]}, asserts that every algorithm is equivalent, on its natural level of abstraction, to an appropriate abstract state machine. In~\cite{[Gurevich00]}, Gurevich formulated and proved the ASM thesis for sequential algorithms. This consists of three intuitive postulates (sequential time, abstract state, bounded exploration) that are used to define sequential algorithms at any level of abstraction, and a proof that algorithms defined in this way are exactly captured by sequential ASMs. Starting from the sequential ASM thesis, a set of postulates for (synchronous) parallel algorithms has been proposed by Blass and Gurevich in~\cite{[BG00],[BG03]}. However, these postulates turned up being significantly more complex and less intuitive than the ones for the sequential setting and thus have not fully convinced the ASM community nor others dealing with foundations of computing and rigorous methods.

Our intention in this article is to formulate and prove a thesis for parallel algorithms that is similar to Gurevich's ASM thesis for sequential algorithms and thus overcomes the lack of intuition in the work of Blass and Gurevich\footnote{In the end it will turn out that both sets of postulates are exactly captured by ASMs, which implies their equivalence.}.
The key idea is to relax the bounded exploration postulate from the sequential setting by using more general multiset comprehension terms instead of only ground terms. In doing so we present a new parallel ASM thesis for synchronous, parallel algorithms, which consists of: (a) four postulates that capture the fundamental properties of synchronous parallel algorithms (\emph{axiomatization}); (b) a variant of Abstract State Machines (which we call \emph{parallel} ASMs) together with a proof that every parallel ASM satisfies the postulates (\emph{plausibility theorem}); and (c) a proof that every algorithm stipulated by the postulates can be step-by-step simulated by an equivalent parallel ASM (\emph{characterization theorem}).

There are three postulates that are common to the sequential ASM thesis of Gurevich \cite{[Gurevich00]} and to the parallel ASM thesis of Blass and Gurevich \cite{[BG03],[BG08]}: the \emph{Sequential Time}, \emph{Abstract state} and \emph{Background} postulates. While the first two postulates are identical in both theses, the third postulate (although only implicit in the sequential thesis) states the differences between sequential and parallel algorithms in the background of computations. This is because the minimum background required for the computation of parallel algorithms is nonetheless bigger than the minimum background required for sequential algorithms. 

The last postulate in the sequential ASM thesis is the \emph{Bounded Exploration} postulate, which basically says that every sequential algorithm examines only a bounded number of elements in any state, with the number of elements to be examined being bounded uniformly by the algorithm and not by the state. Unfortunately, the three postulates that replace it in the parallel ASM thesis, called the \emph{Proclet Algorithm}, \emph{Bounded Sequentiality} and \emph{Update} postulates, are not as concise and intuitive. They are based on a number of non-trivial concepts such as proclet, ken, information flow digraph, etc., and thus difficult to explain in a concise and intuitive manner unless previous knowledge of such concepts is assumed\footnote{In fact, different to bounded exploration these concepts are not purely grounded in logic.}.
To gain a better understanding of the properties of parallel algorithms, we propose to replace these three postulates in the parallel ASM thesis by a generalized Bounded Exploration postulate that is tailored to parallel algorithms. This new postulate is based on the observation that only finitely many locations of a state can be changed in one step of a computation, independently of whether the algorithm is sequential or parallel.

By the sequential accessibility principle in \cite{[Gurevich00]}, the only way in which a sequential algorithm can access an element $a$ of a state, is by producing a ground term that evaluates to $a$. This principle, together with the informal assumption that every algorithm has a finite program, indicates that sequential algorithms can only check agreement between states on a \emph{fixed} finite set of elements denoted by a fixed finite set of ground terms (the witness set in the Bounded Exploration postulate for sequential algorithms). Parallel algorithms, however, do not satisfy this principle (see Example~\ref{Ex:GraphComplement} extracted from \cite{[BG03]}).

\begin{example}
\label{Ex:GraphComplement}

\ The following algorithm takes as input an undirected graph and transforms it to its complement. We assume that the vocabulary of the states of this algorithm includes the function symbols $V$ and $E$ which are interpreted as the vertex-set and edge-set of the graph, respectively.

\begin{alltt}
forall \(x, y\) with \(V(x) \wedge V(y)\) do
   if \(x \neq y \) then \(E(x,y)\) := \(\neg(E(x,y))\) endif
enddo
\end{alltt}

Clearly, the produced set of updates to the edge set that interprets $E$ depends on the entire graph while two graphs might be very different despite any amount of \emph{fixed} finite agreement between them.
Thus, the hypothesis in the bounded exploration postulate for sequential algorithms when applied to this parallel algorithm cannot guarantee any agreement at all between two arbitrary graphs.

\end{example}

Intuitively, in a parallel algorithm many ``branches'' contribute to the update set produced during a single computation step. Although still finite, the actual number of branches is no longer bounded by the algorithm alone, but also depends on the current state of the algorithm. This number is nevertheless ``uniformly'' determined by the algorithm. This motivated us to think about an alternative Bounded Exploration postulate for parallel algorithms based on a more expressive set of terms. The new formulation of the Bounded Exploration postulate allows us to ``capture'' in every state $\bf S$ of a parallel algorithm $A$, the part of $\bf S$ that is actually explored by the branches of $A$.

\subsection{Related Work}
\label{Sec:stateofart}

The seminal work of Gurevich~\cite{[Gurevich95]} on Abstract State Machines (ASMs, formerly called ``evolving algebras'') aimed to find a precise formal definition of the notion of algorithm. The major discovery was that all computing formalisms were bound to a specific abstraction level, which implied that almost always encodings were required \cite{[Gurevich00],[Gurevich2012]}, so the major breakthrough of ASMs was due to the abstract notion of state, which is defined by general Tarski structures. 

The sequential ASM thesis~\cite{[Gurevich00]} characterizes sequential algorithms in terms of three postulates: sequential time, abstract state and bounded exploration. Moreover, it  establishes the characterization theorem for sequential algorithms stating that algorithms defined this way are exactly captured by sequential ASMs, i.e. a well-defined abstract machine model that relies on the parallel execution of updates on the abstract states, if certain conditions guarding the updates are satisfied. Thus, also sequential ASMs support bounded parallelism, where the bound is a priori fixed by the algorithm and does not depend on the state.

Following this work, many other extensions of the sequential ASM thesis to different classes of algorithms have been explored. These include the parallel ASM thesis \cite{[BG03],[BG08]} for parallel algorithms, in which the bound on the parallel branches in a computation is dropped. The postulates are significantly more complex, as multisets and multiset operations must be provided explicitly in the background \cite{[BG07c]} to permit branching and synchronization. While sequential time and abstract state postulates are preserved, the bounded exploration postulate has been replaced by a set of postulates that permit to distinguish between local and global states. There is still a debate in the ASM community, if these postulates can be simplified to obtain an axiomatization for parallel algorithms that is as intuitive as the one for sequential algorithms. A simplified parallel ASM thesis has been conjectured in~\cite{[SW2012a]}.

The approach used in the sequential ASM thesis has also been successfully adopted in the development of a theory of sequential database transformations~\cite{[SW2010]}, where the key problem is to cope with the intrinsic finiteness of databases and the need to capture the constructs and operators that are defined by a data model. This was solved by adopting meta-finite structures~\cite{[Graedel98]} for the states, and introducing explicitly background structures that capture the necessary constructs for the data models, e.g. trees, hedges and hedge algebra operations in the case of XML \cite{[SW2010a]}. This has been extended to synchronous, parallel database transformations in~\cite{[SW2012]}, which has given hints towards a simplification of the parallel ASM thesis. A conjecture concerning a modified (Parallel) Bounded Exploration postulate was already formulated in~\cite{[SW2012]}.

\subsection{Outline}

The remainder of this article is organized as follows. We begin with some preliminaries in Section~\ref{Sec:Prelim}, which include also the first two postulates, which are common for both sequential and parallel settings. Then in Section~\ref{sec:backgroundOfComputation} we highlight implicit assumptions regarding states and background of computation. This leads us to consider states as meta-finite structures and to technically distinguish states of an algorithm (as introduced in the Abstract State postulate) from states of computation (that contain many extensions as per the Background postulate). The use of meta-finite structures to represent states permits to capture in a natural way the finite components of the states, making explicit their intrinsic finiteness. In particular, the technicality of assuming that every state includes a finite set of processes or branches (e.g., proclets in~\cite{[BG03],[BG08]}), is not longer required.

After introducing a bounded exploration postulate for parallel algorithms in Section~\ref{sec:boundedExploration}, we discuss in Section~\ref{sec:asmdef} the plausibility of an ASM thesis for the class of synchronous parallel algorithms defined by the set of postulates proposed in this work. That is, we define an ASM model which captures the class of algorithms satisfying the Sequential Time, Abstract State, Background and (Parallel) Bounded Exploration postulates as defined in Sections~\ref{Sec:Prelim}--\ref{sec:boundedExploration}. Our formal ASM model is similar to the ASM model used in the parallel ASM thesis~\cite{[BG03],[BG08]} modulo some technical details which do not affect the expressiveness of the model. These technical differences are due to the fact that we represent states as meta-finite structures instead of assuming that the states include a finite set of proclets.  

In Section~\ref{sec:examples}, we illustrate  how the proposed set of postulates characterizes powerful models of parallel computation such as parallel random access machines, circuits, alternating Turing machines, first order logic and MapReduce. The main result, which is the characterization theorem, stating that for every parallel algorithm there exists an equivalent parallel ASM, is presented in detail in Section~\ref{sec:characterizationthm}.

\section{Preliminaries}
\label{Sec:Prelim}

Following Gurevich's {ASM} thesis for sequential algorithms \cite{[Gurevich00]} and Blass and Gurevich's {ASM} thesis for parallel algorithms \cite{[BG03],[BG08]}, our first postulate states that a parallel algorithm is a synchronous algorithm, i.e., that in every possible computation there is an initial state, followed by a second state, followed by a third state and so on, and that the progression from one state to the next is uniquely determined by the algorithm. 

\begin{postulate}[Sequential Time Postulate]
\label{sequentialTime}\rm
A \emph{parallel algorithm} $A$ is associated with a non-empty set $\mathcal{S}_A$ of \emph{states}, a non-empty subset $\mathcal{I}_A \subseteq \mathcal{S}_A$ of \emph{initial states}, and a function $\tau_A : \setofstates{S}{A} \rightarrow \setofstates{S}{A}$ called the \emph{one-step transformation} of $A$.
\end{postulate}

A \emph{run} or a \emph{computation} of  a parallel algorithm $A$ is a finite or infinite sequence of states ${\bf S}_0, {\bf S}_1, \ldots$, where ${\bf S}_0$ is an initial state in $\mathcal{I}_A$ and ${\bf S}_{i+1} = \tau_A({\bf S}_i)$ holds for every $i \geq 1$. A state ${\bf S}$ of $A$ is called \emph{reachable} if ${\bf S}$ occurs in some run of $A$.

We use the following (strong) notion of equivalence among algorithms, which implies that behaviourally equivalent parallel algorithms have the same runs. 
\begin{definition}[Behavioural Equivalence]
\label{BEquiv}
Algorithms $A$ and $B$ are said to be \emph{behaviourally equivalent} if $\setofstates{S}{A} = \setofstates{S}{B}$, $\setofstates{I}{A} = \setofstates{I}{B}$ and $\tau_A = \tau_B$ hold. 
\end{definition}

We follow Gurevich's approach in which states are full instantaneous descriptions of the algorithm that can be conveniently formalized as first-order structures. More formally, we consider states as first-order structures whose \emph{vocabulary} or \emph{signature} $\Sigma$ is a finite set of function symbols. Each function symbol $f_i \in \Sigma$ has a fixed \emph{arity} $r_i \geq 0$. Function symbols can be marked (by the vocabulary) as \emph{static}. Otherwise, they are \emph{dynamic}. A \emph{first-order structure} ${\bf S}$ of vocabulary $\Sigma$ is a nonempty set $S$ called the \emph{base set} of ${\bf S}$ together with interpretations of every function symbol in $\Sigma$ over $S$. Elements of $S$ are also called elements of the structure $\bf S$. An \emph{interpretation} of an $r$-ary function symbol $f \in \Sigma$ over $S$ is a (total) function $f^{\bf S}$ from $S^r$ to $S$. Consequently, our second postulate is also unchanged from previous work in the area \cite{[BG03],[BG08],[Gurevich00]}.

\begin{postulate}[Abstract State Postulate]
\label{abstractState}\rm

States of a parallel algorithm $A$ are first-order structures. All states in $\mathcal{S}_A$ have the same vocabulary. The one-step transformation $\tau_A$ does not change the base set of any state. $\setofstates{S}{A}$ and $\setofstates{I}{A}$ are closed under isomorphisms. Any isomorphism between two states ${\bf S}_1$ and ${\bf S}_2$ is also an isomorphism between $\tau_A({\bf S}_1)$ and $\tau_A({\bf S}_2)$.
\end{postulate}

Apart from defining states of an algorithm as first-order structures of a fixed vocabulary, this postulate ensures that algorithms work at a fixed level of abstraction by requiring the set of states of an algorithm to be closed under isomorphisms and the one-step transformation to preserve those isomorphisms. Recall that two structures ${\bf S}_1$ and ${\bf S}_2$ of a same vocabulary $\Sigma$ are \emph{isomorphic} (denoted ${\bf S}_1 \simeq {\bf S}_2$) iff there is a bijection $\zeta : S_1 \rightarrow S_2$ between the base sets such that $\zeta(f^{{\bf S}_1}(a_1, \ldots, a_r)) = f^{{\bf S}_2}(\zeta(a_1), \ldots, \zeta(a_r))$ holds for all $r$-ary function symbol $f \in \Sigma$ and all $r$-tuples $(a_1,\ldots, a_r) \in S_1^r$.  

As usual, we think of a structure that represents a state of an algorithm as a kind of memory that maps locations to values.

\begin{definition}[Locations and Updates]
Let $\bf S$ be the state of an algorithm of vocabulary $\Sigma$, let $f \in \Sigma$ be a function symbol of arity $r$ and let $\bar{a}$ be an $r$-tuple in $S^r$. The pair $(f,\bar{a})$ represents a (memory) \emph{location} in ${\bf S}$. The \emph{content} of the location $(f,\bar{a})$ is the value $f^{\bf S}(\bar{a})$ in $S$. If $l = (f,\bar{a})$ is a location in a state ${\bf S}$, $f$ is a dynamic function and $b$ is an element in the base set $S$, then the tuple $(l, b)$ is an \emph{update} of ${\bf S}$. If $b = f^{\bf S}(\bar{a})$, then the update $((f, \bar{a}), b)$ is called a \emph{trivial update}.
\end{definition}

An update $(l,b)$ indicates that the content of the location $l$ in $\bf S$ needs to be changed to the value $b$. Two updates \emph{clash} if they refer to the same location but are distinct.

\begin{definition}[Consistent Update Set]
A set of updates $\Delta$ is \emph{consistent} if it has no clashing updates, i.e., if for every pair of updates $(l_i, b_i)$ and $(l_j, b_j)$ in $\Delta$, we have that $l_i = l_j$ only if $b_i = b_j$.    
\end{definition}

A consistent set of updates $\Delta$ is \emph{executed} (or \emph{fired}) by executing all the updates in $\Delta$ simultaneously.

\begin{definition}[Execution of Updates]
The result of executing (or firing) a consistent update set $\Delta$ in a state ${\bf S}$ is a new state ${\bf S} + \Delta$ with the same base set as $\bf S$ such that for every location $l_i = (f_i, \bar{a}_i)$ of $\bf S$:
\[f_i^{{\bf S}+\Delta}(\bar{a}_i) = 
\begin{cases}
b & \text{if } (l_i, b) \in \Delta; \\
f_i^{{\bf S}}(\bar{a}_i) & \text{if there is no } b \text{ with } (l_i,b) \in \Delta.
\end{cases}
\]
\end{definition}

If ${\bf S}_1$ and ${\bf S}_2$ are structures of the same vocabulary and with the same base set, then there is a unique consistent set $\Delta$ of non-trivial updates of ${\bf S}_1$ such that ${\bf S}_2 = {\bf S}_1 + \Delta$. We use ${\bf S}_2 - {\bf S}_1$ to denote this unique consistent set of updates $\Delta$.

The following well known lemma is a consequence of the fact that by Postulate~\ref{abstractState} the one step transformation function $\tau_A$ of an algorithm $A$ is required to preserve isomorphisms. 

\begin{lemma}\label{lemma:IsoExtendToUpdateSets}
Suppose that $\zeta$ is an isomorphism from state ${\bf S}_1$ to state ${\bf S}_2$ of an algorithm $A$. We extend $\zeta$ to locations and update sets of ${\bf S}_1$ as follows:
\begin{itemize}
\item If $l = (f, (a_1, \ldots, a_n))$ is a location in ${\bf S}_1$, then\\ $\zeta(l) = (f, (\zeta(a_1), \ldots, \zeta(a_n)))$.
\item If $\Delta$ is an update set for ${\bf S}_1$, then $\zeta(\Delta) = \{(\zeta(l), \zeta(v)) \mid (l, v) \in \Delta\}$.
\end{itemize}
Then $\zeta(\tau_A({\bf S}_1) - {\bf S}_1) = \tau_A({\bf S}_2) - {\bf S}_2$.
\end{lemma}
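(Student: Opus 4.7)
The plan is to exploit two consequences of the Abstract State Postulate~\ref{abstractState}: first, that $\tau_A$ does not change the base set (so $\tau_A({\bf S}_i)$ shares its base set with ${\bf S}_i$, making ${\bf S}_i$ and $\tau_A({\bf S}_i)$ comparable via the $-$ operation), and second, that any isomorphism between states is also an isomorphism between their one-step transformations. From the latter, $\zeta$ is simultaneously an isomorphism from ${\bf S}_1$ to ${\bf S}_2$ and from $\tau_A({\bf S}_1)$ to $\tau_A({\bf S}_2)$, with the same underlying bijection on base sets.

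First I would verify that the extension of $\zeta$ to locations is a bijection from the locations of ${\bf S}_1$ onto the locations of ${\bf S}_2$, which is immediate since $\zeta$ is a bijection on the base sets and the vocabulary is fixed. Next, pick an arbitrary non-trivial update $(l,v) \in \tau_A({\bf S}_1) - {\bf S}_1$ with $l = (f,\bar{a})$, so that $v = f^{\tau_A({\bf S}_1)}(\bar{a})$ and $v \neq f^{{\bf S}_1}(\bar{a})$. Using the isomorphism property of $\zeta$ on $\tau_A({\bf S}_1) \to \tau_A({\bf S}_2)$ I get
\[
\zeta(v) = \zeta\bigl(f^{\tau_A({\bf S}_1)}(\bar{a})\bigr) = f^{\tau_A({\bf S}_2)}\bigl(\zeta(\bar{a})\bigr),
\]
and on ${\bf S}_1 \to {\bf S}_2$ I similarly get $\zeta(f^{{\bf S}_1}(\bar{a})) = f^{{\bf S}_2}(\zeta(\bar{a}))$. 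Since $\zeta$ is injective and the original update was non-trivial, these two images differ, so $(\zeta(l),\zeta(v))$ is a non-trivial update that belongs to $\tau_A({\bf S}_2) - {\bf S}_2$. This establishes the inclusion $\zeta(\tau_A({\bf S}_1) - {\bf S}_1) \subseteq \tau_A({\bf S}_2) - {\bf S}_2$.

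For the reverse inclusion I would apply the same argument to $\zeta^{-1}$, which by the Abstract State Postulate is also an isomorphism ${\bf S}_2 \to {\bf S}_1$ and $\tau_A({\bf S}_2) \to \tau_A({\bf S}_1)$. Any non-trivial update $(l',v') \in \tau_A({\bf S}_2) - {\bf S}_2$ then maps under $\zeta^{-1}$ to a non-trivial update in $\tau_A({\bf S}_1) - {\bf S}_1$, so $(l',v') = (\zeta(\zeta^{-1}(l')),\zeta(\zeta^{-1}(v')))$ lies in $\zeta(\tau_A({\bf S}_1) - {\bf S}_1)$.

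There is no real obstacle here; the proof is essentially an unfolding of the definitions. The only subtle point worth stating explicitly is the need to check that $\zeta$ preserves the non-triviality distinction, i.e., that the ``$\neq$'' in the definition of a non-trivial update is transported along $\zeta$. This is where the injectivity of $\zeta$ (as part of being an isomorphism) is used, and it is the single non-automatic step that keeps the argument from being purely a renaming.
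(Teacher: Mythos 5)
Your proof is correct. The paper itself gives no proof of this lemma---it is stated as ``well known'' and attributed precisely to the isomorphism-preservation clause of the Abstract State Postulate---and your argument is the standard unfolding of the definitions using exactly that ingredient: $\zeta$ is simultaneously an isomorphism ${\bf S}_1 \to {\bf S}_2$ and $\tau_A({\bf S}_1) \to \tau_A({\bf S}_2)$, and injectivity of $\zeta$ transports non-triviality of updates in both directions. Your explicit remark that injectivity is what preserves the ``$\neq$'' in the definition of a non-trivial update is the one point worth making, and you made it.
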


As usual, we assume that a function symbol can be marked (by the vocabulary) as \emph{relational} and that every vocabulary includes the binary function symbol ``$=$'' for equality, nullary function symbols $\texttt{true}$, $\texttt{false}$ and $\texttt{undef}$, the unary function symbol \texttt{Boole}, and the symbols ``$\neg$'', ``$\vee$'', ``$\wedge$'' and ``$\rightarrow$'' corresponding to the usual Boolean operations. With the exception of $\texttt{undef}$, all these \emph{logic symbols} are relational. Furthermore, all of them are marked as static.

We identify a  nullary function with its value in the base set of the state. Thus, if ${\bf S}$ is a first-order structure, $\texttt{true}^{\bf S}$, $\texttt{false}^{\bf S}$ and $\texttt{undef}^{\bf S}$ are particular elements of $S$. 
We require that $\texttt{true}^{\bf S}$ be a distinct element than $\texttt{false}^{\bf S}$ and $\texttt{undef}^{\bf S}$. 

An $r$-ary relation $R^{\bf S} \subseteq S^r$ is \emph{represented} by a relational function $f_R^{\bf S}$ from $S^r$ to $\{\texttt{true}^{\bf S},\texttt{false}^{\bf S}\}$ such that, for every $\bar{a} \in S^r$, it holds that $\bar{a} \in R^{\bf S}$ iff $f_R^{\bf S}(\bar{a}) = \texttt{true}^{\bf S}$. If a relation $R^{\bf S}$ is unary, it can be viewed as a (sub)set of elements from the domain of ${\bf S}$. Thus, in any given structure $\bf S$, $\texttt{Boole}$ is interpreted as the set $\{\texttt{true}^{\bf S},\texttt{false}^{\bf S}\}$. The Boolean operations behave in the usual way on (the interpretation of) $\texttt{Boole}$ and produce $\texttt{false}$ if at least one of its arguments is not Boolean. 

We assume that all the functions are total and represent partial function as total by using $\texttt{undef}$. Thus the \emph{domain} of a non-relational $r$-ary function $f^{\bf S}$ of a structure $\bf S$ is the set $\{\bar{a} \in S^r \mid f^{\bf S}(\bar{a}) \neq \texttt{undef}^{\bf S}\}$. The \emph{range} of $f^{\bf S}$ is the set $\{f^{\bf S}(\bar{a}) \mid \bar{a} \in S^r \text{ and } f^{\bf S}(\bar{a}) \neq \texttt{undef}^{\bf S}\}$.

When it is clear from the context, we sometimes use function symbols to denote their interpretations, i.e., we omit the superscripts. We also favour infix notation for certain functions such as the (binary) Boolean operations and equality.

\section{Background of Computation}
\label{sec:backgroundOfComputation}

The background of computation that we need for our work is in essence the same as in Blass and Gurevich's formalization of parallel algorithms \cite{[BG03],[BG08]}. However, our formalization is different in two key aspects. First we consider states as meta-finite structures. This makes the intrinsic finiteness of the states of the algorithms explicit, allowing for a natural representation of the finite parts of the states. In particular, the technicality of assuming that every state includes a finite set of processes (i.e., proclets in \cite{[BG03],[BG08]}) is not longer needed. Secondly, we make an explicit distinction between pure states of the algorithms and states of the \emph{computation} of the algorithms. This latter notion of states includes, apart from the state of the algorithm, the standard background needed for its computation.

\subsection{Meta-Finiteness of States}

States of algorithms are intrinsically finite. However, as pointed out in \cite{[BG00]} there is more to a computation than what is just available in the states; this gives rise to the augmentation of states with background structures and to infinite representations of states. In fact, this is the norm in the ASM literature (see \cite{[BS03]}  among others). Take as an example input to an algorithm composed by a graph and a weight function from the edges of the graph to the natural numbers. There are many ways to represent a state of an algorithm that includes this input as a finite structure. One could for instance replace every edge $(u, v)$ of weight $w$ by $w$ distinct nodes, each connected to $u$ and $v$ but to no other nodes. While the resulting finite structure contains all the information about the original weighted graph, it is very impractical to perform arithmetic operations involving the encoded weights or to perform verification proofs of the relevant algorithm. 

On the other hand, if we represent states as arbitrary (possibly infinite) first-order structures, then the intrinsic finiteness of the states of an algorithm can only be captured implicitly, for instance by assuming a background with a variable-free term that evaluates to a finite set such as the term \texttt{Proclet} in the background postulate in \cite{[BG03]}. A more faithful representation would be to have an auxiliary infinite structure, which in our example could be the set of natural numbers with the usual arithmetic operations. This prompted us to consider states as meta-finite structures as defined in \cite{[GG98]}.

\begin{definition} 

\ A \emph{meta-finite structure} $\bf I$ is a triple $({\bf I}_1, {\bf I}_2, F)$ where

\begin{enumerate}[label=\roman{enumi}.]

\item ${\bf I}_1$ is a finite first-order structure -- the \emph{primary part} of $\bf I$;
\item ${\bf I}_2$ is a possibly infinite first-order structure -- the \emph{secondary part} of $\bf I$;
\item $F$ is a finite set of functions $f_i : (I_1)^k \longrightarrow I_2$ -- the \emph{bridge functions} of $\bf I$.

\end{enumerate}

The \emph{vocabulary} of ${\bf I}$ is the triple $\Sigma_{\bf I} = (\Sigma_1, \Sigma_2, \Sigma_F)$ where $\Sigma_1$, $\Sigma_2$ and $\Sigma_F$ are the (pairwise disjoint) sets of function symbols in ${\bf I}_1$, ${\bf I}_2$ and $F$, respectively. The \emph{base set} $I$ of ${\bf I}$ is $I_1 \cup I_2$.

\end{definition}

\begin{example}\label{Ex:StateOfAlgorithm}
Weighted graphs could be represented as meta-finite structures of vocabulary $\Sigma = (\emptyset, \Sigma_2, \Sigma_F)$, where $\Sigma_2$ includes all the (background) function symbols described in Section~\ref{Sec:Prelim} plus function symbols for the standard arithmetic operations, and  $\Sigma_F$ has two binary function symbols $f_E$ and $f_w$ with $f_E$ marked as relational. For instance, let $G$ be a digraph with vertex set $V = \{a, b, c\}$ and edge set $E = \{(a, b), (b, c), (c, a)\}$, and $w: E \longrightarrow \mathbb{N}$ be the weight function $\{(a, b) \mapsto 3, (b, c) \mapsto 5, (c, a)\mapsto 7\}$. Then $G$ can be represented by a meta-finite state ${\bf I} = ({\bf I}_1, {\bf I}_2, F)$ of vocabulary $\Sigma$ in which $I_1$ (the base set of ${\bf I}_1$) is $V$, $I_2$ (the base set of ${\bf I}_2$) is the set $\mathbb{N}$ of natural numbers, $f_E^{\bf I}(x, y) = \texttt{true}^{\bf I}$ iff $(x,y) \in E$, and $f_w^{\bf I}(x, y) = z$ if $w(x, y) = z$ or  $f_w^{\bf I}(x, y) = \texttt{undef}^{\bf I}$ otherwise
(recall that we assume that all functions are total).
\end{example}

\subsection{States of a Computation}  

During a computation, algorithms frequently need to deal with constructions that produce new elements (e.g., tuples or multisets) from old ones (the components of a tuple or the elements of a multiset). Such constructions can be iterated, producing tuples of multisets, multisets of tuples, and so forth. A general approach to this kind of constructions was developed in \cite{[BG00]} under the name of ``Background Classes''. These classes formalize the idea of things that can be built on top of a set without introducing any additional structure to the set itself. Formally, background classes are determined by background vocabularies that consist of constructor symbols and function symbols. Different to function symbols of fixed arity, constructor symbols can also be of bounded or even unfixed arity. 

\begin{definition}
\label{def:background}
A \emph{background class} $\mathcal{K}$ of vocabulary $\Sigma_{\cal K}$ associates with any set $U$ a \emph{background structure} $\mathcal{K}(U)$ constituted by  \begin{itemize}
\item the base set $\mathrm{Base}(\mathcal{K}(U)) = D$, where $D$ is the smallest set with $U \subseteq D$ that satisfies the following properties for each constructor symbol $\llcorner\lrcorner \in \Sigma_{\cal K}$:
\begin{itemize}
\item if $\llcorner\lrcorner \in \Sigma_{\mathcal{K}}$ has unfixed arity, then $\llcorner a_1, \ldots, a_m \lrcorner \in D$ for all $m \in \mathbb{N}$ and $a_1, \ldots, a_m \in D$.
\item if $\llcorner\lrcorner \in \Sigma_{\mathcal{K}}$ has bounded arity $n$, then $\llcorner a_1, \ldots, a_m\lrcorner \in D$ for all $m \leq n$ and $a_1, \ldots, a_m \in D$.
\item if $\llcorner \lrcorner \in \Sigma_{\mathcal{K}}$ has fixed arity $n$, then $\llcorner a_1, \ldots, a_n\lrcorner \in D$ for all $a_1, \ldots, a_n \in D$.
\end{itemize}
\item an interpretation of function symbols in $\Sigma_{\cal K}$ over $\mathrm{Base}(\mathcal{K}(U))$.
\end{itemize}
\end{definition}

Summing up, a \emph{state of an algorithm} $A$ can be thought of as a simple meta-finite structure $\bf I$. But, the initial \emph{state of computation} of $A$ is actually richer, as it should also include the background structure ${\cal K}(I)$ corresponding to the background class ${\cal K}$ of the algorithm $A$. 
 
\begin{definition}[State of Computation]\label{Def:CompState}
Let $A$ be an algorithm of \emph{vocabulary} $\Sigma_{\bf I} = (\Sigma_1, \Sigma_2, \Sigma_F)$ with background class $\cal K$ of vocabulary $\Sigma_{\cal K}$, where $\Sigma_{\cal K}$, $\Sigma_1$, $\Sigma_2$ and $\Sigma_F$ are pairwise disjoint. Let $\texttt{atomic}$ be a unary relation symbol which does not belong to $\Sigma_{\bf I} \cup \Sigma_{\mathcal{K}}$. Let ${\bf I} = ({\bf I}_1, {\bf I}_2, F)$ be a state of $A$, i.e., a meta-finite structure of vocabulary $\Sigma_{\bf I}$ that represents a valid state of $A$. Let ${\bf S} = ({\bf I}_1, {\bf B}, F)$ be the meta-finite structure of vocabulary $\Sigma = (\Sigma_1, \Sigma_2 \cup \Sigma_{\cal K} \cup \{\texttt{atomic}\}, \Sigma_F)$, where ${\bf B}$ is the first-order structure of vocabulary $\Sigma_2 \cup \{f_i \mid f_i \text{ is a function symbol in } \Sigma_{\cal K}\} \cup \{\texttt{atomic}\}$ which satisfies the following conditions: 
\begin{enumerate}[label=\roman{enumi}.]
\item The base set $B$ of ${\bf B}$ is $\mathrm{Base}(\mathcal{K}(I))$ (recall that $I$ denotes the base set of ${\bf I}$).
\item $\texttt{atomic}^{\bf B}(x) = \texttt{true}^{\bf B}$ iff $x \in I$ (recall that we assume in Section~\ref{Sec:Prelim} that every state includes the constants $\texttt{true}$, $\texttt{false}$ and $\texttt{undef}$). 
\item For every function symbol $f_i \in \Sigma_{\cal K}$ it holds that $f_i^{\bf B} = f_i^{{\cal K}(I)}$.
\item For every $f_i \in \Sigma_2$ of arity $r_i$ and $\bar{a} \in B^{r_i}$, it holds that
\[
f_i^{\bf B}(\bar{a}) =
\begin{cases}
f_i^{\bf I}(\bar{a}) & \text{if } \bar{a} \in ({I_2})^{r_i} \\
\texttt{false}^{\bf B} & \text{if } \bar{a} \not\in ({I_2})^{r_i} \text{ and } f \text{ is marked as relational} \\
\texttt{undef}^{\bf B} & \text{otherwise}\\
\end{cases}
\]
\end{enumerate}
We say that ${\bf S}_{\bf I}$ is a \emph{state of computation} of $A$ that corresponds to the state $\bf I$ if it is isomorphic to ${\bf S}$ by an isomorphism $\zeta$ such that $\zeta(x) = x$ for all $x \in I$. 
\end{definition}

\begin{example}
Let $\Sigma_{\cal K}$ be the background vocabulary formed by a constructor symbol for pairing plus unary function symbols \texttt{first} and \texttt{second} interpreted as the functions mapping pairs to their first and second element, respectively. If the argument is not a pair, then both functions map it to $\texttt{undef}$. A state of computation corresponding to the state ${\bf I}$ of the algorithm described in Example~\ref{Ex:StateOfAlgorithm} is represented by the meta-finite structure ${\bf S}_{\bf I} = ({\bf I}_1, {\bf B}, F)$, where the base set $B$ of ${\bf B}$ is $\{a,b,c\} \cup \mathbb{N} \cup \{(a_i, a_j) \mid a_i, a_j \in B\}$, $\{x \mid \texttt{atomic}^{\bf B}(x)\} = \{a,b,c\} \cup \mathbb{N}$, $\texttt{first}^{\bf B} = \texttt{first}^{{\cal K}(\{a, b, c \} \cup \mathbb{N})}$ and $\texttt{second}^{\bf B} = \texttt{second}^{{\cal K}(\{a, b, c \} \cup \mathbb{N})}$. The remaining functions in ${\bf B}$, i.e., the functions corresponding to the standard arithmetic operations and the logic symbols, coincide with their corresponding functions in ${\bf I}_2$ when all their arguments belong to $\mathbb{N}$ and take otherwise the value $\texttt{false}^{\bf B}$ or $\texttt{undef}^{\bf B}$, depending on whether the function is relational or not.    
\end{example}

Given a state of computation ${\bf S} = ({\bf I}, {\bf B}, F)$ of an algorithm $A$, we call the functions in ${\bf B}$ the \emph{background functions} of ${\bf S}$ and the functions in ${\bf I}$ or $F$ the \emph{foreground functions} of ${\bf S}$. Consequently, we call ${\bf B}$ the \emph{background of computation} of $A$. As expected, all background functions are assumed to be static with the only exception of the unary function $\texttt{reserve}$ defined next. 

\subsection{The Reserve}\label{Sec:theReserve}

It is justified to assume that the base set of a state does not change during a computation. To realize this, it is convenient to include in every state an infinite supply of \emph{reserve} elements that can be imported by an algorithm, when new elements are needed. The reserve is a ``naked set'', i.e., an entirely unstructured part of the base set. 

\begin{definition}\label{Def:theReserve}
Let $A$ be an algorithm of vocabulary $\Sigma_{\bf I} = (\Sigma_1, \Sigma_2, \Sigma_F)$. We assume that $\Sigma_2$ includes a unary function symbol $\texttt{reserve}$ which is marked as relational. We also assume that the interpretation of $\texttt{reserve}$ in every state ${\bf I} = ({\bf I}_1, {\bf I}_2, F)$ of $A$ satisfies the following conditions:
\begin{enumerate}[label=\roman{enumi}.]
\item $\texttt{reserve}^{{\bf I}_2}(x) = \texttt{true}^{{\bf I}_2}$ iff $x \not\in I_1$ and, for every $f_i \in \Sigma_2$ of arity $r_i$ and tuple $\bar{a}_i \in (I_2)^{r_i}$ which includes $x$, it holds that $f_i^{{\bf I}_2}(\bar{a}_i)$ evaluates to $\texttt{false}^{{\bf I}_2}$ if $f_i$ is marked as relational or to $\texttt{undef}^{{\bf I}_2}$ otherwise. 
\item $R = \{a_i \in  I_2 \mid \texttt{reserve}^{{\bf I}_2}(a_i) = \texttt{true}^{{\bf I}_2}\}$ has countably many elements.
\end{enumerate}
We call $R$, the \emph{reserve} of the state ${\bf I}$ of $A$. We also call $R$, the reserve of its corresponding state of computation as per Definition~\ref{Def:CompState}.    
\end{definition}

Reserve elements can be imported when, for example, an algorithm needs to add a new vertex to a graph. To import a new element from the reserve basically involves to take it from the reserve and to add it to the primary (finite) part of the state.   

\begin{definition} 
Let $A$, ${\bf I}$ and $R$ be as in Definition~\ref{Def:theReserve}. An element $a_i \in R$ is \emph{imported} from the reserve by simply adding it to the base set $I_1$ of the primary part of ${\bf I}$ and updating the value of $\texttt{reserve}^{{\bf I}_2}(a_i)$ to $\texttt{false}^{{\bf I}_2}$.  
\end{definition}

\begin{remark}
Note that base set of the primary (finite) part of the state of an algorithm can only grow monotonically during a computation. That is, we can import new elements from the reserve into the base set of the primary part of a state, but we cannot discard elements from it. The base set of the secondary part (i.e., the background of computation) as well as the base set of the whole state, remains the same. 
\end{remark}

\subsection{Minimal Background Requirement}
\label{sec:minBackground}

In order to simulate a given parallel algorithm, we need a minimal background, which essentially should contain ordered pairs and multisets. 
Multiplicities arise naturally in parallel algorithms, mainly from the repetition of tasks in the different parallel branches. For instance, if several branches request to increment the same counter in parallel, then during the synchronization phase the algorithm should take into account the multiplicity of requests concerning the same counter in order to increment it by the correct amount.  Multisets are an appropriate tool to collect these multiplicities. In turn, multiset operators are useful for synchronization. In the example of the counter, the synchronization phase would consist in applying a static function ``sum'' to a multiset to obtain the sum of its members including multiplicities.

Formally, a \emph{multiset} $M$ can be seen as a function from the underlying set $M_0$ of elements in $M$ (the domain of $M$) to the positive integers, such that $M(x)$ is the multiplicity of $x$ as an element of $M$. We use $\text{Mult}(x, M)$ to denote the multiplicity $M(x)$ of an element $x$ in a multiset $M$. If $x \not\in M$ then $\text{Mult}(x, M) = 0$. We use double braces $\multiset{\ldots}$ as notation for multisets. We define \emph{binary multiset union} $M_1 \uplus M_2$ of two multisets $M_1$ and $M_2$ by $\text{Mult}(x, M_1 \uplus M_2) = \text{Mult}(x, M_1) + \text{Mult}(x, M_2)$. Consequently, we define the \emph{generalized multiset union} of a multiset of multisets ${\cal M}$, i.e. a multiset ${\cal M}$ whose underlying domain ${\cal M}_0$ is a set which contains only multisets, by $\text{Mult}(x, \biguplus {\cal M}) = \sum_{M_i \in {\cal M}_0} \text{Mult}(x, M_i) \cdot \text{Mult}(M_i, {\cal M})$.

Our next postulate defines the minimum background. This minimum background is almost the same as the one used in Blass and Gurevich's work \cite{[BG03],[BG08]}, except that we no longer need a variable-free term \texttt{Proclet} naming a finite set of computation branches. In our work, this finiteness is instead captured by the use of meta-finite structures to represent the states of an algorithm.

\begin{postulate}[Background Postulate]
\label{backgroundPostulate}\rm

\ Let $A$ be an algorithm of \emph{vocabulary} $\Sigma = (\Sigma_1, \Sigma_2, \Sigma_F)$ with background class $\cal K$.
The vocabulary  $\Sigma_{\cal K}$  of ${\cal K}$ includes (at least) a binary \emph{tuple constructor} and a \emph{multiset constructor} of unbounded arity; and the vocabulary $\Sigma_{\bf B}$ of the background of the computation states of $A$ includes (at least) the following \emph{obligatory} function symbols:
\begin{itemize}
\item Nullary function (constants) symbols $\texttt{true}$, $\texttt{false}$, $\texttt{undef}$ and $\oslash$.
\item Unary function symbols $\texttt{reserve}$, $\texttt{atomic}$, $\texttt{first}$, $\texttt{second}$, $\texttt{Boole}$, $\neg$, $\multiset{\cdot}$, $\biguplus$ and  $\texttt{AsSet}$.
\item Binary function symbols $=$, $\wedge$, $\vee$, $\rightarrow$, $\leftrightarrow$, $\uplus$ and $(\,,)$.
\end{itemize}
All function symbols in $\Sigma_{\bf B}$, with the sole exception of $\texttt{reserve}$, are static.
  Let ${\bf I} = ({\bf I}_1, {\bf I}_2, F)$ be a state of $A$. The interpretation of the obligatory function symbols in the vocabulary $\Sigma_{\bf B}$ of the secondary part (background) of every state of computation ${\bf S}_{\bf I} = ({\bf I}_1, {\bf B}, F)$ corresponding to ${\bf I}$ (see Definition~\ref{Def:CompState}), satisfies the following conditions:
\begin{itemize}
\item $\texttt{atomic}^{\bf B}(x) = \texttt{true}^{\bf B}$ iff $x \in I$.
\item $\texttt{reserve}^{\bf B}(x)= \texttt{true}^{\bf B}$ iff $x \in \texttt{reserve}^{{\bf I}}(x)$.
\item $\texttt{true}^{\bf B} = \texttt{true}^{\bf I}$, $\texttt{false}^{\bf B} = \texttt{false}^{\bf I}$ and $\texttt{undef}^{\bf B} = \texttt{undef}^{\bf I}$. Further, $\texttt{true}^{\bf B}$ is a distinct element than $\texttt{false}^{\bf B}$ and $\texttt{undef}^{\bf B}$.
\item $\texttt{Boole}^{\bf B}(x) = \texttt{true}^{\bf B}$ iff $x = \texttt{true}^{\bf B}$ or $x = \texttt{false}^{\bf B}$.
\item $=^{\bf B}$ is the identity relation in $B$.
\item $\neg^{\bf B}$, $\wedge^{\bf B}$, $\vee^{\bf B}$, $\rightarrow^{\bf B}$ and $\leftrightarrow^{\bf B}$ behave in the usual way in $\{\texttt{true}^{\bf B}, \texttt{false}^{\bf B}\}$ and produce $\texttt{false}^{\bf B}$ if at least one of its arguments is not Boolean.
\item $(x,y)^{\bf B}$ evaluates to the ordered pair with fist element $x$ and second $y$.
\item $\texttt{first}^{\bf B}(x)$ evaluates to the first element of $x$ and $\texttt{second}^{\bf B}(x)$ evaluates to the second element of $x$ if $x$ is an ordered pair, or to $\texttt{undef}^{\bf B}$ otherwise.
\item $\oslash^{\bf B}$ is the empty multiset.
\item $\multiset{x}^{\bf B}$ evaluates to the singleton multiset whose only element is $x$.
\item $\texttt{AsSet}^{\bf B}(x)$ evaluates to the multiset obtained from $x$ by setting to $1$ the multiplicity of every element in $x$. If $x$ is not a multiset, then $\texttt{AsSet}^{\bf B}(x)$ evaluates to $\texttt{undef}^{\bf B}$.
\item $x \uplus^{\bf B} y$ evaluates to the binary multiset union of $x$ and $y$. If $x$ or $y$ is not a multiset, then $x \uplus^{\bf B} y$ evaluates to $\texttt{undef}^{\bf B}$.
\item $\biguplus^{\bf B} x$ evaluates to the generalized multiset union of the multisets in $x$ if $x$ is a multiset whose elements are all multisets. Otherwise it evaluates to $\texttt{undef}^{\bf B}$.
\end{itemize}
\end{postulate}

\section{Bounded Exploration for Parallel Algorithms}
\label{sec:boundedExploration}

We now introduce our fourth and last postulate, namely the Bounded Exploration postulate for parallel algorithms. This is the key postulate in our work. It replaces the Proclet Algorithm, Bounded Sequentiality and Update postulates in the parallel ASM thesis of Blass and Gurevich \cite{[BG03],[BG08]}. 

First we define the formal syntax and semantics of the terms that we require to state our postulate. These terms coincide with the terms used by the  ASM model for parallel computation considered in this paper. They are built up from variables, functions and a multiset comprehension expression which is similar to the multiset comprehension expression used in the definition of the ASM model in the parallel thesis of Blass and Gurevich.

We follow the standard approach in \emph{meta-finite} model theory \cite{[GG98]} assuming that that variables range over the domain of the primary part only and considering two different types of basic terms: \emph{point terms} which define functions over the primary part of a meta-finite state, and \emph{bridge terms} which define functions that take arguments in the primary part of a meta-finite state and values in the secondary part. The actual set of terms and its semantics are included in the following definition. If the outermost function symbol of a term $\varphi$ is relational, we call it a \emph{Boolean-valued term}.

\begin{definition}
\label{Def:terms}
Let ${\bf S} = ({\bf I}, {\bf B}, F)$ be a (meta-finite) \emph{state of computation} of an algorithm $A$ with background class ${\cal K}$, which includes a multiset constructor (see Definition~\ref{Def:CompState}). Let $\Sigma = (\Sigma_{\bf I}, \Sigma_{\bf B}, \Sigma_F)$ and $\Sigma_{\cal K}$ be the vocabularies of ${\bf S}$ and ${\cal K}$, respectively. Let $V = \{x_0, x_1,  \ldots\}$ be a countable set of variables. The set of terms ${\cal T}_{\Sigma, V}$ over $\Sigma$ and $V$ is defined inductively as follows:
\begin{itemize}
\item The set of \emph{point terms} is the closure of the set $V$ of variables under the application of function symbols in $\Sigma_{\bf I}$. Every point term belongs to ${\cal T}_{\Sigma, V}$.
\item If $t_1, \ldots t_r$ are point terms in ${\cal T}_{\Sigma, V}$ and $f$ is an $r$-ary function symbol in $\Sigma_F$, then $f(t_1, \ldots, t_r)$ is a \emph{bridge term} in ${\cal T}_{\Sigma, V}$.
\item If $F_1, \ldots, F_r$ are bridge terms in ${\cal T}_{\Sigma, V}$ and $f$ is an $r$-ary function symbol in $\Sigma_{\bf B}$, then $f(F_1, \ldots, F_r)$ is a \emph{bridge term} in ${\cal T}_{\Sigma, V}$.
\item Let $\bar{x}$ and $\bar{y}$ be tuples of variables. If $t(\bar{x},\bar{y})$ is a term in ${\cal T}_{\Sigma, V}$ and $\varphi(\bar{x},\bar{y})$ is a Boolean valued term also in ${\cal T}_{\Sigma, V}$, then $\multiset{t(\bar{x}, \bar{y}) \mid \varphi(\bar{x}, \bar{y})}_{\bar{x}}$ is a \emph{multiset comprehension term} in ${\cal T}_{\Sigma, V}$ with free variables $\bar{y}$.
\end{itemize}
Let $\mu$ be a \emph{variable assignment} for $V$ over the primary part ${\bf I}$ of ${\bf S}$, i.e., a function which assigns to each variable $x_i$ in $V$ a value $\mu(x_i) \in I$. Let $t$ be a term in $\mathcal{T}_{\Sigma,V}$. The value $\text{val}_{{\bf S},\mu}(t)$ of $t$ in ${\bf S}$ under $\mu$ is inductively defined as follows:
\begin{itemize}
\item If $t$ is a nullary function symbol in $\Sigma$, then $\text{val}_{{\bf S},\mu}(t) = t^{\bf S}$.
\item If $t$ is a variable in $V$, then $\text{val}_{{\bf S},\mu}(t) = \mu(t)$.
\item If $t$ is a function symbol in $\Sigma$ of arity $r \geq 1$ and $t_1, \ldots, t_r$ are terms in $\mathcal{T}_{\Sigma,X}$, then $\text{val}_{{\bf S},\mu}(t(t_1,\ldots,t_r)) = t^{\bf S}(\text{val}_{{\bf S},\mu}(t_1), \ldots, \text{val}_{{\bf S},\mu}(t_r))$.
\item Let $\bar{x} = (x_1, \ldots, x_n)$ and $\bar{y} = (y_1, \ldots, y_m)$ for some $n$ and $m \geq 0$. If $t(\bar{y})$ is a multiset comprehension term of the form $\multiset{s(\bar{x}, \bar{y}) \mid \varphi(\bar{x}, \bar{y})}_{\bar{y}}$, then
$\text{val}_{{\bf S},\mu[\bar{y} \mapsto \bar{b}]}(t(\bar{x}, \bar{y}))$ is the multiset\\[0.2cm]
$\{\!\!\{val_{{\bf S},\mu[\bar{x} \mapsto \bar{a}, \bar{y} \mapsto \bar{b}]}(s(\bar{x},\bar{y})) \mid val_{{\bf S},\mu[\bar{x} \mapsto \bar{a}, \bar{y} \mapsto \bar{b}]}(\varphi(\bar{x},\bar{y})) = \texttt{true}^{\bf S} \text{ and } \bar{a} \in I^n\}\!\!\}$
\end{itemize}
\end{definition}

If $t$ is a ground term, we simply use $\text{val}_{\bf S}(t)$ to denote its value in ${\bf S}$.  Given a Boolean valued term $\varphi(x_1, \ldots, x_r)$ with free variables among $\{x_1, \ldots, x_r\}$, we frequently use ${\bf S} \models \varphi(x_1, \ldots, x_r)[a_1, \ldots, a_r]$ to denote that
\[\text{val}_{{\bf S},\mu[x_1 \mapsto a_1, \ldots, x_r \mapsto a_r]}(\varphi(x_1, \ldots, x_r)) = \texttt{true}^{\bf S}.\]
We use $\exists x_1 \ldots x_r ( \varphi(x_1,$ $\ldots, x_r) )$ and $\forall x_1 \ldots x_r ( \varphi(x_1, \ldots, x_r) )$ to denote that\\[0.2cm]
$\multiset{(x_1, \ldots , x_r) \mid \varphi(x_1, \ldots, x_r)} \neq \oslash$ and\\[0.2cm] 
$\multiset{(x_1, \ldots , x_r) \mid \neg(\varphi(x_1, \ldots, x_r))} = \oslash$, respectively.\\[0.2cm]
We refer to the class of multiset comprehension terms of the form \[\multiset{t(x_1, \ldots, x_r) \mid \varphi(x_1, \ldots, x_r)}\] which have \emph{no} free-variables and where $t$ is an ordered pair that represents (using some fixed encoding) a tuple $(t_0, \ldots, t_n)$ of terms with $\textit{free}(t_0) \cup \cdots \cup \textit{free}(t_n) = \{x_1, \ldots, x_r\}$, as \emph{witness terms}. Consequently, we denote them as $\multiset{(t_1, \ldots, t_n) \mid \varphi(x_1, \ldots, x_r)}$. 

\begin{definition}
Two states of computation ${\bf S}_1$ and ${\bf S}_2$ of a same vocabulary $\Sigma$ \emph{coincide} over a set $W$ of witness terms if for every $\alpha_i \in W$,  we have that $\text{val}_{{\bf S}_1}(\alpha_i) = \text{val}_{{\bf S}_2}(\alpha_i)$.
\end{definition}

\begin{postulate} [Bounded Exploration Postulate]
\label{boundedexpl}\rm
Let $A$ be a parallel algorithm. Then there is a finite set $W$ of witness terms, called \emph{bounded exploration witness} of $A$, such that for every pair of states ${\bf I}_1$ and ${\bf I}_2$ of $A$, it holds that $\tau_A({\bf I}_1) - {\bf I}_1 = \tau_A({\bf I}_2) - {\bf I}_2$ whenever there are two computation states ${\bf S}_1$ and ${\bf S}_2$ corresponding to ${\bf I}_1$ and ${\bf I}_2$, respectively, that coincide over $W$.
\end{postulate}

Same as in the bounded exploration postulate for sequential algorithms, the intuition is that the algorithm $A$ examines only the part of the state that is given by means of terms in $W$. The central difference resides in the fact that in the case of parallel algorithms, the terms in $W$ are multiset comprehension terms instead of ground terms. In practice, this means that it is also determined by the state, not just by the algorithm, which part of the state is actually relevant for producing the update set.

\begin{example}
\label{Ex:WitnessGraphComplement}
Let us take the parallel algorithm in Example~\ref{Ex:GraphComplement} which calculates the complement of a given undirected graph. The (parallel) bounded exploration witness\\[0.2cm]
$\{\multiset{(\neg E(x,y), x,y) \mid x \neq y \wedge V(x) \wedge V(y)},$\\[0.2cm]
$\multiset{(\neg E(x,y), x,y) \mid x \neq y \wedge \neg(V(x) \wedge V(y))},$\\[0.2cm]
$\multiset{x\neq y \mid V(x) \wedge V(y)}, \; \multiset{x\neq y \mid \neg (V(x) \wedge V(y))}, \multiset{\texttt{true} \mid \texttt{true}}\}$ \\[0.2cm]
shows that it satisfies the (parallel) bounded exploration postulate.
\end{example}

We can now formalize the concept of synchronous, parallel algorithm.

\begin{definition}
\label{def:parallelAlgo}
A \emph{(synchronous) parallel algorithm} satisfies the Sequential Time, Abstract State, Background and (Parallel) Bounded Exploration postulates.
\end{definition}

\section{Plausibility Theorem}
\label{sec:asmdef}

In this section we formally define {\em parallel} ASMs and show that every parallel ASM defines a parallel algorithm in the sense of Definition~\ref{def:parallelAlgo}.  That is, we show that every parallel ASM satisfies the Sequential Time, Abstract State, Background and (Parallel) Bounded Exploration postulates as stated in this work. Our formal ASM model is similar to the ASM model in Blass and Gurevich's thesis \cite{[BG03],[BG08]} modulo some technical details which are due to the fact that we use states of computation which are represented as meta-finite structures instead of assuming that the states include a finite set of proclets.

\begin{definition}[ASM Rules]
\label{ASMrules}
Let ${\bf I}$ be a meta-finite structure of vocabulary $\Sigma$ and let ${\bf S} = ({\bf I}, {\bf B}, F)$ of vocabulary $\Sigma_{\bf S} = (\Sigma_{{\bf I}}, \Sigma_{{\bf B}}, \Sigma_F)$ be its corresponding state of computation. Assume that ${\bf S}$ satisfies the Background postulate, let $\mu$ be a variable assignment over the primary part ${\bf I}$ of ${\bf S}$ and let $\Delta_{\mu}(r,{\bf S})$ denote the update set produced by an ASM rule $r$ in ${\bf S}$ under $\mu$. The set $\mathcal{R}$ of ASM rules over $\Sigma$ and the interpretation as update set of every rule in $\mathcal{R}$, is defined inductively by:

\begin{itemize}
\item If $f \in \Sigma$ is an $n$-ary \emph{dynamic} function symbol and $t_0, t_1, \ldots, t_n$ are terms of vocabulary $\Sigma_{\bf S}$, then $f(t_1,\ldots,t_n) := t_0$ is an \emph{assignment rule} in $\mathcal{R}$, which produces the update set \[\{((f,(\mathrm{val}_{{\bf S}, \mu}(t_1),\ldots,\mathrm{val}_{{\bf S}, \mu}(t_n))), \mathrm{val}_{{\bf S}, \mu}(t_0))\},\] provided that the following conditions hold:  

\begin{itemize}
\item  if $f \in \Sigma_{\bf I}$, then $\mathrm{val}_{{\bf S}, \mu}(t_i) \in I$ for every $0 \leq i \leq n$;
\item  if $f \in \Sigma_{\bf B}$, then $\texttt{atomic}^{\bf B}(\mathrm{val}_{{\bf S}, \mu}(t_i)) = \texttt{true}^{\bf B}$ for every $0 \leq i \leq n$;
\item  if $f \in \Sigma_F$, then $\texttt{atomic}^{\bf B}(\mathrm{val}_{{\bf S}, \mu}(t_0)) = \texttt{true}^{\bf B}$ and $\mathrm{val}_{{\bf S}, \mu}(t_i) \in I$ for every $1 \leq i \leq n$;
\end{itemize}

Otherwise, the update set is undefined.

\item If $r_1, \ldots, r_n$ are rules in $\mathcal{R}$, then $\textbf{par}\; r_1 \ldots r_n\; \textbf{endpar}$ is a \emph{block rule} in $\mathcal{R}$, which produces the update set $\Delta_{\mu}(r_1,{\bf S}) \cup \ldots \cup \Delta_{\mu}(r_n,{\bf S})$, provided all the update sets $\Delta_{\mu}(r_i,{\bf S})$ are defined.

\item If $\varphi$ is a term of vocabulary $\Sigma_{\bf S}$ and $r$ is a rule in $\mathcal{R}$, then $\textbf{if}\; \varphi\; \textbf{then}\; r$ $\textbf{endif}$ is a \emph{conditional rule} in $\mathcal{R}$, which produces the update set $\Delta_{\mu}(r,{\bf S})$ (if defined) if $\mathrm{val}_{{\bf S},\mu}(\varphi) = \texttt{true}^{\bf B}$ and the update set $\emptyset$ if $\mathrm{val}_{{\bf S},\mu}(\varphi) \neq \texttt{true}^{\bf B}$.

\item If $\varphi$ is a term of vocabulary $\Sigma_{\bf S}$ with $\textit{free}(\varphi) \supseteq \{x_1, \ldots, x_k\}$ and $r$ is a rule in $\mathcal{R}$, then $\textbf{forall}\; x_1,\ldots,x_k \; \textbf{with} \; \varphi \; \textbf{do} \; r \; \textbf{enddo}$ is a \emph{forall rule} in $\mathcal{R}$ which produces the update set $\bigcup_{(a_1, \ldots, a_k) \in A}(\Delta_{\mu[x_1 \mapsto a_1,\ldots,x_k \mapsto a_k]}(r,{\bf S}))$ where $A = \{(a_1, \ldots, a_k) \in I^k \mid \mathrm{val}_{{\bf S},\mu[x_1 \mapsto a_1,\ldots,x_k \mapsto a_k]}(\varphi) = \texttt{true}^{\bf B}\}$, provided all update sets $\Delta_{\mu[x_1 \mapsto a_1,\ldots,x_k \mapsto a_k]}(r,{\bf S})$ are defined.

\end{itemize}

\end{definition}

The scope of $x_i$ ($1 \leq i \leq k)$ in $\textbf{forall}\; x_1,\ldots,x_k \; \textbf{with} \; \varphi \; \textbf{do} \; r \; \textbf{enddo}$ is $\varphi$ and $r$. An occurrence of a variable $x$ in a transition rule $r$ is \emph{bound} if it is in the scope of a forall rule or if it is a non free variable in a multiset comprehension term. Otherwise, $x$ is \emph{free} in $r$. A rule $r$ is \emph{closed} if it has \emph{no} free variables.

\begin{definition}[Parallel ASM]
\label{ASM}
A \emph{parallel abstract state machine} $\mathcal{M}$ of vocabulary $\Sigma = (\Sigma_1, \Sigma_2,$ $\Sigma_{F})$ and background class $\mathcal{K}$ of vocabulary $\Sigma_{\cal K}$ is formed by:

\begin{itemize}

	\item A set $\mathcal{S}_{\mathcal{M}}$ of \emph{states} of vocabulary $\Sigma$ and a set $\mathcal{I}_{\mathcal{M}} \subseteq \mathcal{S}_{\mathcal{M}}$ of \emph{initial states}, both closed under isomorphisms.

	\item A closed ASM rule $r_{\mathcal{M}}$ --the \emph{main rule} of $\cal M$-- of vocabulary $(\Sigma_1, \Sigma_2 \cup \Sigma_{\cal K} \cup \{\texttt{atomic}\}, \Sigma_F)$.

	\item A transition function $\tau_{\mathcal{M}}$ over $\mathcal{S}_{\mathcal{M}}$ such that $\tau_{\mathcal{M}}({\bf I}) = {\bf I} + \Delta(r_{\mathcal{M}},{\bf S})$ for every ${\bf I} \in \mathcal{S}_{\mathcal{M}}$ and every state of computation ${\bf S}$ that corresponds to ${\bf I}$.

\end{itemize}

\end{definition}

A \emph{run} or a \emph{computation} of  an ASM ${\cal M}$ is a finite or infinite sequence ${\bf I}_0, {\bf I}_1, \ldots$, where ${\bf I}_0$ is an initial state in $\mathcal{S}_{\cal M}$ and ${\bf I}_{i+1} = \tau_{\cal M}({\bf I}_i)$ holds for every $i \geq 1$.

Next, we define for every parallel ASM $\mathcal{M}$, a finite set $W_{\cal M}$ of witness terms which, as shown in Theorem~\ref{Th:Plausibility} below, is a bounded exploration witness for $\cal M$.

\begin{definition}
\label{def:witnessSet}

Let $\mathcal{M}$ be a parallel ASM.  For every sub-rule $r$ of the main rule $r_{\mathcal{M}}$ of $\mathcal{M}$, let $W_r$ be the set of multiset comprehension terms inductively defined as follows:

\begin{itemize}

\item If $r$ is of the form $f(t_1,\dots, t_n):=t_0$ and $\bigcup_{0 \le i \le n} \textit{free}(t_i) = \{x_1,\ldots,x_k\}$, then $W_r = \{\multiset{(t_0,t_1, \dots, t_n) \mid \texttt{true}}_{x_1,\ldots,x_k}\}$.

\item If $r$ is of the form $\textbf{par} \; r_1 \; \ldots \; r_n \; \textbf{endpar}$, then $W_r = \bigcup_{1\leq i \leq n} W_{r_i}$.

\item If $r$ is of the form $\textbf{if} \; \varphi \; \textbf{then} \;  r' \; \textbf{endif}$ and $\textit{free}(r^\prime) \cup \textit{free}(\varphi) = \{x_1, \ldots, x_k\}$, then 
\\[0.2cm]
$W_r = \{\multiset{\varphi \mid \texttt{true}}_{\textit{free}(\varphi)}\} \; \cup \; \{\multiset{\texttt{true} \mid \texttt{true}}\} \; \cup$\\[0.2cm]
\hspace*{0.8cm} $\{\multiset{(t_{i0}, \ldots, t_{in_i}) \mid \varphi_i \wedge \varphi}_{x_1,\ldots,x_k} \mid \multiset{(t_{i0}, \ldots, t_{in_i}) \mid \varphi_i}_{\mathit{free}(r')} \in W_{r'}\}$.

\item 

If $r$ is of the form $\mathbf{forall} \; x_1,\ldots,x_k \; \mathbf{with} \; \varphi \; \mathbf{do} \; r' \textbf{enddo}$ and

$(\textit{free}(r^\prime) \cup \textit{free}(\varphi))  \setminus \{ x_1 ,\dots, x_k \} = \{y_1,\ldots,y_l\}$, 

then
\\[0.2cm]
$W_r = \{\multiset{(t_{i0}, \ldots, t_{in_i}) \mid \varphi_i \wedge \varphi)}_{y_1,\ldots,y_l} \mid$\\[0.2cm] 
\hspace*{6cm} $\multiset{(t_{i0}, \ldots, t_{in_i}) \mid \varphi_i}_{\mathit{free}(r')} \in W_{r'}\} \;\cup$\\[0.2cm]
\hspace*{1.1cm}$\{\multiset{(t_{i0}, \ldots, t_{in_i}) \mid \varphi_i \wedge \neg \varphi)}_{y_1,\ldots,y_l} \mid$\\[0.2cm] 
\hspace*{6cm} $\multiset{(t_{i0}, \ldots, t_{in_i}) \mid \varphi_i}_{\mathit{free}(r')} \in W_{r'}\}$.\\[0.2cm]
\end{itemize}
It is easy to see that if $r$ is a closed rule, then the multiset comprehension terms in $W_r$ have no free variables. Thus, we can define $W_{\cal A}$ as the set of witness terms (i.e., multiset comprehension  terms without free variables) $W_{r_{\mathcal{M}}}$ corresponding to the main (closed) rule $r_{\cal M}$ of $\cal M$.
\end{definition}

\begin{example}
 
The main ASM rule in Example~\ref{Ex:GraphComplement} has the following sub-rules:
\begin{align*}
r_1 =&\; E(x,y) := \neg E(x,y) \\ r_2 =&\; \textbf{if} \; x \neq y \; \textbf{then}\; r_1 \; \textbf{endif}\\
r_3 =&\; \textbf{forall} \; x,y \; \textbf{with} \; V(x) \wedge V(y) \; \textbf{do} \; r_2 \; \textbf{enddo}
\end{align*}
 
Their corresponding sets of witness terms as determined by the inductive construction in Definition~\ref{def:witnessSet} are:
 
\begin{align*}
W_{r_1} =&\;\{\multiset{(\neg E(x,y), x,y) \mid \texttt{true}}_{x,y}\} \\
W_{r_2} =&\;\{\multiset{(\neg E(x,y), x,y) \mid \texttt{true} \wedge  x \neq y}_{x,y},\multiset{x\neq y \mid \texttt{true}}_{x,y}, \\
&\; \multiset{\texttt{true} \mid \texttt{true}}\}\\
W_{r_3} =&\;\{\multiset{(\neg E(x,y), x,y) \mid \texttt{true} \wedge  x \neq y \wedge V(x) \wedge V(y)},\\
&\;\multiset{x\neq y \mid \texttt{true} \wedge V(x) \wedge V(y)},\multiset{\texttt{true} \mid \texttt{true} \wedge V(x) \wedge V(y)},\\
&\;\multiset{(\neg E(x,y), x,y) \mid \texttt{true} \wedge  x \neq y \wedge \neg (V(x) \wedge V(y))},\\
&\;\multiset{x\neq y \mid \texttt{true} \wedge \neg (V(x) \wedge V(y))},\multiset{\texttt{true} \mid \texttt{true} \wedge \neg (V(x) \wedge V(y))}\}.
\end{align*}
 
\end{example}

Of course, $W_{r_3}$ can be reduced (see Example~\ref{Ex:WitnessGraphComplement}). The objective here is however to illustrate the construction that we actually use in the proof of our next theorem, rather than to construct a minimal bounded exploration witness.

\begin{theorem}[Plausibility]\label{Th:Plausibility}
Every parallel ASM $\cal M$ defines a parallel algorithm with the same vocabulary and background as $\cal M$.
\end{theorem}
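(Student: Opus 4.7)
The plan is to verify each of the four postulates for an arbitrary parallel ASM $\mathcal{M}$. The Sequential Time Postulate is immediate from Definition~\ref{ASM}, which supplies $\mathcal{S}_{\mathcal{M}}$, $\mathcal{I}_{\mathcal{M}}$, and $\tau_{\mathcal{M}}$. The Background Postulate reduces to a sanity check on the background class $\mathcal{K}$ of $\mathcal{M}$: by the setup of Definition~\ref{Def:CompState}, its vocabulary must include the obligatory constructor and function symbols of Postulate~\ref{backgroundPostulate} with the prescribed interpretations. For the Abstract State Postulate, closure of $\mathcal{S}_{\mathcal{M}}$ and $\mathcal{I}_{\mathcal{M}}$ under isomorphism is stipulated in Definition~\ref{ASM}, and $\tau_{\mathcal{M}}$ preserves the base set because every update only changes the interpretation of a location and reserve imports merely relocate elements inside the fixed base set $I = I_1 \cup I_2$. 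The only non-routine piece is preservation of isomorphisms by $\tau_{\mathcal{M}}$, which I would establish by structural induction on sub-rules $r$ of $r_{\mathcal{M}}$, showing that $\zeta(\Delta_\mu(r, \mathbf{S}_1)) = \Delta_{\zeta \circ \mu}(r, \mathbf{S}_2)$ for every isomorphism $\zeta \colon \mathbf{S}_1 \to \mathbf{S}_2$ and every variable assignment $\mu$; the workhorse is the term-preservation lemma $\zeta(\text{val}_{\mathbf{S}_1, \mu}(t)) = \text{val}_{\mathbf{S}_2, \zeta \circ \mu}(t)$ for all $t \in \mathcal{T}_{\Sigma, V}$, proved by a parallel induction on terms, where the multiset comprehension case uses that $\zeta$ is a bijection on the base set.

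The main obstacle is the Parallel Bounded Exploration Postulate, for which I would show that the set $W_{\mathcal{M}}$ of Definition~\ref{def:witnessSet} is a bounded exploration witness. The strategy is a structural induction on sub-rules $r$ of $r_{\mathcal{M}}$ establishing the strengthened claim: for every variable assignment $\mu$ and every pair of computation states $\mathbf{S}_1, \mathbf{S}_2$ with a common base set such that $\text{val}_{\mathbf{S}_1, \mu}(\alpha) = \text{val}_{\mathbf{S}_2, \mu}(\alpha)$ for every $\alpha \in W_r$, we have $\Delta_\mu(r, \mathbf{S}_1) = \Delta_\mu(r, \mathbf{S}_2)$. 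Specialising this to the closed main rule $r_{\mathcal{M}}$, whose witnesses are ground and thus independent of $\mu$, yields Postulate~\ref{boundedexpl} directly. For the assignment base case $f(t_1, \ldots, t_n) := t_0$ the multiset $\multiset{(t_0, t_1, \ldots, t_n) \mid \texttt{true}}$ contains the tuple $(\text{val}_{\mathbf{S}_j, \mu}(t_0), \ldots, \text{val}_{\mathbf{S}_j, \mu}(t_n))$, so its coincidence forces the two produced updates to be identical. The block case is a direct union decomposition over the $W_{r_i}$, and the conditional case reads off from $\multiset{\varphi \mid \texttt{true}}$ whether $\varphi$ holds and then invokes the inductive hypothesis for $r'$ via the augmented witnesses $\multiset{(t_{i0}, \ldots, t_{in_i}) \mid \varphi_i \wedge \varphi}$.

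The heart of the argument is the forall rule $\mathbf{forall}\ x_1, \ldots, x_k\ \mathbf{with}\ \varphi\ \mathbf{do}\ r'\ \mathbf{enddo}$, whose update set decomposes as the union, over tuples $(a_1, \ldots, a_k) \in I^k$ satisfying $\varphi$ under $\mu[\bar{x} \mapsto \bar{a}]$, of the sub-update-sets $\Delta_{\mu[\bar{x} \mapsto \bar{a}]}(r', \mathbf{S}_j)$. The construction of $W_r$ ensures that coincidence of the augmented witnesses $\multiset{(t_{i0}, \ldots, t_{in_i}) \mid \varphi_i \wedge \varphi}$ and $\multiset{(t_{i0}, \ldots, t_{in_i}) \mid \varphi_i \wedge \neg\varphi}$ supplies, per satisfying tuple $\bar{a}$, precisely the values required to apply the inductive hypothesis for $r'$ at $\mu[\bar{x} \mapsto \bar{a}]$ and conclude $\Delta_{\mu[\bar{x} \mapsto \bar{a}]}(r', \mathbf{S}_1) = \Delta_{\mu[\bar{x} \mapsto \bar{a}]}(r', \mathbf{S}_2)$. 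The subtle point---and the reason multiset comprehension terms rather than mere ground terms are needed---is that the set of satisfying tuples is not fixed by the algorithm but depends on the state, so the witness must record the per-tuple data uniformly; taking unions then yields coincidence of the total update sets, completing the induction and with it the plausibility theorem.
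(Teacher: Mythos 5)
Your overall route is the same as the paper's: dispose of Sequential Time, Abstract State and Background as built into Definition~\ref{ASM}, and then prove Bounded Exploration by structural induction on sub-rules using the witness sets $W_r$ of Definition~\ref{def:witnessSet}. However, there is a genuine gap in the way you formulate the strengthened induction claim, and it is exactly at the point you yourself identify as the heart of the argument. You quantify over a \emph{single} variable assignment $\mu$ (and additionally restrict to states with a common base set, a restriction not present in Postulate~\ref{boundedexpl}). The paper's induction hypothesis instead quantifies over two \emph{independent} assignments $\mu_1$ over ${\bf I}_1$ and $\mu_2$ over ${\bf I}_2$: if $\text{val}_{{\bf S}_1,\mu_1}(\alpha)=\text{val}_{{\bf S}_2,\mu_2}(\alpha)$ for all $\alpha\in W_r$, then $\Delta_{\mu_1}(r,{\bf S}_1)=\Delta_{\mu_2}(r,{\bf S}_2)$. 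This extra generality is not cosmetic: it is what makes the \textbf{forall} case go through.

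Concretely, coincidence of the witness $\multiset{(t_{i0},\ldots,t_{in_i})\mid\varphi_i\wedge\varphi}$ in ${\bf S}_1$ and ${\bf S}_2$ only says that the two \emph{multisets of value-tuples} are equal; it yields a bijection $\zeta$ from the tuples $\bar a$ satisfying $\varphi_i\wedge\varphi$ in ${\bf S}_1$ to the tuples $\bar b$ satisfying it in ${\bf S}_2$ such that the $t_{ij}$ take the same values at $\bar a$ in ${\bf S}_1$ as at $\zeta(\bar a)=\bar b$ in ${\bf S}_2$. It does \emph{not} say that a particular $\bar a$ satisfying $\varphi$ in ${\bf S}_1$ also satisfies it in ${\bf S}_2$, nor that the terms evaluate to the same values at the \emph{same} tuple $\bar a$ in both states. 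Hence your step ``supplies, per satisfying tuple $\bar a$, precisely the values required to apply the inductive hypothesis for $r'$ at $\mu[\bar x\mapsto\bar a]$ and conclude $\Delta_{\mu[\bar x\mapsto\bar a]}(r',{\bf S}_1)=\Delta_{\mu[\bar x\mapsto\bar a]}(r',{\bf S}_2)$'' is unjustified: with a single-assignment induction hypothesis you have nothing to apply. The repair is the paper's: state the claim for two assignments, and in the forall case glue the bijections coming from the $\varphi_i\wedge\varphi$ and $\varphi_i\wedge\neg\varphi$ witnesses into a single correspondence $\zeta''$, then invoke the induction hypothesis for $r'$ at the \emph{pair} $\mu_1[\bar x\mapsto\bar a]$, $\mu_2[\bar x\mapsto\zeta''(\bar a)]$. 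With that change (and dropping the common-base-set restriction, which the postulate does not grant you), your argument matches the paper's proof.
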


\begin{proof}
Let $\mathcal{M}$ be a parallel ASM. We have to show that the four postulates (Sequential Time, Abstract State, Background and Parallel Bounded Exploration postulates) are satisfied. The Sequential Time and Background postulates are already built into the definition of an ASM. The same holds for the Abstract State postulate, and the preservation of isomorphisms is straightforward.

In what follows, we prove that $\mathcal{M}$ satisfies also the Bounded Exploration postulate.
Let $r$ be a well formed ASM rule of vocabulary $\Sigma$. Let $W_r$ be the set of multiset comprehension term corresponding to $r$ as per Definition~\ref{def:witnessSet}. Let ${\bf S}_1$ and ${\bf S}_2$ be computation states of vocabulary $\Sigma$. Let $\mu_1$ and $\mu_2$ be variable assignments over the primary part ${\bf I}_1$ of ${\bf S}_1$ and ${\bf I}_2$ of ${\bf S}_2$, respectively
We show that: 
\[\textrm{If} \; \text{val}_{{\bf S}_1,\mu_1}(\alpha_i) = \text{val}_{{\bf S}_2,\mu_2}(\alpha_i) \; \textrm{for every} \; \alpha_i \in W_r, \; \textrm{then} \; \Delta_{\mu_1}(r,{\bf S}_1) = \Delta_{\mu_2}(r,{\bf S}_2).\]  
We proceed by induction on the set of ASM rules over $\Sigma$.

\begin{itemize}
\item If $r$ is a rule of the form $f(t_1,\ldots,t_n) := t_0$ and $\bigcup_{0 \le i \le n}\textit{free}(t_i) = \{x_1,\ldots,x_k\}$ then by definition $W_r = \{\multiset{(t_0,t_1, \dots, t_n) \mid \texttt{true}}_{x_1,\ldots,x_k}\}$. Let $\alpha = \multiset{(t_0,t_1, \dots, t_n) \mid \texttt{true}}_{x_1,\ldots,x_k}$. Since by Definition~\ref{Def:terms} \[\mathrm{val}_{{\bf S}_1, \mu_1}(\alpha) = \multiset{(\mathrm{val}_{{\bf S}_1, \mu_1}(t_0), \mathrm{val}_{{\bf S}_1, \mu_1}(t_1), \ldots, \mathrm{val}_{{\bf S}_1,\mu_1}(t_n))} \; \textrm{and} \]
\[\mathrm{val}_{{\bf S}_2, \mu_2}(\alpha) = \multiset{(\mathrm{val}_{{\bf S}_2, \mu_2}(t_0), \mathrm{val}_{{\bf S}_2, \mu_2}(t_1), \ldots, \mathrm{val}_{{\bf S}_2,\mu_2}(t_n))},\] by our assumption $\mathrm{val}_{{\bf S}_1, \mu_1}(\alpha) = \mathrm{val}_{{\bf S}_2, \mu_2}(\alpha)$ and by Definition~\ref{ASMrules} \[\Delta_{\mu_1}(r,{\bf S}_1) = \{((f,(\mathrm{val}_{{\bf S}_1, \mu_1}(t_1),\ldots,\mathrm{val}_{{\bf S}_1, \mu_1}(t_n))), \mathrm{val}_{{\bf S}_1, \mu_1}(t_0))\} \; \textrm{and} \]
\[\Delta_{\mu_2}(r,{\bf S}_2) = \{((f,(\mathrm{val}_{{\bf S}_2, \mu_2}(t_1),\ldots,\mathrm{val}_{{\bf S}_2, \mu_2}(t_n))), \mathrm{val}_{{\bf S}_2, \mu_2}(t_0))\},\]   
we get that $\Delta_{\mu_1}(r,{\bf S}_1) = \Delta_{\mu_2}(r,{\bf S}_2)$. 

\item If $r$ is a rule of the form $\textbf{par} \; r_1 \; \ldots \; r_n \; \textbf{endpar}$ then by definition  $W_r = W_{r_1} \cup \ldots \cup W_{r_n}$. Since $\text{val}_{{\bf S}_1,\mu_1}(\alpha_i) = \text{val}_{{\bf S}_2,\mu_2}(\alpha_i)$ for every $\alpha_i \in W_r$ and $W_{r_j} \subseteq W_r$ for every $1 \leq j \leq n$, we know that $\text{val}_{{\bf S}_1,\mu_1}(\alpha_i) = \text{val}_{{\bf S}_2,\mu_2}(\alpha_i)$ for every $\alpha_i \in W_{r_j}$.  Then it follows by the induction hypothesis that $\Delta_{\mu_1}(r_j,{\bf S}_1) = \Delta_{\mu_2}(r_j,{\bf S}_2)$ for every $1 \leq j \leq n$. By Definition~\ref{ASMrules}, $\Delta_{\mu_1}(r,{\bf S}_1) = \Delta_{\mu_1}(r_1,{\bf S}_1) \cup \ldots \cup \Delta_{\mu_1}(r_n,{\bf S}_1)$ and $\Delta_{\mu_2}(r,{\bf S}_2) = \Delta_{\mu_2}(r_1,{\bf S}_2) \cup \ldots \cup \Delta_{\mu_2}(r_n,{\bf S}_2)$. Hence $\Delta_{\mu_1}({r,\bf S}_1) = \Delta_{\mu_2}(r,{\bf S}_2)$.

\item If $r$ is of the form $\textbf{if} \; \varphi \; \textbf{then} \;  r' \; \textbf{endif}$ and $\textit{free}(r^\prime) \cup \textit{free}(\varphi) = \{x_1, \ldots, x_k\}$, then by definition \\[0.2cm]
$W_r = \{\multiset{\varphi \mid \texttt{true}}_{\textit{free}(\varphi)}\} \; \cup \; \{\multiset{\texttt{true} \mid \texttt{true}}\} \; \cup$\\[0.2cm]
\hspace*{0.7cm} $\{\multiset{(t_{i0}, \ldots, t_{in_i}) \mid \varphi_i \wedge \varphi}_{x_1,\ldots,x_k} \mid \multiset{(t_{i0}, \ldots, t_{in_i}) \mid \varphi_i}_{\mathit{free}(r')} \in W_{r'}\}$.\\[0.2cm]
Since $\multiset{\texttt{true} \mid \texttt{true}} \in W_r$ and $\multiset{\varphi \mid \texttt{true}}_{\textit{free}(\varphi)} \in W_r$, we know (by our assumption that $\text{val}_{{\bf S}_1,\mu_1}(\alpha_i) = \text{val}_{{\bf S}_2,\mu_2}(\alpha_i)$ for all $\alpha_i \in W_r$) that $\texttt{true}^{{\bf S}_1} = \texttt{true}^{{\bf S}_2}$ and $\mathrm{val}_{{\bf S}_1, \mu_1}(\varphi) = \mathrm{val}_{{\bf S}_2, \mu_2}(\varphi)$. Hence $\mathrm{val}_{{\bf S}_1, \mu_1}(\varphi) = \texttt{true}^{{\bf S_1}}$ iff $\mathrm{val}_{{\bf S}_2, \mu_2}(\varphi) = \texttt{true}^{{\bf S}_2}$.   

Let $\mathrm{val}_{{\bf S}_1, \mu_1}(\varphi) = \texttt{true}^{{\bf S}_1}$ and $\mathrm{val}_{{\bf S}_2, \mu_2}(\varphi) = \texttt{true}^{{\bf S}_2}$. It follows by definition of $W_r$ that $\mathrm{val}_{{\bf S}_1, \mu_1}(\alpha_i') = \mathrm{val}_{{\bf S}_2, \mu_2}(\alpha_i')$ for all $\alpha_i' \in W_{r'}$. By the induction hypothesis we then have that $\Delta_{\mu_1}(r', {\bf S}_1) =  \Delta_{\mu_2}(r', {\bf S}_2)$. From Definition~\ref{ASMrules} and the fact that $\mathrm{val}_{{\bf S}_1, \mu_1}(\varphi) = \texttt{true}^{{\bf S}_1}$ and $\mathrm{val}_{{\bf S}_2, \mu_2}(\varphi) = \texttt{true}^{{\bf S}_2}$, we get that $\Delta_{\mu_1}(r, {\bf S}_1) =  \Delta_{\mu_1}(r', {\bf S}_1)$ and that $\Delta_{\mu_2}(r, {\bf S}_2) =  \Delta_{\mu_2}(r', {\bf S}_2)$. Hence $\Delta_{\mu_1}(r, {\bf S}_1) =  \Delta_{\mu_2}(r, {\bf S}_2)$. 

If in the other hand $\mathrm{val}_{{\bf S}_1, \mu_1}(\varphi) \neq \texttt{true}^{{\bf S}_1}$ and $\mathrm{val}_{{\bf S}_2, \mu_2}(\varphi) \neq \texttt{true}^{{\bf S}_2}$ then, by Definition~\ref{ASMrules}, $\Delta_{\mu_1}(r, {\bf S}_1) =  \Delta_{\mu_2}(r, {\bf S}_2) = \emptyset$.

\item If $r$ is a rule of the form $\mathbf{forall} \; x_1,\ldots,x_k \; \mathbf{with} \; \varphi \; \mathbf{do} \; r'$ and

$(\textit{free}(r^\prime) \cup \textit{free}(\varphi))  \setminus \{ x_1 ,\dots, x_k \} = \{y_1,\ldots,y_l\}$

then by definition
\\[0.2cm]
$W_r = \{\multiset{(t_{i0}, \ldots, t_{in_i}) \mid \varphi_i \wedge \varphi)}_{y_1,\ldots,y_l} \mid$\\[0.2cm] 
\hspace*{6cm} $\multiset{(t_{i0}, \ldots, t_{in_i}) \mid \varphi_i}_{\mathit{free}(r')} \in W_{r'}\} \; \cup$\\[0.2cm]
\hspace*{1.1cm}$\{\multiset{(t_{i0}, \ldots, t_{in_i}) \mid \varphi_i \wedge \neg \varphi)}_{y_1,\ldots,y_l} \mid$\\[0.2cm] 
\hspace*{6cm} $\multiset{(t_{i0}, \ldots, t_{in_i}) \mid \varphi_i}_{\mathit{free}(r')} \in W_{r'}\}$.\\[0.2cm]
Let $\alpha_i' = \multiset{(t_{i0}, \ldots, t_{in_i}) \mid \varphi_i}_{\textit{free}(r')} \in W_r$ and\\[0.2cm]
$A_{\alpha_i} = \{(a_1, \ldots, a_k) \in (I_1)^k \mid \mathrm{val}_{{\bf S}_1, \mu_1[x_1 \mapsto  a_1, \ldots, x_k \mapsto a_k]}(\varphi_i) = \texttt{true}^{{\bf S}_1} \wedge$\\[0.2cm]
\hspace*{5cm} $\mathrm{val}_{{\bf S}_1, \mu_1[x_1 \mapsto a_1, \ldots, x_k \mapsto a_k]}(\varphi) = \texttt{true}^{{\bf S}_1}  \}$ \\[0.2cm]
$B_{\alpha_i} = \{(a_1, \ldots, a_k) \in (I_2)^k \mid \mathrm{val}_{{\bf S}_2, \mu_2[x_1 \mapsto  a_1, \ldots, x_k \mapsto a_k]}(\varphi_i) = \texttt{true}^{{\bf S}_2}  \wedge$\\[0.2cm]
\hspace*{5cm} $\mathrm{val}_{{\bf S}_2, \mu_2[x_1 \mapsto a_1, \ldots, x_k \mapsto a_k]}(\varphi) = \texttt{true}^{{\bf S}_2}  \}$\\[0.2cm]
$A^-_{\alpha_i} = \{(a_1, \ldots, a_k) \in (I_1)^k \mid \mathrm{val}_{{\bf S}_1, \mu_1[x_1 \mapsto  a_1, \ldots, x_k \mapsto a_k]}(\varphi_i) = \texttt{true}^{{\bf S}_1} \wedge$\\[0.2cm]
\hspace*{5cm} $\mathrm{val}_{{\bf S}_1, \mu_1[x_1 \mapsto a_1, \ldots, x_k \mapsto a_k]}(\varphi) \neq \texttt{true}^{{\bf S}_1}  \}$ \\[0.2cm]
$B^-_{\alpha_i} = \{(a_1, \ldots, a_k) \in (I_2)^k \mid \mathrm{val}_{{\bf S}_2, \mu_2[x_1 \mapsto  a_1, \ldots, x_k \mapsto a_k]}(\varphi_i) = \texttt{true}^{{\bf S}_2} \wedge$\\[0.2cm]
\hspace*{5cm} $\mathrm{val}_{{\bf S}_2, \mu_2[x_1 \mapsto a_1, \ldots, x_k \mapsto a_k]}(\varphi) \neq \texttt{true}^{{\bf S}_2}  \}$\\[0.2cm]
Since for $\alpha_i = \multiset{(t_{i0}, \ldots, t_{in_i}) \mid \varphi_i \wedge \varphi}_{y_1,\ldots,y_l}$ we have assumed that $\mathrm{val}_{{\bf S}_1, \mu_1}(\alpha_i) = \mathrm{val}_{{\bf S}_2, \mu_2}(\alpha_i)$, then there is a bijection $\zeta$ from $A_{\alpha_i}$ to $B_{\alpha_i}$ such that for all $(a_1, \ldots, a_k) \in A_{\alpha_i}$ and corresponding $\zeta((a_1, \ldots, a_k)) = (b_1, \ldots, b_k) \in B_{\alpha_i}$, 
\[\mathrm{val}_{{\bf S}_1, \mu_1[x_1 \mapsto a_1, \ldots, x_k \mapsto a_k]}((t_{i0}, \ldots, t_{in_i})) = \mathrm{val}_{{\bf S}_2, \mu_2[x_1 \mapsto b_1, \ldots, x_k \mapsto b_k]}((t_{i0}, \ldots, t_{in_i}))\]
Likewise, since for $\alpha^-_i = \multiset{(t_{i0}, \ldots, t_{in_i}) \mid \varphi_i \wedge \neg \varphi)}_{y_1,\ldots,y_l}$ we have assumed that $\mathrm{val}_{{\bf S}_1, \mu_1}(\alpha^-_i) = \mathrm{val}_{{\bf S}_2, \mu_2}(\alpha^-_i)$, then there is a bijection $\zeta'$ from $A^-_{\alpha_i}$ to $B^-_{\alpha_i}$ such that for all $(a_1, \ldots, a_k) \in A^-_{\alpha_i}$ and corresponding $\zeta'((a_1, \ldots, a_k)) = (b_1, \ldots, b_k) \in B^-_{\alpha_i}$, 
\[\mathrm{val}_{{\bf S}_1, \mu_1[x_1 \mapsto a_1, \ldots, x_k \mapsto a_k]}((t_{i0}, \ldots, t_{in_i})) = \mathrm{val}_{{\bf S}_2, \mu_2[x_1 \mapsto b_1, \ldots, x_k \mapsto b_k]}((t_{i0}, \ldots, t_{in_i}))\]
Since $A_{\alpha_i} \cap A^-_{\alpha_i} = \emptyset$ and $B_{\alpha_i} \cap B^-_{\alpha_i} = \emptyset$,
we get that $\zeta'' = \zeta \cup \zeta'$ is a bijection from $A_{\alpha_i} \cup A^-_{\alpha_i}$ to $B_{\alpha_i} \cup B^-_{\alpha_i}$ which preserves  
\[\mathrm{val}_{{\bf S}_1, \mu_1[x_1 \mapsto a_1, \ldots, x_k \mapsto a_k]}((t_{i0}, \ldots, t_{in_i})) = \mathrm{val}_{{\bf S}_2, \mu_2[x_1 \mapsto b_1, \ldots, x_k \mapsto b_k]}((t_{i0}, \ldots, t_{in_i}))\]
for all $(a_1, \ldots, a_k) \in A_{\alpha_i} \cup A^-_{\alpha_i}$ and corresponding $\zeta''((a_1, \ldots, a_k)) = (b_1, \ldots, b_k) \in B_{\alpha_i} \cup B^-_{\alpha_i}$. 

Hence, for every $\alpha_i' \in W_{r'}$ we get that $\mathrm{val}_{{\bf S}_1, \mu_1}(\alpha_i') = \mathrm{val}_{{\bf S}_2, \mu_2}(\alpha_i')$ and, by the inductive hypothesis, that $\Delta_{\mu_1}(r', {\bf S}_1) =  \Delta_{\mu_2}(r', {\bf S}_2)$. Since this holds for every pair of variable assignments $\mu_1$ and $\mu_2$ on ${\bf I}_1$ and ${\bf I}_2$, respectively, which satisfies our assumption that $\text{val}_{{\bf S}_1,\mu_1}(\alpha_i) = \text{val}_{{\bf S}_2,\mu_2}(\alpha_i)$ for every $\alpha_i \in W_r$, then it also holds in particular for every pair of variable assignments $\mu_1' = \mu_1[x_1 \mapsto a_1,\ldots,x_k \mapsto a_k]$ with $(a_1, \ldots, a_k) \in (I_1)^k$ and $\mu_2' = \mu_2[x_1 \mapsto b_1,\ldots,x_k \mapsto b_k]$ with $(b_1, \ldots, b_k) \in (I_2)^k$ such that $\mathrm{val}_{{\bf S}_1,\mu_1'}(\varphi) = \texttt{true}^{{\bf S}_1}$ and $\mathrm{val}_{{\bf S}_2,\mu_2'}(\varphi) = \texttt{true}^{{\bf S}_2}$. Thus, it follows from Definition~\ref{ASMrules} that $\Delta_{\mu_1}(r,{\bf S}_1) = \Delta_{\mu_2}(r,{\bf S}_2)$ holds.
\end{itemize}
\end{proof}

\section{Examples}
\label{sec:examples}

In this section we provide evidence that the general axiomatic description of parallel algorithms in Definition~\ref{def:parallelAlgo} captures the notion of deterministic, parallel algorithm that works synchronously on a fixed level of abstraction. We discuss some familiar approaches to parallelism in a way that is analogous to the discussion in the ASM thesis for parallel algorithms of Blass and Gurevich \cite{[BG03],[BG08]}, showing that they fit our (simplified) axiomatic characterization.

\subsection{Circuits}

A popular model for parallel computing, which was shown to satisfy the postulates in the original ASM thesis for parallel algorithms  \cite{[BG03]}, is provided by the class of unbounded fan-in combinational Boolean circuit (see for instance \cite{[KR90]}). We show next that this kind of circuits also fit the new axiomatization proposed in this work.

A \emph{Boolean circuit of unbounded fan-in} with $n$ inputs $x_1, \ldots, x_n$ is a labeled acyclic digraph with a set of nodes $V$ (usually called gates), a set of edges $E$ and a labeling function $\lambda$ from $V$ to $\{x_1, \ldots, x_n\} \cup \{\land, \lor, \neg\}$ such that: $\lambda(v) \in \{x_1, \ldots, x_n\}$ implies that $v$ has in-degree $0$ and $\lambda(v) = \neg$ implies that $v$ has in-degree $1$. The in-degree of a node is called \emph{fan-in}. The fan-in is unbounded for nodes labeled with $\land$ or $\lor$.   

We consider the states of computation of the algorithm to include, apart from the requirements of the background postulate, the following functions:
\begin{itemize}
	\item A static bridge function $f_V$ which evaluates to $true$ if its argument is a node in $V$ and to $\texttt{false}$ otherwise.
	\item A static bridge function $f_E$ such that $f_E(v_1,v_2)$ evaluates to $\texttt{true}$ if there is an edge from $v_1$ to $v_2$ in $E$.
  \item A static function $f_\lambda$ in the primary part of the state such that $f_\lambda(v) = \lambda(v)$ if $v \in V$ and $f_\lambda(v) = \texttt{undef}$ otherwise. 
	\item A dynamic bridge function $\texttt{val}$ which assigns to each node $v \in V$ with $\lambda(v) = x_i$ for some $x_i$ in $\{x_1, \ldots, x_n\}$ the input value given to $x_i$, and to each node $v$ with $\lambda(v) \not\in \{x_1, \ldots, x_n\}$ the value it computes (which is $\texttt{undef}$ in the initial state and updated exactly once during the computation).
\end{itemize}

\begin{algorithm}[htp!]
\caption{Boolean Circuits of Unbounded Fan-in}
\label{ASM:Circuits}
\begin{algorithmic}[0]
\small
\FORALL {$x \textbf{ with } f_V(x)$}
   \IF {$\forall y (f_E(y,x) \rightarrow \texttt{val}(y) \neq \texttt{undef}) \wedge \texttt{val}(x) = \texttt{undef}$} 
	 \STATE {\bf par}
      \STATE {\bf if} $f_\lambda(x) = \neg$ {\bf then} $\texttt{val}(x) := \multiset{y \mid f_E(y,x) \land \texttt{val}(y) = \texttt{true}} = \oslash$ {\bf endif}
      \STATE {\bf if} $f_\lambda(x) = \lor$ {\bf then} $\texttt{val}(x) := \multiset{y \mid f_E(y,x) \land \texttt{val}(y) = \texttt{true}} \neq \oslash$ {\bf endif}
      \STATE {\bf if} $f_\lambda(x) = \land$ {\bf then} $\texttt{val}(x) := \multiset{y \mid f_E(y,x) \land \texttt{val}(y) = \texttt{false}} = \oslash$ {\bf endif}
	 \STATE {\bf endpar}	
   \ENDIF
\ENDFOR
\end{algorithmic}
\end{algorithm}

It is easy to see that the transition function described by Algorithm~\ref{ASM:Circuits}, together with the states described above, define a parallel algorithm that satisfies the Sequential Time, Abstract State, and Background postulates and computes the boolean function that corresponds to the input Boolean circuit. The following witness terms show that it also satisfies the Bounded Exploration postulate.
 \begingroup
\allowdisplaybreaks

\begin{align*} 
\alpha_1 =&  \{\!\!\{ (\multiset{y \mid f_E(y,x) \land \texttt{val}(y) = \texttt{true}} = \oslash,x) \mid  f_V(x) \wedge  \\
&\forall y (f_E(y,x) \rightarrow \texttt{val}(y) \neq \texttt{undef}) \wedge \texttt{val}(x) = \texttt{undef} \wedge f_\lambda(x) = \neg\}\!\!\} \\ 
\alpha_2 =&  \{\!\!\{ (\multiset{y \mid f_E(y,x) \land \texttt{val}(y) = \texttt{true}} \neq \oslash,x) \mid  f_V(x) \wedge \\
&\forall y (f_E(y,x) \rightarrow \texttt{val}(y) \neq \texttt{undef}) \wedge \texttt{val}(x) = \texttt{undef} \wedge f_\lambda(x) = \lor \}\!\!\} \\ 
\alpha_3 =&  \{\!\!\{ (\multiset{y \mid f_E(y,x) \land \texttt{val}(y) = \texttt{false}} = \oslash,x) \mid  f_V(x) \wedge  \\
&\forall y (f_E(y,x) \rightarrow \texttt{val}(y) \neq \texttt{undef}) \wedge \texttt{val}(x) = \texttt{undef} \wedge f_\lambda(x) = \land \}\!\!\} \\
\alpha_4 =& \{\!\!\{ f_\lambda(x) = \neg \mid \forall y (f_E(y,x) \rightarrow \texttt{val}(y) \neq \texttt{undef}) \wedge \texttt{val}(x) = \texttt{undef} \wedge \\
& f_V(x)\}\!\!\} \\
\alpha_5 =& \{\!\!\{f_\lambda(x) = \lor \mid \forall y (f_E(y,x) \rightarrow \texttt{val}(y) \neq \texttt{undef}) \wedge \texttt{val}(x) = \texttt{undef} \wedge \\
& f_V(x)\}\!\!\} \\
\alpha_6 =& \{\!\!\{f_\lambda(x) = \land \mid \forall y (f_E(y,x) \rightarrow \texttt{val}(y) \neq \texttt{undef}) \wedge \texttt{val}(x) = \texttt{undef} \wedge \\ 
& f_V(x)\}\!\!\} \\
\alpha_7 =& \multiset{\forall y (f_E(y,x) \rightarrow \texttt{val}(y) \neq \texttt{undef}) \wedge \texttt{val}(x) = \texttt{undef} \mid  f_V(x)} \\
\alpha_8 =& \multiset{\texttt{true} \mid \texttt{true}}
\end{align*}
\endgroup
The set of witness terms additionally includes the terms obtained by replacing $f_V(x)$ by its negation in each of the witness terms $\alpha_1$,\ldots, $\alpha_7$. To simplify our presentation, we omit them here.  

\subsection{Parallel Random Access Machine}

The parallel random access machine (PRAM) model is a most idealized and powerful model of parallel computation. Following~\cite{[CP2003]}, we define a PRAM program as a sequence of random access machines (RAM) programs, $P = (\Pi_1, \Pi_2, \ldots, \Pi_q)$ one for each of $q$ RAMs. In turn, each RAM $i$ ($1 \leq i \leq q$) is a finite sequence $(\pi_{i1}, \pi_{i2}, \ldots, \pi_{im_i})$ of instructions of the kinds shown in Figure~2.6 in \cite{[CP2003]} (\texttt{READ}, \texttt{STORE}, \texttt{LOAD}, \texttt{JUMP}, etc), with arguments standing for the contents of registers (memory locations). Register $i$ is the accumulator of the RAM $i$, where the result of the current operation is stored. All registers, including the accumulators, are shared. That is, every RAM can both read and write all registers. Every RAM $i$ executes its own program $\Pi_i$. At each step, the RAM $i$ executes the instruction pointed by the program counter $\kappa_i$, reading and writing integer values on the registers as required by the instruction. There is also a set of input registers $I = (i_1, \ldots, i_m)$. The \texttt{HALT} instruction stops the computation by setting the program counter to $0$. Every semantically wrong instruction is considered as a \texttt{HALT} instruction. 
We assume that, apart from the requirements of the background postulate, the states of computation include the following functions:
\begin{itemize}
\item A unary bridge functions $P$ which is Boolean and static, and evaluate to $\texttt{true}$ only for the RAM programs $(1, \ldots, p)$.  
\item A unary static functions $I$ in the secondary part. If $j$ is an input register, $I(j)$ evaluates to its value. Otherwise, it evaluates to $\texttt{undef}$.
\item Binary bridge functions $\texttt{Inst}$, $\texttt{OpType}$ and $\texttt{OpVal}$ in the primary part which are static and map each program $i$ and line $l$ of $i$ to the instruction, the type of operand (either $j$, ${\uparrow}j$ or ${=}j$) and the operand value (an integer), respectively, that appears in the line $l$ of the program $i$. If $i$ is not a program or $l$ is not a line of $i$, then $\texttt{Inst}$, $\texttt{OpType}$ and $\texttt{OpVal}$ evaluate to \texttt{undef}.
\item A unary dynamic function $R$ from the positive integers to the integers which belongs to the secondary part and maps each register to its current value. 
\item A unary dynamic function $\kappa$ which belongs to the primary part and maps each program to the current value of its program counter.  
\item A unary bridge function $W$ which is dynamic and maps processors to pairs of the form (register, value). This is an auxiliary function which is used to collect the processors requests to update registers. It allows our ASM to detect whether more than one processor try to update the same register. In such cases, we use the convention that the processor with the smallest index prevails and has its value written in the register.
\item A nullary bridge function $\texttt{mode}$ which is interpreted by the values $0$ or $1$ depending on whether our ASM needs to execute a PRAM step or update the registers, respectively. 
\end{itemize}

\afterpage{
\begin{algorithm}[htp!]
\caption{Parallel Random Access Machine\footnotemark}
\label{ASM:PRAM}
\begin{algorithmic}[0]
\small
\STATE {{\bf if} $\texttt{mode} = 0$ {\bf then}}
\FORALL {$i \textbf{ with } P(i)$}
   \IF {$\kappa(i) \neq 0$} 
      \IF {$\texttt{Inst}(i, \kappa(i)) = \texttt{READ}$}
         \STATE {$\kappa(i) := \kappa(i) + 1$}
         \STATE {{\bf if} $\texttt{OpType}(i, \kappa(i)) = j$ {\bf then} $W(i) := (i, I(\texttt{OpVal}(i, \kappa(i))))$ {\bf end if}}
         \STATE {{\bf if} $\texttt{OpType}(i, \kappa(i)) = {\uparrow}j$ {\bf then}}
         \STATE {\qquad $W(i) := (i, I(R(\texttt{OpVal}(i, \kappa(i)))))$ {\bf end if}}
      \ENDIF
      \IF {$\texttt{Inst}(i, \kappa(i)) = \texttt{STORE}$}
         \STATE {$\kappa(i) := \kappa(i) + 1$}
         \STATE {{\bf if} $\texttt{OpType}(i, \kappa(i)) = j$ {\bf then} $W(i) := (\texttt{OpVal}(i, \kappa(i)), R(i))$ {\bf end if}}
         \STATE {{\bf if} $\texttt{OpType}(i, \kappa(i)) = {\uparrow}j$ {\bf then}}
         \STATE {\qquad $W(i) := (R(\texttt{OpVal}(i, \kappa(i))), R(i))$ {\bf end if}}
      \ENDIF
      \STATE $\vdots$
      \IF {$\texttt{Inst}(i, \kappa(i)) = \texttt{HALT}$}
         \STATE {$\kappa(i) := 0$}
      \ENDIF
   \ENDIF
\ENDFOR
\STATE {$\texttt{mode} := 1$}
\STATE {{\bf end if}}
\STATE {{\bf if} $\texttt{mode} = 1$ {\bf then}}
\FORALL {$i \textbf{ with } W(i) \neq \texttt{undef}$} 
   \IF {$\multiset{x \mid \texttt{first}(W(x)) = \texttt{first}(W(i)) \wedge x < i} = \oslash$}
      \STATE {$R(\texttt{first}(W(i))) := \texttt{second}(W(i))$}
   \ENDIF
   \STATE {$W(i) := \texttt{undef}$} 
\ENDFOR
\STATE {$\texttt{mode} := 0$}
\STATE {{\bf end if}}
\end{algorithmic}
\end{algorithm}
\footnotetext{We have intentionally omitted the par blocks so that the algorithm fits in one page. Sequences of instructions at a same level are implicitly assumed to be executed in parallel.}
}

The base set of the (finite) primary part of each state of our algorithm includes: a finite totally ordered set of RAM programs, a finite set of instructions, three operands, a finite set of operand values (integers which appear in the programs) and a finite totally ordered set of line numbers. We assume that the primary part also includes a successor function $+1$ defined on the set of line numbers, a constant $0$ which does not belong to the set of line numbers, the relation ``$<$'' defined in the set of RAM programs and constants $j$, ${\uparrow}j$ and ${=}j$ for the operands. The secondary part includes (in addition to the elements in the primary part) the set of integers. Also we assume that the secondary part includes functions for the standard arithmetic operations and relations among integers. In every initial state, if $i$ is a program, $\kappa(i)$ evaluates to the first line of $i$ and $W(i)$ evaluates to $\texttt{undef}$, and  $R(j)$ evaluates to $0$ if $j$ is a register (i.e., a positive integer) and to $\texttt{undef}$ otherwise. 

To simplify and shorten the exposition we assume that every instruction is semantically correct. That is, we assume that the programs do not contain semantically wrong instructions such as one that addresses Register $-14$. For the very same reason, in Algorithm~\ref{ASM:PRAM} we only show the cases corresponding to the $\texttt{READ}$, $\texttt{STORE}$ and $\texttt{HALT}$ instructions.  

The transition function described by the ASM in Algorithm~\ref{ASM:PRAM} together with the states described above and the following witness terms show that the PRAM model of parallel computation satisfies the Sequential Time, Abstract State, Background and Bounded Exploration postulates for parallel algorithms. 
\begingroup
\allowdisplaybreaks

\begin{align*}
\alpha_1 =&  \{\!\!\{(\kappa(i) + 1, i) \mid \texttt{mode} = 0 \wedge P(i) \wedge \kappa(i) \neq 0 \wedge \texttt{Inst}(i, \kappa(i)) = \texttt{READ}\}\!\!\} \\
\alpha_2 =&  \{\!\!\{((i, I(\texttt{OpVal}(i, \kappa(i)))), i) \mid \texttt{mode} = 0 \wedge P(i) \wedge \kappa(i) \neq 0 \wedge \\
& \texttt{Inst}(i, \kappa(i)) = \texttt{READ} \wedge \texttt{OpType}(i, \kappa(i)) = j \}\!\!\} \\
\alpha_3 =&  \{\!\!\{((i, I(R(\texttt{OpVal}(i, \kappa(i))))) ,i) \mid \texttt{mode} = 0 \wedge P(i) \wedge \kappa(i) \neq 0 \wedge \\
& \texttt{Inst}(i, \kappa(i)) = \texttt{READ} \wedge \texttt{OpType}(i, \kappa(i)) = {\uparrow}j \}\!\!\} \\
\alpha_4 =&  \{\!\!\{(\kappa(i) + 1, i) \mid \texttt{mode} = 0 \wedge P(i) \wedge \kappa(i) \neq 0 \wedge \texttt{Inst}(i, \kappa(i)) = \texttt{STORE}\}\!\!\} \\
\alpha_5 =&  \{\!\!\{((\texttt{OpVal}(i, \kappa(i)), R(i)), i) \mid \texttt{mode} = 0 \wedge P(i) \wedge \kappa(i) \neq 0 \wedge \\
& \texttt{Inst}(i, \kappa(i)) = \texttt{STORE} \wedge \texttt{OpType}(i, \kappa(i)) = j \}\!\!\} \\
\alpha_6 =&  \{\!\!\{((R(\texttt{OpVal}(i, \kappa(i))), R(i)) ,i) \mid \texttt{mode} = 0 \wedge P(i) \wedge \kappa(i) \neq 0 \wedge \\
& \texttt{Inst}(i, \kappa(i)) = \texttt{STORE} \wedge \texttt{OpType}(i, \kappa(i)) = {\uparrow}j \}\!\!\} \\
\alpha_7 =&  \{\!\!\{\texttt{Inst}(i, \kappa(i)) = \texttt{READ} \mid \texttt{mode} = 0 \wedge P(i) \wedge \kappa(i) \neq 0 \}\!\!\}  \\
\alpha_8 =&  \{\!\!\{\texttt{Inst}(i, \kappa(i)) = \texttt{STORE} \mid \texttt{mode} = 0 \wedge P(i) \wedge \kappa(i) \neq 0 \}\!\!\}  \\
\alpha_{9} =&  \{\!\!\{\kappa(i) \neq 0 \mid P(i) \wedge \kappa(i) \neq 0 \}\!\!\}  \\
\vdots \\
\alpha_n =&  \{\!\!\{(0, i) \mid \texttt{mode} = 0 \wedge P(i) \wedge \kappa(i) \neq 0 \wedge \texttt{Inst}(i, \kappa(i)) = \texttt{HALT}\}\!\!\} \\
\alpha_{n+1} =&  \{\!\!\{\texttt{Inst}(i, \kappa(i)) = \texttt{HALT} \mid \texttt{mode} = 0 \wedge P(i) \wedge \kappa(i) \neq 0 \}\!\!\}  \\
\alpha_{n+2} =& \{\!\!\{1 \mid \texttt{mode} = 0\}\!\!\} \\ 
\alpha_{n+3} =&  \{\!\!\{\texttt{mode} = 0 \mid \texttt{true} \}\!\!\}  \\
\alpha_{n+4} =& \{\!\!\{(\texttt{second}(W(i)), \texttt{first}(W(i))) \mid \texttt{mode} = 1 \wedge W(i) \neq \texttt{undef} \wedge \\
& \multiset{x \mid \texttt{first}(W(x)) = \texttt{first}(W(i)) \wedge x < i} = \oslash \}\!\!\} \\
\alpha_{n+5} =& \{\!\!\{(\texttt{undef},i) \mid \texttt{mode} = 1 \wedge W(i) \neq \texttt{undef} \}\!\!\} \\
\alpha_{n+6} =& \{\!\!\{\multiset{x \mid \texttt{first}(W(x)) = \texttt{first}(W(i)) \wedge x < i} = \oslash \mid \texttt{mode} = 1 \wedge \\
&W(i) \neq \texttt{undef} \}\!\!\} \\
\alpha_{n+7} =& \{\!\!\{0 \mid \texttt{mode} = 1 \}\!\!\} \\
\alpha_{n+8} =& \{\!\!\{\texttt{mode} = 1 \mid \texttt{true}\}\!\!\} \\
\alpha_{n+9} =& \{\!\!\{\texttt{true} \mid \texttt{true}\}\!\!\} 
\end{align*}
\endgroup
The set of witness terms additionally includes the terms obtained by replacing $P(i)$ by its negation in each of the witness terms $\alpha_1$,\ldots,$\alpha_{n+1}$ and also the terms obtained by replacing $W(i) \neq \texttt{undef}$ by its negation in each of the witness terms $\alpha_{n+4}$,\ldots,$\alpha_{n+6}$.

\subsection{Alternating Turing Machines}

An {\em alternating Turing machine} is a variation of the nondeterministic Turing machine whose set of control states $Q$ is divided into four classes $Q_\exists$, $Q_\forall$, $Q_{\textit{acc}}$ and $Q_{\textit{rej}}$. Thus there are existential, universal, accepting and rejecting states. The notion of acceptance is defined by induction on the computation tree. The alternating Turing machine in a given configuration $c$ {\em accepts} iff $c$ is in a final accepting state, or $c$ is in an existential state and at least one of its children in the computation tree accepts, or $c$ is in an universal state, it has at least one child in the computation tree, and all its children in the computation tree accept. Note that an alternating machine all of whose non-final control states are existential, is essentially a nondeterministic Turing machine.  

We begin our description of alternating Turing machines as a parallel algorithm which satisfies our postulates, by describing the computation states of the algorithm. They include, in addition to the requirements of the Background postulate, the following:
\begin{itemize}
\item Unary bridge functions $\Gamma$, $Q_\exists$, $Q_\forall$, $Q_{\textit{acc}}$ and $Q_{\textit{rej}}$ which are Boolean valued and evaluate to $\texttt{true}$ only for those elements that are symbols in the tape alphabet (which includes a symbol ``$\_$'' for blank) and control states in the classes of existential, universal, accepting and rejecting states, respectively. 
\item A unary bridge function $\texttt{config}$ which maps nodes in the computation tree to configurations. Elements that do not correspond to nodes in the computation tree are mapped to $\texttt{undef}$. A \emph{configuration} is a string in $\Gamma^* \cdot Q \cdot \Gamma^*$ that specifies the content of the tape, the state, and the position of the head as follows. Let the tape contain symbols $w_n, w_{n+1}, \ldots, w_{m-1}, w_m$ where $w_i \in \Gamma$ is the symbol on the $i$-th position of the tape and $n < m$. Assume that the read/write  head is in position $n \leq j \leq m$ and $n$ and $m$ are such that for all $n' < n$, and $m' > m$, $w_{n} = w_{m} = \_$ (the blank symbol). If the machine is in control state $q$, we denote this configuration by $w_n \ldots w_{j-1}, q, w_j \ldots w_m$. 
\item A unary bridge function $\texttt{active}$ which evaluates to $\texttt{true}$ for an element $v$ if $\texttt{config}(v) \neq \texttt{undef}$ (i.e. $v$ is a node of the computation tree) and $v$ is computing (i.e., has not spawned sub-computations yet), to $\texttt{undef}$ if $\texttt{config}(v) = \texttt{undef}$, and to $\texttt{false}$ otherwise. 
\item A unary bridge function $\texttt{value}$ which evaluates to $\texttt{true}$ for an element $v$ if $v$ is an accepting node in the computation tree, to $\texttt{false}$ if $v$ is a rejecting node, and to $\texttt{undef}$ if the value of $v$ has not been established (by the algorithm) yet or $v$ is not a node.  
\item A bridge function $\delta$ that represents the transition relation $\delta \subseteq Q \times \Gamma \times Q \times \Gamma \times \{L,R\}$ of the alternating Turing Machine.
\item A binary, Boolean valued function $\texttt{parent}(x,y)$ which belongs to the primary (finite) part and evaluates to $\texttt{true}$ iff $x$ is the parent node of $y$ in the computation tree. 
\item Unary functions $\texttt{state}$ and $\texttt{read}$ which belong to the secondary part and map each configuration to the current state and symbol (in the position of the read/write head), respectively.
\item A function $\texttt{nextConf}$ of arity $4$ in the secondary part which maps a given configuration $c$, state $q$, symbol $a$ and movement $m$ to the configuration obtained by replacing the state in $c$ by $q$, the symbol in the current position of the read/write head by $a$, and updating the position of the head one cell to the left ($m = L$) or to the right ($m = R$).  
\end{itemize}

\afterpage{
\begin{algorithm}[htp!]
\caption{Alternating Turing Machines\footnotemark}
\label{ASM:AlternatingTM}
\begin{algorithmic}[0]
\small
\FORALL {$x \textbf{ with } \texttt{config}(x) \neq \texttt{undef}$} 
   \IF {$\texttt{active}(x)$} 
      \STATE {\bf if} {$Q_{\textit{acc}}(\texttt{state}(\texttt{config}(x))) \wedge \texttt{value}(x) = \texttt{undef}$} {\bf then}
      \STATE \quad $\texttt{value}(x) := \texttt{true}$ {\bf endif}
      \STATE {\bf if} {$Q_{\textit{rej}}(\texttt{state}(\texttt{config}(x))) \wedge \texttt{value}(x) = \texttt{undef}$} {\bf then}
      \STATE \quad $\texttt{value}(x) := \texttt{false}$ {\bf endif}
      \IF {$Q_{\exists}(\texttt{state}(\texttt{config}(x))) \vee Q_{\forall}(\texttt{state}(\texttt{config}(x)))$}
         \FORALL {$q,a,m \textbf{ with } \delta(\texttt{state}(\texttt{config}(x)),\texttt{read}(\texttt{config}(x)),q,a,m)$}
            \STATE $\textbf{import } c \textbf{ do}$
            \STATE \quad $\texttt{active}(c) := \texttt{true}$
            \STATE \quad $\texttt{config}(c) := \texttt{nextConfig}(x,q,a,m)$
            \STATE \quad $\texttt{value}(c) := \texttt{undef}$
            \STATE \quad $\texttt{parent}(x,c) := \texttt{true}$
            \STATE $\textbf{enddo}$
         \ENDFOR
         \STATE $\texttt{active}(x) := \texttt{false}$
      \ENDIF
   \ENDIF
   \IF {$\neg\texttt{active}(x)$} 
      \IF {$Q_{\exists}(\texttt{state}(\texttt{config}(x)))$}
         \STATE {\bf if} {$\exists y (\texttt{parent}(x,y) \wedge \texttt{value}(y) = \texttt{true})$} {\bf then} $\texttt{value}(x) := \texttt{true}$ {\bf endif}
         \STATE {\bf if} {$\forall y (\texttt{parent}(x,y) \wedge \texttt{value}(y) = \texttt{false})$} {\bf then} $\texttt{value}(x) := \texttt{false}$ {\bf endif}
      \ENDIF
      \IF {$Q_{\forall}(\texttt{state}(\texttt{config}(x)))$}
         \STATE {\bf if} {$\exists y (\texttt{parent}(x,y)) \wedge \forall y (\texttt{parent}(x,y) \wedge \texttt{value}(y) = \texttt{true})$} {\bf then}
         \STATE \quad $\texttt{value}(x) := \texttt{true}$ {\bf endif}
         \STATE {\bf if} {$\exists y (\texttt{parent}(x,y) \wedge \texttt{value}(y) = \texttt{false})$} {\bf then} $\texttt{value}(x) := \texttt{false}$ {\bf endif}
      \ENDIF
   \ENDIF
\ENDFOR
\end{algorithmic}
\end{algorithm}
\footnotetext{We have intentionally omitted the par blocks so that the algorithm fits in one page. Sequences of instructions at a same level are implicitly assumed to be executed in parallel.}}

The base set of the (finite) primary part of each meta-finite state includes a finite set of control states $Q = Q_\exists \cup Q_\forall \cup Q_{\textit{acc}} \cup Q_{\textit{rej}}$, a finite set of tape symbols $\Gamma$, a finite set of nodes of the computation tree and two distinguished elements $L$ and $R$.  The secondary part includes (in addition to the elements in the primary part) all possible configurations (finite strings) in $\Gamma^* \cdot Q \cdot \Gamma^*$. The initial states are those in which the computation tree is just a single node $v$ which is mapped via \texttt{config} to an initial configuration, $\texttt{value}(v)$ is undefined and $\texttt{active}(v)$ is true. An initial configuration is represented by a string $q_0 \cdot s$ where $s \in (\Gamma - \{\_\})^*$, that is, the control state is $q_0$, the read/write head points to the first position of $s$, and the tape contains $s$ preceded and followed by countably many blank cells.  Apart from $\texttt{active}$, $\texttt{config}$, $\texttt{value}$, $\texttt{parent}$ and $\texttt{reserve}$ (see the Background postulate) which are dynamic functions, all the others are static.

The transition function is defined by Algorithm~\ref{ASM:AlternatingTM}. Note that, during each step of the computation of an alternating Turing machine, new nodes are added to the computation tree. We handle this situation by importing new elements from the reserve following the schema in Section~\ref{Sec:theReserve} and using the construct ``$\textbf{import } x \textbf{ do } P$''. By extending the language with this construct, reserve elements will be chosen for all combinations of the values $x$ such that $x$ is in the scope of any rule having import as a sub-rule. This construct with the corresponding semantics will be made explicit and addressed in full detail elsewhere. In this example we simply assume that every state of computation has a function $\texttt{import}$ which uniformly (through all possible states of computation) maps tuples of values to reserve elements. Thus, the bounded exploration witness is formed by the following witness terms:
\begingroup
\allowdisplaybreaks
\begin{align*} 
\alpha_1 =&  \{\!\!\{ (\texttt{true},x) \mid \texttt{config}(x) \neq \texttt{undef} \wedge \texttt{active}(x) \wedge\\
& Q_{acc}(\texttt{state}(\texttt{config}(x))) \wedge \texttt{value}(x) = \texttt{undef} \}\!\!\} \\ 
\alpha_2 =&  \{\!\!\{ (\texttt{false},x) \mid \texttt{config}(x) \neq \texttt{undef} \wedge \texttt{active}(x) \wedge\\
& Q_{rej}(\texttt{state}(\texttt{config}(x))) \wedge \texttt{value}(x) = \texttt{undef} \}\!\!\} \\ 
\alpha_3 =& \{\!\!\{ (\texttt{true}, \texttt{import}((x,q,a,m))) \mid \texttt{config}(x) \neq \texttt{undef} \wedge \texttt{active}(x) \wedge\\
& (Q_{\exists}(\texttt{state}(\texttt{config}(x))) \vee Q_{\forall}(\texttt{state}(\texttt{config}(x)))) \wedge \\
& \delta(\texttt{state}(\texttt{config}(x)),\texttt{read}(\texttt{config}(x)), q, a, m)\}\!\!\} \\ 
\alpha_4 =& \{\!\!\{ (\texttt{nextConfig}(x,q,a,m), \texttt{import}((x,q,a,m))) \mid \texttt{config}(x) \neq \texttt{undef} \wedge\\
& \texttt{active}(x) \wedge (Q_{\exists}(\texttt{state}(\texttt{config}(x))) \vee Q_{\forall}(\texttt{state}(\texttt{config}(x)))) \wedge \\
& \delta(\texttt{state}(\texttt{config}(x)),\texttt{read}(\texttt{config}(x)), q, a, m)\}\!\!\} \\ 
\alpha_5 =& \{\!\!\{ (\texttt{undef}, \texttt{import}((x,q,a,m))) \mid \texttt{config}(x) \neq \texttt{undef} \wedge \texttt{active}(x) \wedge\\
& (Q_{\exists}(\texttt{state}(\texttt{config}(x))) \vee Q_{\forall}(\texttt{state}(\texttt{config}(x)))) \wedge \\
& \delta(\texttt{state}(\texttt{config}(x)),\texttt{read}(\texttt{config}(x)), q, a, m)\}\!\!\} \\ 
\alpha_6 =& \{\!\!\{ (\texttt{true}, x, \texttt{import}((x,q,a,m))) \mid \texttt{config}(x) \neq \texttt{undef} \wedge \texttt{active}(x) \wedge\\
& (Q_{\exists}(\texttt{state}(\texttt{config}(x))) \vee Q_{\forall}(\texttt{state}(\texttt{config}(x)))) \wedge \\
& \delta(\texttt{state}(\texttt{config}(x)),\texttt{read}(\texttt{config}(x)), q, a, m)\}\!\!\} \\ 
\alpha_7 =& \{\!\!\{ (\texttt{false}, x) \mid \texttt{config}(x) \neq \texttt{undef} \wedge \texttt{active}(x) \wedge\\
& (Q_{\exists}(\texttt{state}(\texttt{config}(x))) \vee Q_{\forall}(\texttt{state}(\texttt{config}(x))))\}\!\!\} \\ 
\alpha_8 =& \{\!\!\{ (\texttt{true},x) \mid \texttt{config}(x) \neq \texttt{undef} \wedge \neg \texttt{active}(x) \wedge\\
& Q_{\exists}(\texttt{state}(\texttt{config}(x))) \wedge \exists y (\texttt{parent}(x,y) \wedge \texttt{value}(y) = \texttt{true})\}\!\!\} \\ 
\alpha_9 =& \{\!\!\{ (\texttt{false},x) \mid \texttt{config}(x) \neq \texttt{undef} \wedge \neg \texttt{active}(x) \wedge\\
& Q_{\exists}(\texttt{state}(\texttt{config}(x))) \wedge \forall y (\texttt{parent}(x,y) \wedge \texttt{value}(y) = \texttt{false})\}\!\!\} \\ 
\alpha_{10} =& \{\!\!\{ (\texttt{true},x) \mid \texttt{config}(x) \neq \texttt{undef} \wedge \neg \texttt{active}(x) \wedge\\
& Q_{\forall}(\texttt{state}(\texttt{config}(x))) \wedge \exists y (\texttt{parent}(x, y)) \wedge\\
& \forall y (\texttt{parent}(x,y) \wedge \texttt{value}(y) = \texttt{true})\}\!\!\} \\ 
\alpha_{11} =& \{\!\!\{ (\texttt{false},x) \mid \texttt{config}(x) \neq \texttt{undef} \wedge \neg \texttt{active}(x) \wedge\\
& Q_{\forall}(\texttt{state}(\texttt{config}(x))) \wedge \exists y (\texttt{parent}(x,y) \wedge \texttt{value}(y) = \texttt{false})\}\!\!\} \\
\alpha_{12}=&\{\!\!\{ \texttt{active}(x) \mid \texttt{config}(x) \neq \texttt{undef}\}\!\!\} \\ 
\alpha_{13}=&\{\!\!\{ \texttt{true} \mid \texttt{active}(x) \wedge \texttt{config}(x) \neq \texttt{undef}\}\!\!\} \\
\alpha_{14}=&\{\!\!\{ Q_{acc}(\texttt{state}(\texttt{config}(x))) \wedge \texttt{value}(x) = \texttt{undef} \mid \texttt{active}(x) \wedge \\
&\texttt{config}(x) \neq \texttt{undef}\}\!\!\} \\
\alpha_{15}=&\{\!\!\{ Q_{rej}(\texttt{state}(\texttt{config}(x))) \wedge \texttt{value}(x) = \texttt{undef} \mid \texttt{active}(x) \wedge \\
&\texttt{config}(x) \neq \texttt{undef}\}\!\!\} \\
\alpha_{16}=&\{\!\!\{ Q_{\exists}(\texttt{state}(\texttt{config}(x))) \vee Q_{\forall}(\texttt{state}(\texttt{config}(x)))) \mid \texttt{active}(x)\wedge\\
& \texttt{config}(x) \neq \texttt{undef}\}\!\!\} \\ 
\alpha_{17}=&\{\!\!\{ \neg \texttt{active}(x) \mid \texttt{config}(x) \neq \texttt{undef}\}\!\!\} \\
\alpha_{18}=&\{\!\!\{ Q_{\exists}(\texttt{state}(\texttt{config}(x)) \mid \neg \texttt{active}(x) \wedge \texttt{config}(x) \neq \texttt{undef}\}\!\!\} \\
\alpha_{19}=&\{\!\!\{ Q_{\forall}(\texttt{state}(\texttt{config}(x)) \mid \neg \texttt{active}(x) \wedge \texttt{config}(x) \neq \texttt{undef}\}\!\!\} \\
\alpha_{20}=&\{\!\!\{ \exists y (\texttt{parent}(x, y)) \wedge \texttt{value}(y) = \texttt{true}) \mid Q_{\exists}(\texttt{state}(\texttt{config}(x)) \wedge \\ 
&\neg \texttt{active}(x) \wedge \texttt{config}(x) \neq \texttt{undef}\}\!\!\} \\
\alpha_{21}=&\{\!\!\{ \forall y (\texttt{parent}(x, y)) \wedge \texttt{value}(y) = \texttt{false})\mid Q_{\exists}(\texttt{state}(\texttt{config}(x)) \wedge \\
&\neg \texttt{active}(x) \wedge \texttt{config}(x) \neq \texttt{undef}\}\!\!\} \\
\alpha_{22}=&\{\!\!\{ \exists y (\texttt{parent}(x, y)) \wedge \forall y (\texttt{parent}(x,y) \wedge \texttt{value}(y) = \texttt{true} \mid \\
&Q_{\forall}(\texttt{state}(\texttt{config}(x)) \wedge \neg \texttt{active}(x) \wedge \texttt{config}(x) \neq \texttt{undef}\}\!\!\} \\
\alpha_{23}=&\{\!\!\{ \exists y (\texttt{parent}(x, y) \wedge \texttt{value}(y) = \texttt{false}) \mid \\
&Q_{\forall}(\texttt{state}(\texttt{config}(x)) \wedge \neg \texttt{active}(x) \wedge \texttt{config}(x) \neq \texttt{undef}\}\!\!\} \\
\alpha_{24}=&\{\!\!\{ \texttt{true} \mid \texttt{true}\}\!\!\} 
\end{align*}
\endgroup
The set of witness terms additionally includes the terms $\alpha_3'$--$\alpha_6'$ obtained by replacing $\delta(state(config(x)), read(config(x)), q, a, m)$ by its negation in  $\alpha_3$--$\alpha_6$  and also terms obtained by replacing $config(x) \neq undef$ by its negation in each of the witness terms $\alpha_1$,\ldots,$\alpha_{23}$ and $\alpha_3'$--$\alpha_6'$.  

\subsection{First-Order Logic}

Another model for parallel computing, which was shown to satisfy the postulates in the original ASM thesis for parallel algorithms~\cite{[BG03]}, is  the evaluation of a given first-order sentence $\Phi$ on a finite input. To simplify the presentation, we consider here only structures of purely relational vocabulary. 

Let $\{D,R_1,\ldots,R_n\}$ be the finite input structure in which $\Phi$ is to be evaluated, where $D$ is the domain and $R_1,\ldots,R_n$ are relations over $D$. We consider the state of computations to include, in addition to the requirements of the Background postulate, the following functions:

\begin{itemize}
\item A unary relation $\texttt{subForm}$, which belongs to the primary part and is formed by the set 
\[\{(\varphi(\bar{x}),(\bar{a})) \mid \varphi(\bar{x}) \; \text{is a subformula of} \; \Phi \; \text{or} \; \Phi \; \text{itself, and} \; |\bar{a}| \in D^{|\textit{free}(\varphi)|}\}.\] 
\item A unary, Boolean valued bridge function $\texttt{eval}$, such that \[\texttt{eval}((R_i(\bar{x}),(\bar{a}))) = \texttt{true} \; \text{iff} \; R_i(\bar{a})=\texttt{true}.\]
\item A binary relation $\texttt{superForm}$ which belongs to the primary part and is formed by the set of pairs of the form $((\varphi(x_1,\ldots,x_n),(a_1,\ldots,a_n)),(\psi(x_1,$\\
$\ldots,x_n,x_{n+1},\ldots,x_{n+m}),(a_1,\ldots,a_n,$
$a_{n+1},\ldots,a_{n+m})))$ such that $n, m \ge 0$ and $(\psi(x_1,\ldots,x_n,x_{n+1},\ldots,x_{n+m}),(a_1,\ldots,a_n,a_{n+1},\ldots,a_{n+m}))$ is a formula corresponding to a first-level node in the syntactic tree of $\varphi(x_1,\ldots,x_n)$ and $(a_1,\ldots,a_n,a_{n+1},\ldots,a_{n+m})\in D^{n+m}$. 
\item A unary, dynamic bridge function $\texttt{truthVal}$ which maps a formula/tuple pair to its corresponding truth value.
\item A unary, Boolean valued function $\texttt{Atomic}$ which belongs to the primary (finite) part and evaluates to $\texttt{true}$ iff its argument corresponds to an atomic formula.
\item A unary, static bridge function $\texttt{mainConnect}$ which belongs to the primary part and maps a formula to its main connective $\land$, $\lor$ or $\neg$. 
\item A unary, static function $\texttt{mainQuant}$ which belongs to the primary part and maps a formula to its main quantifier $\exists$ or $\forall$.
 \end{itemize}

Thus, the base set of the primary part of each meta-finite state includes:
\begin{itemize}
\item A Boolean valued function $D(a)$ which evaluates to $\texttt{true}$ if $a$ is an element in the domain $D$. 
\item Boolean valued functions $R_1,\ldots,R_n$ representing the corresponding relations over $D$. 
\item The set $U = \{(a_1,\ldots,a_n) \mid n \le |\textit{Var}(\Phi)| \land D(a_1),\ldots,D(a_n)\}$, where $\textit{Var}(\Phi)$ denotes the set of all variables occurring in $\Phi$.
\item The set $A = \{\neg,\land,\lor,\exists,\forall,(,)\} \cup \{x_i \mid x_i \in \textit{Var}(\Phi)\}$.
\end{itemize}

\begin{algorithm}[htp!]
\caption{First Order Logic}
\label{ASM:FOL}
\begin{algorithmic}[0]
\small
\FORALL {$x \textbf{ with }  \texttt{subForm}(x)$} 
\IF{$\texttt{truthVal}(x) = \texttt{undef}$} 
\IF{$\texttt{Atomic}(x)$}
\STATE{$\quad\;\texttt{truthVal}(x) := \texttt{eval}(x)$}
\ENDIF
\IF{$\neg\texttt{Atomic}(x)$}
\IF{$\forall x' (\texttt{superForm}(x,x') \Rightarrow \texttt{truthVal}(x') \neq \texttt{undef})$}
\IF{$\texttt{mainConnect}(x) =\land$}
\STATE{$\texttt{truthVal}(x) := $}
\STATE{$\qquad\qquad\multiset{y \mid \texttt{superForm}(x,y) \land \texttt{truthVal}(y)=\texttt{false}}_x = \oslash$}
\ENDIF
\IF{$\texttt{mainConnect}(x) =\lor$}
\STATE{$\texttt{truthVal}(x) := \multiset{y \mid \texttt{superForm}(x,y) \land \texttt{truthVal}(y)=\texttt{true}}_x \neq \oslash$}
\ENDIF
\IF{$\texttt{mainConnect}(x) =\neg$}
\STATE{$\texttt{truthVal}(x) := \multiset{y \mid \texttt{superForm}(x,y) \land \texttt{truthVal}(y)= \texttt{true}}_x \neq \oslash$}
\ENDIF
\IF{$\texttt{mainQuant}(x) = \exists$}
\STATE{$\texttt{truthVal}(x) := \multiset{y \mid \texttt{superForm}(x,y) \land \texttt{truthVal}(y) =\texttt{true}}_{x} \neq \oslash$}
\ENDIF
\IF{$\texttt{mainQuant}(x) = \forall$}
\STATE{$\texttt{truthVal}(x) := \multiset{y \mid \texttt{superForm}(x,y) \land \texttt{truthVal}(y) \neq \texttt{true}}_{x} = \oslash$}
\ENDIF
\ENDIF
\ENDIF
\ENDIF
\ENDFOR
\end{algorithmic}
\end{algorithm}

The base set of the secondary part includes only the elements specified by the Background postulate. The initial states are those in which $\texttt{truthVal}(x)=\texttt{undef}$ for all $x$ such that $\texttt{subForm}(x)$. The transition function described by the ASM in Algorithm~\ref{ASM:FOL}, together with the states described above, show that the first-order logic as a parallel model of computation satisfies the Sequential Time, Abstract State, and Background postulates. The following witness terms obtained by the construction described in Section~\ref{sec:asmdef} show that it also satisfies the Bounded Exploration postulate.
\begingroup
\allowdisplaybreaks
\begin{align*}
\alpha_1 = &\multiset{(\texttt{eval}(x),x) \mid \texttt{Atomic}(x) \wedge \texttt{truthVal}(x)=\texttt{undef} \wedge \texttt{subForm}(x)}\\
\alpha_2 = &\multiset{\texttt{Atomic}(x) \mid \texttt{truthVal}(x)=\texttt{undef} \wedge \texttt{subForm}(x)}\\
\alpha_3 = &\multiset{\texttt{truthVal}(x)=\texttt{undef} \mid \texttt{subForm}(x)}\\
\alpha_{4}= &\{\!\!\{(\multiset{y \mid \texttt{superForm}(x,y) \land \texttt{truthVal}(y) =\texttt{false}}_{x}=\oslash,x)\mid \\
&\forall x'(\texttt{superForm}(x,x') \Rightarrow \texttt{truthVal}(x') \neq \texttt{undef}) \wedge  \neg\texttt{Atomic}(x) \wedge \\
&\texttt{truthVal}(x)=\texttt{undef} \wedge \texttt{subForm}(x)\wedge\texttt{mainConnect}(x)=\land\}\!\!\}\\
\alpha_{5}= &\{\!\!\{\texttt{mainConnect}(x)=\land \mid \forall x' (\texttt{superForm}(x,x') \Rightarrow \texttt{truthVal}(x') \neq  \\
&\texttt{undef})\wedge \neg\texttt{Atomic}(x) \wedge \texttt{truthVal}(x)=\texttt{undef} \wedge \texttt{subForm}(x)\}\!\!\}\\
\alpha_{6}= &\{\!\!\{(\multiset{y \mid \texttt{superForm}(x,y) \land \texttt{truthVal}(y) =\texttt{true}}_{x}\neq\oslash,x)\mid \\
&\forall x' (\texttt{superForm}(x,x') \Rightarrow \texttt{truthVal}(x') \neq \texttt{undef}) \wedge \neg\texttt{Atomic}(x) \wedge \\
&\texttt{truthVal}(x)=\texttt{undef} \wedge \texttt{subForm}(x)\wedge\texttt{mainConnect}(x)=\lor\}\!\!\}\\
\alpha_{7}= &\{\!\!\{\texttt{mainConnect}(x)=\lor \mid \forall x' (\texttt{superForm}(x,x') \Rightarrow \texttt{truthVal}(x') \neq\\
& \texttt{undef}) \wedge \neg\texttt{Atomic}(x) \wedge \texttt{truthVal}(x)=\texttt{undef} \wedge \texttt{subForm}(x)\}\!\!\}\\
\alpha_{8}= &\{\!\!\{(\multiset{y \mid \texttt{superForm}(x,y) \land \texttt{truthVal}(y) =\texttt{true}}_{x}\neq\oslash,x)\mid \\
&\forall x' (\texttt{superForm}(x,x') \Rightarrow \texttt{truthVal}(x') \neq \texttt{undef}) \wedge \neg\texttt{Atomic}(x) \wedge \\
&\texttt{truthVal}(x)=\texttt{undef} \wedge \texttt{subForm}(x) \wedge \texttt{mainConnect}(x)=\neg\}\!\!\}\\
\alpha_{9}= &\{\!\!\{\texttt{mainConnect}(x)=\neg \mid \forall x' (\texttt{superForm}(x,x') \Rightarrow \texttt{truthVal}(x') \neq \\
&\texttt{undef})\wedge \neg\texttt{Atomic}(x) \wedge \texttt{truthVal}(x)=\texttt{undef} \wedge \texttt{subForm}(x)\}\!\!\}\\
\alpha_{10}= &\{\!\!\{(\multiset{y \mid \texttt{superForm}(x,y) \land \texttt{truthVal}(y) =\texttt{true}}_{x}\neq\oslash,x)\mid \\
&\forall x' (\texttt{superForm}(x,x') \Rightarrow \texttt{truthVal}(x') \neq \texttt{undef}) \wedge \neg\texttt{Atomic}(x) \wedge \\
& \texttt{truthVal}(x)=\texttt{undef} \wedge \texttt{subForm}(x) \wedge \texttt{mainQuant}(x)=\exists \}\!\!\}\\
\alpha_{11}= &\{\!\!\{\texttt{mainQuant}(x)=\exists \mid \forall x' (\texttt{superForm}(x,x') \Rightarrow \texttt{truthVal}(x') \neq \texttt{undef}) \\
&\wedge \neg\texttt{Atomic}(x) \wedge \texttt{truthVal}(x)=\texttt{undef} \wedge \texttt{subForm}(x)\}\!\!\}\\
\alpha_{12}= &\{\!\!\{(\multiset{y \mid \texttt{superForm}(x,y) \land \texttt{truthVal}(y) =\texttt{true}}_{x}\neq\oslash,x)\mid \\
&\forall x' (\texttt{superForm}(x,x') \Rightarrow \texttt{truthVal}(x') \neq \texttt{undef}) \wedge \neg\texttt{Atomic}(x) \wedge \\
& \texttt{truthVal}(x)=\texttt{undef} \wedge \texttt{subForm}(x) \wedge \texttt{mainQuant}(x)=\forall\}\!\!\}\\
\alpha_{13}= &\{\!\!\{\texttt{mainQuant}(x)=\forall \mid \forall x' (\texttt{superForm}(x,x') \Rightarrow \texttt{truthVal}(x') \neq \texttt{undef})\\
&\wedge \neg\texttt{Atomic}(x) \wedge \texttt{truthVal}(x)=\texttt{undef} \wedge \texttt{subForm}(x)\}\!\!\}\\
\alpha_{14}= &\{\!\!\{\forall x' (\texttt{superForm}(x,x') \Rightarrow \texttt{truthVal}(x') \neq \texttt{undef}) \mid \neg\texttt{Atomic}(x) \wedge \\
& \texttt{truthVal}(x)=\texttt{undef} \wedge \texttt{subForm}(x)\}\!\!\}\\
\alpha_{15}= &\multiset{\neg\texttt{Atomic}(x) \mid \texttt{truthVal}(x)=\texttt{undef} \wedge \texttt{subForm}(x)}\\
\alpha_{16}= &\multiset{\texttt{truthVal}(x)=\texttt{undef} \mid \texttt{subForm}(x)}\\
\alpha_{17}= &\multiset{\texttt{true} \mid \texttt{true}}
\end{align*}
\endgroup

The set of witness terms additionally includes the terms $\alpha_1'-\alpha_{16}'$ obtained by replacing $\texttt{subForm}(x)$ by its negation in each of the witness terms $\alpha_1-\alpha_{16}$.

\subsubsection{Breadth-First Search with MapReduce.}

We next show that another quite common and general approach to parallelism such as MapReduce, fits our description. More specifically, we show through an example how the MapReduce approach (developed at Google~\cite{[Dean08]}) can be simulated by a parallel ASM which satisfies our postulates. We note that other familiar approaches to parallelism such as parallel random access machines, circuits, alternating Turing machines, first-order and fixed-point logic, can also be shown to fit our axiomatic characterization of parallel algorithms. But given the space restriction we leave this task as an exercise for the reader.

The core functions of MapReduce consist of \emph{map}, \emph{shuffle}, and \emph{reduce} functions, which process $(\text{key},\text{value})$ pairs as follows~\cite{[Dean07]}:
\begin{itemize}
\item \emph{Map function} takes $(\text{key},\text{value})$ pairs as input and produces as output a set of intermediate
$(\text{key},\text{value})$ pairs.
\item \emph{Shuffle function} takes the output set of intermediate $(\text{key},\text{value})$ pairs of the map function and produces a set of pairs of unique intermediate key and set of values associated with it.
\item \emph{Reduce function} takes the sorted output of the shuffle and sort function and performs reduction or merge operation on these sets of values to produce zero or one output value per unique key.
\end{itemize}

 Let us consider an undirected graph with a set of nodes $V$, a distinguished source node $s$ and an adjacency list for each node in $V$. The parallel breadth-first search algorithm (BFS) explores the nodes of the graph reachable from $s$. To keep track of progress, BFS colours every node \emph{white}, \emph{grey} or \emph{black}. Before the start of the traversal, the source node $s$ is coloured \emph{grey}. Each other node is coloured \emph{white}. Nodes become \emph{grey}, once they are visited, and later \emph{black}, when all its adjacent nodes have been visited. The process continues until there are no more grey nodes to process in the graph.
We simulate via the following ASM rule the transition function corresponding to an implementation of the BFS algorithm in the MapReduce framework:
\afterpage{
\begin{algorithm}[htp!]
\caption{Breadth-First Search with MapReduce\footnotemark}
\label{ASM:MapReduce}
\begin{algorithmic}[0]
\small
\IF{$\textit{phase}=\textit{map}$} 
\FORALL{$x \; { \bf with } \; f_V(x)$} 
 \IF{$ \neg(\textit{colour}(x)=\mathit{grey})$}
            \STATE{$\textit{mapout}(x, \textit{colour}(x)) := \texttt{true}$}
	\ENDIF					
  \IF {$\textit{colour}(x)=\textit{grey}$}
  \STATE{${\bf for all } \; y \; { \bf with} \; \textit{listedIn}(y, \textit{neighb}(x)) \; { \bf do}$}
\STATE{$\quad \textit{mapout}(y, \textit{grey}) := \texttt{true}$}
\STATE{${\bf end \; for}$}
\STATE{$\textit{mapout}(x, \textit{black}) := \texttt{true}$}
\ENDIF
\ENDFOR
\STATE{$\textit{phase}:=\textit{shuffle}$}
\ENDIF
\IF{$\textit{phase}=\textit{shuffle}$} 
\STATE{${\bf for \; all} \;  x \; {\bf with} \; \exists{y}(\textit{mapout}(x,y)) \; {\bf do}$} 
\STATE{$\quad \textit{valuesOf}(x):=\multiset{y\mid{\textit{mapout}(x,y)}}_{x}$}
\STATE{$\quad {\bf for \; all} \; y \; {\bf with} \; \textit{mapout}(x,y) \; {\bf do}$}
 \STATE{$\quad\quad\textit{mapout}(x,y):=\texttt{false}$}
\STATE{$\quad {\bf end \; for}$}
\STATE{${\bf end \; for}$}
 \STATE{$\textit{phase}:=\textit{reduce}$}
\ENDIF
\IF{$\textit{phase}=\textit{reduce}$} 
\STATE{${\bf for \; all} \; x \; {\bf with} \; \exists{y}(\textit{valuesOf}(x)=y) \; {\bf do}$}
\STATE{$\quad colour(x):=darkest(\textit{valuesOf}(x))$}
\STATE{$\quad valuesOf(x):=\oslash$}
\STATE{${\bf end \; for}$}
\STATE{$phase:=\textit{map}$}
\ENDIF
\end{algorithmic}
\end{algorithm}
\footnotetext{We have intentionally omitted the par blocks so that the algorithm fits in one page. Sequences of instructions at a same level are implicitly assumed to be executed in parallel.}}

We consider the (metafinite) states of computation to include, in addition to the requirements of the Background postulate, the following functions:
\begin{itemize}
\item Static, nullary functions (constants) $\textit{white}$, $\textit{grey}$, $\textit{black}$, $\textit{map}$, $\textit{shuffle}$ and $\mathit{reduce}$ in the primary part which evaluate to pairwise different elements.
\item A static, unary, Boolean valued function $f_V$ in the primary part such that $f_V(x) = \texttt{true}$ iff $x$ is a node in $V$.
\item A static, unary, function $\textit{neighb}$ in the primary part such that $\textit{neighb}(x) = y$ iff $y$ is the adjacency list of the node $x$.
\item A static, Boolean valued function  $\textit{listedIn}$ in the primary part such that $\textit{listedIn}(x,y)$ iff $x$ is in the adjacency list $y$.
\item A dynamic, unary function $\textit{colour}$ in the primary part which maps each node to a colour in $\{\textit{white}$, $\textit{grey}$, $\textit{black}\}$.
\item A dynamic, nullary function $\textit{phase}$ in the primary part which evaluates to either $\textit{map}$, $\textit{shuffle}$ or $\mathit{reduce}$.
\item A dynamic, bridge, Boolean valued function $\textit{mapout}(x,y)$ which evaluates to $\texttt{true}$ iff the node $x$ is mapped out by the algorithm to the colour $y$.
\item A dynamic, bridge, unary function $\textit{valuesOf}$ which maps each node to a multiset with underlying set $\{\textit{white}$, $\textit{grey}$, $\textit{black}\}$.
\item A static, Boolean valued function $\textit{listedIn}(x,y)$ in the secondary part which evaluates to $\texttt{true}$ iff $\textrm{Mult}(x,y) \geq 1$.
\item A static, unary function $\textit{darkest}$ in the secondary part which evaluates to the darkest colour in a multiset with underlying set $\{\textit{white}$, $\textit{grey}$, $\textit{black}\}$.
\end{itemize}
Thus, the base set of the (finite) primary part of a state of computation contains the nodes in $V$, the constants $\textit{white}$, $\textit{grey}$, $\textit{black}$, $\textit{map}$, $\textit{shuffle}$, $\mathit{reduce}$ and the adjacency list of every node $n$ in $V$. The base set of the secondary part includes only the elements specified by the Background postulate. The initial states are those in which the following conditions hold: $\mathit{phase} = \mathit{map}$; there is exactly one node $s$ (the source node) in $V$ such that $\textit{colour}(s) = \textit{grey}$; for all nodes $n_i$ in $V$ with $n_i \neq s$, it holds that $\textit{colour}(n_i) = \textit{white}$; $\mathit{mapout}(x,y) = \texttt{false}$ for every pair of elements in the primary part; and $\textit{valuesOf}(x)=\oslash$ for every element in the primary part. We consider every multiset $x$ with underlying set $\{\textit{white}$, $\textit{grey}$, $\textit{black}\}$ as part of the base set of every state of the algorithm. Thus they are atomic elements in every computation state.

The following witness terms form a bounded exploration witness for the parallel algorithm presented in this section. They were obtained by following the construction described in Section~\ref{sec:asmdef} and then deleting, for the sake of presentation, the superfluous conjunctions with $\texttt{true}$.

\begingroup
\allowdisplaybreaks
\begin{align*}
\alpha_1 = &\multiset{(\texttt{true},x,\mathit{black}) \mid \mathit{colour}(x)= \textit{grey} \wedge f_V(x) \wedge \mathit{phase} = \mathit{map}}\\
\alpha_2 = &\{\!\!\{(\texttt{true},y,\mathit{grey}) \mid \mathit{listedIn}(y,\mathit{neighb}(x)) \wedge \mathit{colour}(x)= \textit{grey} \wedge f_V(x) \wedge \\
&\mathit{phase} = \mathit{map}\}\!\!\}\\
\alpha_3 = &\{\!\!\{(\texttt{true},y,\mathit{grey}) \mid \neg\mathit{listedIn}(y,\mathit{neighb}(x)) \wedge \mathit{colour}(x)= \textit{grey} \wedge f_V(x) \wedge \\
&\mathit{phase} = \mathit{map}\}\!\!\}\\
\alpha_4 = &\multiset{(\texttt{true},x,\mathit{colour}(x)) \mid \neg(\mathit{colour}(x) = \mathit{grey}) \wedge f_V(x) \wedge \mathit{phase} = \mathit{map}}\\
\alpha_5 = &\multiset{\neg(\mathit{colour}(x) = \mathit{grey}) \mid f_V(x) \wedge \mathit{phase} = \mathit{map}}\\
\alpha_6 = &\multiset{\mathit{colour}(x) = \mathit{grey} \mid f_V(x) \wedge \mathit{phase} = \mathit{map}}\\
\alpha_7 = &\multiset{\mathit{shuffle} \mid \mathit{phase} = \mathit{map}}\\
\alpha_8 = &\multiset{\mathit{map} \mid \texttt{true}}\\
\alpha_9 = &\multiset{(\multiset{y \mid \mathit{mapout}(x, y)}_x, x) \mid \exists y (\mathit{mapout}(x,y)) \wedge \mathit{phase} = \mathit{shuffle}}\\
\alpha_{10} = &\multiset{(\texttt{false}, x, y) \mid \mathit{mapout}(x,y) \wedge \exists y (\mathit{mapout}(x,y)) \wedge \mathit{phase} = \mathit{suffle}}\\
\alpha_{11} = &\multiset{\mathit{shuffle} \mid \texttt{true}}\\
\alpha_{12} = &\multiset{\mathit{reduce} \mid \mathit{phase} =\mathit{shuffle}}\\
\alpha_{13} = &\multiset{(\mathit{darkest}(\mathit{valuesOf}(x)), x) \mid \exists y (\mathit{valuesOf}(x) = y) \wedge \mathit{phase} = \mathit{reduce}}\\
\alpha_{14} = &\multiset{(\oslash, x) \mid \exists y (\mathit{valuesOf}(x) = y) \wedge \mathit{phase} = \mathit{reduce}}\\
\alpha_{15} = &\multiset{\mathit{reduce} \mid \texttt{true}}\\
\alpha_{16} = &\multiset{\mathit{map} \mid \mathit{phase} = \mathit{reduce}}\\
\alpha_{17} = &\multiset{\texttt{true} \mid \texttt{true}}
\end{align*}
\endgroup

The set of witness terms additionally includes the terms $\alpha_1'-\alpha_6'$ obtained by replacing $f_V(x)$ by its negation in $\alpha_1-\alpha_6$, the term $\alpha_9'$ obtained by replacing $\exists y (\mathit{mapout}(x,y))$ by its negation in $\alpha_9$ and also the terms $\alpha_{13}'$, $\alpha_{14}'$ obtained by replacing $\exists y (\mathit{valuesOf}(x) = y)$ by its negation in each of the terms $\alpha_{13}, \alpha_{14}$. 
\section{The Characterization Theorem}
\label{sec:characterizationthm}

This section is devoted to prove the key characterization theorem of our parallel ASM thesis, which states that for every parallel algorithm there is a behaviourally equivalent parallel ASM. We start by defining the necessary concepts and proving the central lemmata. Our proof follows the same schema as the proof of the characterization theorem for sequential algorithms in \cite{[Gurevich00]}. Of course, from a technical perspective the proof is considerably more challenging since we have to deal with witness sets formed by multiset comprehension terms instead of simple ground terms. 

Throughout this section we fix a bounded exploration witness $W$ for a parallel algorithm $A$, and without loss of generality we assume that $W$ is closed under subterms in the following sense: if $\multiset{t \mid \varphi} \in W$, then also $\multiset{t ^\prime \mid \exists x_1 ,\dots, x_k \, \varphi} \in W$, where $t^\prime$ is a subterm of $t$ and $\{ x_1 ,\dots, x_k \} = \textit{free}(\varphi) - \textit(t^\prime)$.

\subsection{Critical Structures}

A first consequence of the postulates, more specifically of the Bounded Exploration postulate, is that the values that appear in the updates that have to be made to a state ${\bf I}$ in order to obtain the successor state $\tau_A({\bf I})$, are restricted to those values which can be accessed through witness terms. 

\begin{definition}[Critical Values]

Let $W$ be a bounded exploration witness set for a parallel algorithm $A$ and let ${\bf S}$ be a state of computation of $A$. We define the set $V_{{\bf S},W}$ of \emph{critical values} of ${\bf S}$ w.r.t $W$ as $\bigcup_{\alpha_i \in W} V_{{\bf S},{\alpha_i}}$ where 
\[V_{{\bf S},{\alpha_i}} = \{a_i \mid \bar{a} \in \text{val}_{\bf S}(\alpha_i) \text{ and } a_i \text{ occurs in } \bar{a}\}.\]

\end{definition}

A {\em critical tuple} is a tuple $(a_0, \dots, a_r)$, where each $a_i$ is a critical value.

\begin{lemma}\label{criticalelements}

Let $\bf I$ be a state of a parallel algorithm $A$ and let ${\bf S}$ be a corresponding state of computation. If $(f,(a_1,\ldots,a_r),a_0)$ is an update in $\tau_A({\bf I}) - {\bf I}$ and $W$ is a parallel exploration witness for $A$, then $(a_0, a_1, \ldots, a_r)$ is a critical tuple in $(V_{{\bf S},W})^{r+1}$.

\end{lemma}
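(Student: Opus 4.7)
The plan is to argue by contradiction, using a swap-and-isomorphism strategy that is the parallel analogue of Gurevich's argument in the sequential ASM thesis. Suppose, towards a contradiction, that some coordinate $a_j$ (for some $j \in \{0,1,\ldots,r\}$) of the tuple $(a_0,a_1,\ldots,a_r)$ is \emph{not} a critical value of ${\bf S}$ with respect to $W$. Then by the very definition of $V_{{\bf S},W}$, the element $a_j$ never occurs inside any tuple of $\text{val}_{\bf S}(\alpha_i)$ for any witness term $\alpha_i \in W$. The strategy is to exploit this invisibility of $a_j$ to produce another state whose computation state coincides with ${\bf S}$ on $W$ but whose induced update set must differ, contradicting the (Parallel) Bounded Exploration postulate.

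Concretely, the next step is to build an isomorphism $\zeta$ from ${\bf I}$ onto a state ${\bf I}^\prime$ that (i) fixes every critical value of ${\bf S}$, and (ii) moves $a_j$ to some element $b \neq a_j$. One picks $b$ to be a fresh reserve element of the appropriate sort (ensuring $b$ lies in the same structural stratum as $a_j$, i.e., in $I_1$ or in the secondary part, as the case requires) and defines $\zeta$ as the transposition $(a_j\; b)$; by Definition~\ref{Def:theReserve} reserve elements are entirely unstructured with respect to $\Sigma_2$, so the transposition extends to a genuine $\Sigma$-isomorphism ${\bf I} \to {\bf I}^\prime := \zeta({\bf I})$. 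The Abstract State postulate guarantees that ${\bf I}^\prime \in \mathcal{S}_A$. Because every element appearing in $\text{val}_{\bf S}(\alpha_i)$ is critical and hence fixed by $\zeta$, the induced states of computation ${\bf S}$ and ${\bf S}^\prime$ coincide on $W$. The Bounded Exploration postulate then yields $\tau_A({\bf I})-{\bf I} = \tau_A({\bf I}^\prime)-{\bf I}^\prime$, while Lemma~\ref{lemma:IsoExtendToUpdateSets} gives $\tau_A({\bf I}^\prime)-{\bf I}^\prime = \zeta(\tau_A({\bf I})-{\bf I})$. Consequently the update set is $\zeta$-invariant, so $\zeta(u) \in \tau_A({\bf I})-{\bf I}$, where $u = ((f,(a_1,\ldots,a_r)),a_0)$.

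The final step is to extract a contradiction from the membership of both $u$ and $\zeta(u)$ in $\tau_A({\bf I})-{\bf I}$. If $j=0$ and all of $a_1,\ldots,a_r$ are critical, then $\zeta(u)$ and $u$ share the same location $(f,(a_1,\ldots,a_r))$ but carry distinct values $a_0$ and $b$; this is a direct clash, contradicting the consistency of $\tau_A({\bf I})-{\bf I}$. In the remaining cases some $a_j$ with $j\geq 1$ is non-critical: one iterates the swap argument through a countable family of pairwise distinct fresh reserve elements $b_1,b_2,\ldots$ (any two of which can be used simultaneously, provided one transposes only non-critical positions), producing infinitely many distinct updates of the form $((f,(\ldots,b_k,\ldots)),\zeta(a_0))$ in $\tau_A({\bf I})-{\bf I}$. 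This contradicts the finiteness of the update set per step, which follows from the meta-finiteness of states together with the fact that $W$ is a \emph{finite} set of multiset comprehension terms, each evaluating over the finite primary part ${\bf I}_1$ to a finite multiset. The main obstacle I anticipate is precisely this last step: justifying rigorously, from the four postulates alone, that the non-trivial updates in $\tau_A({\bf I})-{\bf I}$ form a finite (or at least bounded-cardinality) set, and simultaneously ensuring that the chosen transposition $\zeta$ is a bona fide $\Sigma$-isomorphism when $a_j$ sits in the primary part (where one must verify the existence of a structurally equivalent $b$ outside the critical values and outside the location of $u$). Once these subtleties are resolved, the case distinction above yields the desired contradiction in all cases, forcing every $a_i$ to lie in $V_{{\bf S},W}$.
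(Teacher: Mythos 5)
Your overall strategy -- assume some coordinate is non-critical, build an isomorphic state that coincides with ${\bf S}$ on every witness term, and then invoke the Bounded Exploration postulate together with Lemma~\ref{lemma:IsoExtendToUpdateSets} -- is the same as the paper's. The decisive difference is where you put the fresh element $b$, and that choice is what opens a genuine gap. The paper replaces the non-critical $a_i$ by an element $b \notin S$, i.e.\ \emph{outside the base set}. Then ${\bf I}'$ simply does not contain $a_i$, and since the one-step transformation does not change the base set (Abstract State postulate), $a_i$ cannot occur in any update of $\tau_A({\bf I}')-{\bf I}'$; yet coincidence on $W$ forces $(f,(a_1,\ldots,a_r),a_0)$ into that very set. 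The contradiction is immediate and uniform in the position $i$, with no case analysis, no clash argument, and no appeal to cardinality.

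By keeping $b$ inside the base set (a reserve element) you inherit two problems that your proposal does not resolve. First, reserve elements by Definition~\ref{Def:theReserve} never lie in the primary part $I_1$, so when the non-critical coordinate sits in $I_1$ your transposition is not an isomorphism of meta-finite structures (it breaks the $I_1/I_2$ stratification and the interpretation of $\texttt{atomic}$); you flag this but leave it open, and there is no guarantee that a ``structurally equivalent'' non-critical element exists in the right stratum. Second, and more seriously, for $j \geq 1$ the updates $u$ and $\zeta(u)$ occupy \emph{different} locations, so there is no clash, and your fallback is that $\tau_A({\bf I})-{\bf I}$ is finite. Finiteness of the update set is not among the postulates and is not derivable from them at this stage: the secondary part may carry dynamic functions with infinite domain (the paper's own PRAM example uses one), so nothing a priori bounds the number of non-trivial updates per step. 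Your sketched justification (finiteness of $W$ plus meta-finiteness of the primary part) is essentially the content of this lemma and of the critical-structure machinery built on top of it, so the argument becomes circular exactly at the point where it is needed. Switching to a fresh element outside the base set removes all of these difficulties at once.
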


\begin{proof}
Assume that $(a_0, a_1, \ldots, a_r) \not\in (V_{{\bf S},W})^{r+1}$. Then, for some $1 \leq i \leq r$, we have that $a_i \not\in V_{{\bf S}, W}$. 
Let ${\bf S}'$ be the computation state isomorphic to ${\bf S}$ obtained by replacing $a_i$ in ${\bf S}$ by a fresh element $b \not\in S$ and let ${\bf I}'$ be the state of $A$ isomorphic to ${\bf I}$ obtained by replacing $a_i$ in $\bf I$ by $b$. By the abstract state postulate, it is clear that ${\bf I}'$ is a state of $A$. Further, ${\bf S}'$ is a state of computation that corresponds to ${\bf I}'$. 
By our assumption that $a_i \not\in V_{{\bf S}, W}$ and by construction of ${\bf S}'$, we have that $\text{val}_{{\bf S}'}(\alpha_j) = \text{val}_{\bf S}(\alpha_j)$ for every witness term $\alpha_j \in W$.
Thus ${\bf S}$ and ${\bf S}'$ \emph{coincide} over $W$ and, by the parallel bounded exploration postulate, the update sets $\tau_A({\bf I}') - {\bf I}'$ and $\tau_A({\bf I}) - {\bf I}$ also coincide. Then $f(a_1,\ldots,a_r),a_0)$ is in $\tau_A({\bf I}') - {\bf I'}$ as well. But $a_i$ is not in $I'$, and by the (inalterable base set part of) the abstract state postulate, $a_i$ is not in the base set of $\tau_A({\bf I}')$ either. Thus it cannot occur in $\tau_A({\bf I}') - {\bf I}'$, which gives us the desired contradiction.
\end{proof}

We define next a purely relational and finite structure ${\bf S}|_{W}$, which captures the part of a computation state ${\bf S}$ which can be accessed with a set $W$ of witness terms. 

\begin{definition}[Critical Structure]
Let $A$ be a parallel algorithm, let ${\bf I}$ be a state of $A$, let ${\bf S}$ be state of computation that corresponds to $\bf I$ and let $W = \{\alpha_1, \ldots, \alpha_m\}$ be a set of witness terms. We define a purely relational and finite structure ${\bf S}|_{W}$ (which we call \emph{critical (sub) structure} of $\bf S$) of vocabulary $\Sigma_{W} = \{R_{\alpha_1}, \ldots, R_{\alpha_m}\}$ where for $1 \leq i \leq m$ and $\alpha_i = \multiset{(t_{0}, \ldots, t_{n}) \mid \varphi_i(x_1, \ldots, x_{r})}$, the relation symbol $R_{\alpha_i}$ has arity $n+2$ and the following interpretation:\\[0.2cm]
$R^{{\bf S}|_{W}}_{\alpha_i} = \{(b_0, \ldots, b_n, i \cdot b_0 \cdots b_n  \cdot a_1 \cdots a_{r}) \mid  (a_1, \ldots, a_{r}) \in I^{r_i},$\\[0.2cm]
\hspace*{2.4cm} ${\bf S} \models \varphi_i(x_1, \ldots, x_{r})[a_1, \ldots, a_{r}] \text{ and }$\\[0.2cm]
\hspace*{2.4cm} $\text{val}_{{\bf S},\mu[x_1 \mapsto a_1, \ldots, x_{r} \mapsto a_{r}]}(t_0) = b_0, \ldots, \text{val}_{{\bf S},\mu[x_1 \mapsto a_1, \ldots, x_{r} \mapsto a_{r}]}(t_n) = b_n\}$,\\[0.2cm]
where $i \cdot b_0 \cdots b_n  \cdot a_1 \cdots a_{r}$ denotes the string obtained by concatenating $i, b_0, \ldots, b_n, a_1, \ldots, a_{r}$.   
 An element $a_i$ belongs to the domain $S|_{W}$ of ${\bf S}|_{W}$ iff for some $\alpha_i \in W$ there is a $\bar{a} \in R^{{\bf S}|_{W}}_{\alpha_i}(\bar{a})$ such that $a_i$ appears in $\bar{a}$.
\end{definition}
In the case of critical structures, we do \emph{not} assume the existence of an equality relation since we do not want to include anything outside of what is prescribed by the witness set. The strings of the form  $i \cdot b_0 \cdots b_n  \cdot a_1 \cdots a_{r}$ have the sole purpose of encoding the multiplicities of the elements in the multiset resulting from the evaluation of $\alpha_i$ in $\bf S$, again without including anything outside of what is prescribed by the witness set.  

\subsection{Types}

We can now restrict ourselves to consider the properties of tuples which are definable in a given logic over \emph{finite relational structures}.  For this, we use the model-theoretic concept of type.  

\begin{definition}\label{def2-1}

Let $\cal L$ be a logic, let ${\bf A}$ be a relational structure of vocabulary
$\Sigma$, and let $\bar{a} = (a_1,  \ldots,  a_k)$ be  a $k$-tuple
over ${\bf A}$. The $\cal L$-{\em  type} of $\bar{a}$ in $\bf A$,  denoted $tp_{\bf
A}^{\cal L}(\bar{a})$, is the set ${\cal L}[\Sigma]$ of formulas in ${\cal L}$ of vocabulary $\Sigma$ with free
variables among $\{x_1, \ldots, x_k\}$ which are satisfied in ${\bf A}$ 
by any variable assignments assigning for $1 \leq i \leq k$ the $i$-th component of
$\bar{a}$ to the variable $x_i$, i.e.
\[ tp_{\bf A}^{\cal L}(\bar{a}) = \{\varphi \in {\cal L}[\Sigma]: \mathit{free}(\varphi) \subseteq
\{x_1, \ldots, x_k\} \textrm{ and } {\bf A} \models \varphi[a_1, \ldots, a_k] \}. \]

\end{definition}

Note that,  the ${\cal L}$-type of a given tuple $\bar{a}$ over a relational
structure ${\bf A}$, includes not only the properties of all sub-tuples of 
$\bar{a}$, but also the set of all sentences in ${\cal L}$ which are true 
when evaluated on ${\bf A}$.

In particular,  we are interested in the properties of tuples which are definable in first-order logic with and without equality (denoted $\mathrm{FO}$ and $\mathrm{FO}_{wo=}$, respectively), i.e., we are interested in $\mathrm{FO}$-types and $\mathrm{FO}_{wo=}$-types, respectively.  $\mathrm{FO}$-types are also known as isomorphism types since every tuple can be characterized up to isomorphism by its $\mathrm{FO}$-type. $\mathrm{FO}_{wo=}$-types correspond to a (weaker) type of equivalence relation among tuples (see \cite{[CDJ96]}). Instead of the partial isomorphism condition used in the Ehrenfeucht-Fra\"{\i}ss\'e characterization of $\mathrm{FO}$ equivalence, the Ehrenfeucht-Fra\"{\i}ss\'e characterization of $\mathrm{FO}_{wo=}$ equivalence involves the following condition.  

\begin{definition}

Let $\bf A$ and $\bf B$ be relational structures of some vocabulary $\Sigma$. A relation $p \subseteq A \times B$ is a \emph{partial relativeness correspondence} iff for every $n$-ary relation symbol $R \in \Sigma$ and every $(a_1, b_1), \ldots, (a_n, b_n) \in p$, 
\[ (a_1, \ldots, a_n) \in R^{\bf A} \quad \text{iff} \quad (b_1, \ldots, b_n) \in R^{\bf B}.\]

\end{definition}

\begin{definition}

Let $\bf A$ and $\bf B$ be relational structures of the same vocabulary. $\bf A$ and $\bf B$ are \emph{$m$-finitely relative} via $(I_k)_{k \leq m}$ (denoted $(I_k)_{k \leq m}: {\bf A} \sim_m {\bf B}$) iff the following holds:

\begin{enumerate}[label=\roman{enumi}.]

\item Every $I_k$ is a nonempty set of partial relativeness correspondences.

\item For any $k + 1 \leq m$, any $p \in I_{k+1}$ and any $a \in A$, there are $b \in B$ and $q \in I_k$ such that $q \supseteq p$ and $(a,b) \in q$ (forth condition).

\item For any $k + 1 \leq m$, any $p \in I_{k+1}$ and any $b \in B$, there are $a \in A$ and $q \in I_k$ such that $q \supseteq p$ and $(a,b) \in q$ (back condition).

\end{enumerate}

\end{definition}

The following result is an immediate consequence of \cite[ Prop.~4.5 and Thm.~4.6]{[CDJ96]}.

\begin{theorem}\label{FOwo=Characterization}
Let ${\bf A}$ and ${\bf B}$ be relational structures of some vocabulary $\Sigma$. For every $r \geq 0$, $r$-tuples $\bar{a} = (a_1, \ldots, a_r) \in A^r$ and $\bar{b}=(b_1, \ldots, b_r) \in B^r$, and $p = \{(a_1, b_1), \ldots, (a_r, b_r)\}$, the following holds:
\begin{enumerate}[label=\roman{enumi}.]
\item There is a sequence $(I_k)_{k \leq m}$ such that $(I_k)_{k \leq m}:{\bf A} \sim_m {\bf B}$ and $p \in I_m$ iff for every equality-free formula $\varphi$ of quantifier rank up to $m$ with at most $r$ distinct free variables, ${\bf B} \models \varphi[\bar{b}]$ just in case ${\bf A} \models \varphi[\bar{a}]$.
\item For every $m \geq 0$ there is a sequence $(I_k)_{k \leq m}$ such that $(I_k)_{k \leq m}:{\bf A} \sim_m {\bf B}$ and $p \in I_m$ iff for every equality-free formula $\varphi$ with at most $r$ distinct free variables, ${\bf B} \models \varphi[\bar{b}]$ just in case ${\bf A} \models \varphi[\bar{a}]$.
\end{enumerate}
\end{theorem}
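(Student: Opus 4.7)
The plan is to prove this as the Ehrenfeucht-Fra\"{\i}ss\'e-style characterization of equivalence in equality-free first-order logic $\mathrm{FO}_{wo=}$, where the classical partial-isomorphism condition is replaced by partial relativeness correspondences. These need not be injective or even functional, which is precisely the right notion for a logic that cannot distinguish identical from distinct assignments to distinct variables. First I would prove part~(i) by induction on $m$ in both directions, and then derive part~(ii) from it as an immediate consequence.

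For the forward direction of (i), where a back-and-forth chain implies logical equivalence, the argument is a routine induction on $m$. In the base case ($m=0$) an equality-free, quantifier-free formula $\varphi$ is a Boolean combination of atoms $R(x_{i_1}, \ldots, x_{i_n})$, and the defining condition of a partial relativeness correspondence $p$ guarantees that each such atom has the same truth value at $\bar{a}$ in $\mathbf{A}$ and at $\bar{b}$ in $\mathbf{B}$. In the inductive step it suffices to treat formulas of the form $\exists x\, \psi(x_1, \ldots, x_r, x)$ with $\psi$ of quantifier rank at most $m-1$. If $\mathbf{A} \models \exists x\, \psi[\bar{a}]$ is witnessed by some $a \in A$, the forth condition applied to $p \in I_m$ supplies $b \in B$ and $q \in I_{m-1}$ with $p \cup \{(a,b)\} \subseteq q$; the induction hypothesis applied to the $(r+1)$-tuples $(\bar{a}, a)$ and $(\bar{b}, b)$ then transfers $\psi$, giving $\mathbf{B} \models \exists x\, \psi[\bar{b}]$. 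The back condition handles the reverse implication symmetrically.

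For the backward direction of (i), I would define, for each $k \leq m$, the set $I_k$ to consist of all finite partial relativeness correspondences $q \subseteq A \times B$ of cardinality $s$ such that, enumerating $q$ in some order as $\{(c_1, d_1), \ldots, (c_s, d_s)\}$, the tuples $(c_1, \ldots, c_s)$ and $(d_1, \ldots, d_s)$ satisfy the same equality-free formulas of quantifier rank at most $k$ with free variables among $\{x_1, \ldots, x_s\}$. Invariance under the choice of enumeration reflects the fact that $\mathrm{FO}_{wo=}$ is insensitive to repetitions up to variable renaming. The hypothesis places $p$ in $I_m$. To verify the forth condition at level $k+1 \leq m$, fix $q \in I_{k+1}$ and $a \in A$, and consider the rank-$k$ equality-free type of $(c_1, \ldots, c_s, a)$ in $\mathbf{A}$. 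Using a Hintikka-style characteristic formula $\chi$ of quantifier rank $k$ capturing this type (obtained after restricting to the finite subvocabulary actually occurring, where only finitely many equality-free formulas of rank $k$ in $s+1$ variables exist up to logical equivalence), we have $\mathbf{A} \models \exists x\, \chi[\bar{c}]$, and because $q \in I_{k+1}$ the formula $\exists x\, \chi$ transfers to $\mathbf{B}$, producing a witness $b \in B$ with the same rank-$k$ type. Since the rank-$k$ type already encodes the atomic facts, $q \cup \{(a,b)\}$ is a partial relativeness correspondence lying in $I_k$, and the back condition is symmetric.

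Part~(ii) then follows at once from (i): if $\mathbf{A}, \bar{a}$ and $\mathbf{B}, \bar{b}$ agree on all equality-free formulas with at most $r$ free variables, they agree in particular on those of any fixed quantifier rank $m$, so (i) yields a sequence $(I_k)_{k \leq m}$ with $p \in I_m$ for each $m$; conversely, given such a sequence for every $m$, any particular formula $\varphi$ has some finite quantifier rank and is transferred via the corresponding sequence. The main obstacle is the Hintikka-style construction in the backward direction of~(i), which demands both a careful bookkeeping of free-variable counts across the inductive step and the restriction to a finite effective subvocabulary so that the characteristic formula $\chi$ actually exists. This bookkeeping is exactly the point where partial relativeness correspondences differ substantively from partial isomorphisms, and care must be taken that the chosen enumeration of $q$ does not tacitly impose any equality constraints.
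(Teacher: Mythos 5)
Your proposal is correct in substance, but it takes a different route from the paper only in the sense that the paper does not prove this statement at all: Theorem~4.8 is stated as an ``immediate consequence'' of Propositions~4.5 and~4.6 of Casanovas, Dellunde and Jansana (the reference \cite{[CDJ96]}), i.e.\ the paper simply imports the known Ehrenfeucht--Fra\"{\i}ss\'e characterization of equality-free equivalence, whereas you reconstruct it from scratch. Your reconstruction is the standard one: the forward direction of~(i) by induction on quantifier rank using the forth/back conditions, the backward direction by taking $I_k$ to be the correspondences whose associated tuples agree on all equality-free formulas of rank at most $k$ and verifying forth/back via rank-$k$ characteristic formulas, and~(ii) as a corollary. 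Two small points deserve care if you write this out in full. First, in the inductive step of the forward direction you obtain $q\in I_{m-1}$ with $p\cup\{(a,b)\}\subseteq q$ but the induction hypothesis as literally stated requires $p\cup\{(a,b)\}\in I_{m-1}$; you should either prove the slightly stronger statement ``some member of $I_m$ contains $p$'' or observe that the $I_k$ may be closed under nonempty subsets without harm. Second, the existence of the Hintikka-style formula $\chi$ of rank $k$ in $s+1$ variables needs the vocabulary to be finite (so that there are only finitely many rank-$k$ formulas up to equivalence); you flag this, and it is satisfied in the paper's application since the critical structures ${\bf S}|_W$ have the finite vocabulary $\Sigma_W$, but the theorem as stated for ``some vocabulary $\Sigma$'' tacitly assumes this. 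Neither point is a genuine gap; your argument buys a self-contained treatment at the cost of reproving a cited result.
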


\subsection{Indistinguishable Updates}

We want to show that if $\bar{a}$ is a critical tuple that defines an update $((f,(a_1,\dots,a_r)),a_0)$ in some update set of the parallel algorithm $A$ and $\bar{b}$ has the same ${\cal L}$-type over the critical structure of $A$, then also $\bar{b}$ is a critical tuple that defines an update $((f,(b_1,\dots,b_r)),b_0)$ in the same update set. This will lead to our Lemma~\ref{also_in_update_set}. For the proof it will turn out to be convenient, if there exists an isomorphism that takes $\bar{a}$ to $\bar{b}$. However, this cannot always be guaranteed. Therefore, in this subsection we will show how the general case can be reduced to the specific case assuming such an isomorphism, and the latter case will be handled in the next subsection.

Thus, for every parallel algorithm $A$ we define next a modified parallel algorithm $A^*$ by means of a bijection from the states of $A$ to states of $A^*$. Then we prove that this modified version $A^*$ of $A$ satisfies the properties that are required for the proof of Lemma~\ref{also_in_update_set} --the key lemma in the characterization proof.

\begin{definition}\label{Def:I*}
For each state ${\bf I}_i$ of a parallel algorithm $A$, let ${\bf I}_i^*$ denote a corresponding state of vocabulary $\Sigma^* = \{g_i \mid f_i \in \Sigma\} \cup \{h\}$ such that:
\begin{itemize}
\item The base set of ${\bf I}_i^*$ is the disjoint union of $I_i$ with the positive natural numbers  $\mathbb{N}^+$.
\item $h^{{\bf I}_i^*}$ is a static function that is a bijection from $I_i$ to $\mathbb{N}^+$.
\item For every $r$-ary function $g_j \in \Sigma^*$ and every $(n+1)$-ary tuple $(c_0, c_1, \ldots, c_n)$ in $(I_i^*)^{n+1}$,
\[g_j^{{\bf I}^*}(c_1, \ldots, c_n) = 
\begin{cases}
  c_0 & \text{if there is a} \; (d_0, d_1, \ldots, d_n) \in (I_i)^{n+1}  \\ & \quad \text{such that} f^{{\bf I}}_j(d_1, \ldots, d_n) = d_0 \; \text{and} \\ & \quad c_i = \mathit{prime}(i+1)^{h(d_i)} \; \text{for} \; 0 \leq i \leq n; \\
\texttt{false} & \text{if the previous condition does not hold and}\;\\ &\quad f \; \text{is marked as relational};\\
\texttt{undef} & \text{otherwise.}
\end{cases}
\]
Here, $\mathit{prime}(i)$ denotes the $i$-th prime number in the sequence of primes.  
\end{itemize}
We define $A^*$ as the parallel algorithm with set of states ${\cal S}_{A^*} = \{{\bf I}_i^* \mid {\bf I}_i \in {\cal S}_A\}$, set of initial states ${\cal I}_{A^*} = \{{\bf I}_i^* \mid {\bf I}_i \in {\cal I}_A\}$ and transition function $\tau_{A^*}$ such that for every ${\bf I}^*_i, {\bf I}^*_j \in {\cal S}_{A^*}$ it holds that:
\begin{enumerate}[label=\roman{enumi}.]
\item If the base sets of ${\bf I}^*_i$ and ${\bf I}^*_j$ coincide, then $h^{{\bf I}_i^*} = h^{{\bf I}_j^*}$.
\item $\tau_{A^*}({\bf I}^*_i) = {\bf I}^*_j$ iff $\tau_{A}({\bf I}_i) = {\bf I}_j$. 
\end{enumerate}
\end{definition}

\begin{definition}\label{starTerm}
Let $t$ be a term of vocabulary $\Sigma$. We define $t^*$ as the term of vocabulary $\Sigma^*$ obtained from $t$ as follows:
\begin{itemize}
\item If $t$ is a nullary function symbol $f_i \in \Sigma$, then $t^*$ is $g_i$. 
\item If $t$ is a variable $x_i \in V$, then $t^*$ is $2^{h(x_i)}$.
\item If $t$ is of the form $f_i(t_1, \ldots, t_r)$ where $f_i \in \Sigma$, $\mathit{arity}(f_i) = r$ and $t_1, \ldots, t_r$ are terms, then $t^*$ is $g_i(\mathit{prime}(2)^{\log_2(t^*_1)}, \ldots, \mathit{prime}(r+1)^{\log_2(t^*_r)})$.
\item If $t(\bar{y})$ is a multiset comprehension term of the form $\multiset{s(\bar{x}, \bar{y}) \mid \varphi(\bar{x}, \bar{y})}_{\bar{y}}$, then $t^*$ is $2^{h(\multiset{h^{-1}(\log_2(s^*(\bar{x}, \bar{y}))) \mid h^{-1}(\log_2(\varphi^*(\bar{x}, \bar{y})))}_{\bar{y}})}$.
\end{itemize}
\end{definition}

\begin{lemma}\label{similarresult}
Let $\bf S$ be a computation state of vocabulary $\Sigma$, let $\alpha$ be a term of vocabulary $\Sigma$, let $\alpha^*$ be the corresponding term of vocabulary $\Sigma^*$ as per Definition~\ref{starTerm} and let $\mu$ be a variable assignment over the primary part of $\bf S$. We have that $\mathrm{val}_{{\bf S}, \mu}(\alpha) = a$ iff $\mathrm{val}_{{\bf S}^*, \mu}(\alpha^*) = 2^{h(a)}$.
\end{lemma}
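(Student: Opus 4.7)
The plan is to prove the lemma by structural induction on the term $\alpha$, mirroring the four clauses of Definition~\ref{starTerm}. For each clause, I will show that the value computed by $\alpha^*$ in ${\bf S}^*$ under $\mu$ is exactly the encoding $2^{h(a)}$ of the value $a$ computed by $\alpha$ in ${\bf S}$ under $\mu$.

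For the two base cases, the verification is direct from the definitions. If $\alpha$ is a nullary symbol $f_i$ interpreted as $a$ in $\bf S$, then by Definition~\ref{Def:I*} applied with $n=0$ and $d_0=a$, the corresponding symbol $g_i$ is interpreted in ${\bf S}^*$ as $\mathit{prime}(1)^{h(a)} = 2^{h(a)}$. If $\alpha$ is a variable $x_i$ with $\mu(x_i)=a$, then $\alpha^* = 2^{h(x_i)}$ evaluates to $2^{h(a)}$ since $\mu$ is used identically on both sides.

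For the inductive step on function applications $f_i(t_1,\dots,t_r)$, I would first invoke the induction hypothesis on each subterm $t_j$ to obtain $\mathrm{val}_{{\bf S}^*,\mu}(t_j^*) = 2^{h(\mathrm{val}_{{\bf S},\mu}(t_j))}$, so that $\log_2(\mathrm{val}_{{\bf S}^*,\mu}(t_j^*)) = h(\mathrm{val}_{{\bf S},\mu}(t_j))$. The outer wrapper $\mathit{prime}(j+1)^{\log_2(t_j^*)}$ then produces precisely $\mathit{prime}(j+1)^{h(\mathrm{val}_{{\bf S},\mu}(t_j))}$, which is the unique input pattern on which $g_i^{{\bf S}^*}$ is defined by a nontrivial (non-default) case in Definition~\ref{Def:I*}; that case outputs $2^{h(d_0)}$ where $d_0 = f_i^{\bf S}(\mathrm{val}_{{\bf S},\mu}(t_1),\dots,\mathrm{val}_{{\bf S},\mu}(t_r)) = \mathrm{val}_{{\bf S},\mu}(\alpha)$, yielding the required equality.

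The main obstacle is the multiset comprehension case $\alpha = \multiset{s(\bar x,\bar y) \mid \varphi(\bar x,\bar y)}_{\bar y}$, where the target term is $\alpha^* = 2^{h(\multiset{h^{-1}(\log_2(s^*(\bar x,\bar y))) \mid h^{-1}(\log_2(\varphi^*(\bar x,\bar y)))}_{\bar y})}$. Here the induction hypothesis gives, for every assignment $\mu[\bar x \mapsto \bar a, \bar y \mapsto \bar b]$, that $\mathrm{val}(s^*) = 2^{h(\mathrm{val}(s))}$ and $\mathrm{val}(\varphi^*) = 2^{h(\mathrm{val}(\varphi))}$. Applying $h^{-1}\circ\log_2$ to these inverts the encoding, so the inner multiset expression iterates over the same tuples $\bar b$ (ranging over the primary part, which is preserved in ${\bf S}^*$), selects exactly those for which $\varphi$ evaluates to $\texttt{true}^{\bf S}$, and collects the $s$-values with correct multiplicities; this inner multiset therefore coincides with $\mathrm{val}_{{\bf S},\mu}(\alpha)$. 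Applying the outer $2^{h(\cdot)}$ then delivers the conclusion. The subtle points to verify carefully are that $h$ is understood to extend to all relevant elements of the computation state (in particular to multiset values used as arguments of $s$ and to the whole resulting multiset), that $h$ is constant across states sharing a base set (condition (i) of Definition~\ref{Def:I*}) so that iterating $\bar b$ over the primary part yields the same tuples in both structures, and that the encoding of Boolean values via $2^{h(\texttt{true})}$ versus $2^{h(\texttt{false})}$ is faithfully recovered by the decoder $h^{-1}\circ\log_2$, so that the selection predicate of the comprehension agrees on the two sides.
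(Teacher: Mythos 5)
Your proposal is correct and follows essentially the same route as the paper: a structural induction on $\alpha$ over the four clauses of Definition~\ref{starTerm}, using $\log_2$ to invert the $2^{h(\cdot)}$ encoding in the function-application case and pushing the induction hypothesis through the inner term and guard of the multiset comprehension. The additional care you flag about $h$ extending to multiset values and being uniform across states with the same base set is in fact left implicit in the paper's own argument, so nothing is missing.
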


\begin{proof}
We proceed by induction on $\alpha$. 
\begin{itemize}
\item If $\alpha$ is a nullary function symbol $f_i \in \Sigma$, then $\mathrm{val}_{{\bf S}, \mu}(f_i) = f^{\bf S}_i$ and $\mathrm{val}_{{\bf S}^*, \mu}(\alpha^*) = g^{\bf S^*}_i$, and by Definition~\ref{Def:I*}, $f^{\bf S}_i  = a$ iff $g^{\bf S^*}_i = 2^{h(a)}$.
\item If $\alpha$ is a variable $x_i$, then $\mathrm{val}_{{\bf S},\mu}(x_i) = \mu(x_i)$ and $\mathrm{val}_{{\bf S}^*,\mu}(\alpha^*) = 2^{h(\mu(x_i))}$, and clearly, $\mu(x_i) = a$ iff $2^{h(\mu(x_i))} = 2^{h(a)}$.
\item If $\alpha$ is of the form $f_i(t_1, \ldots, t_r)$ where $f_i$ is an $r$-ary function symbol in $\Sigma$ and $t_1, \ldots, t_r$ are terms of vocabulary $\Sigma$, then by induction hypothesis $\mathrm{val}_{{\bf S}, \mu}(t_i) = a_i$ iff $\mathrm{val}_{{\bf S}^*, \mu}(t_i^*) = 2^{h(a_i)}$ for every $1 \leq i \leq r$. Thus, by Definition~\ref{starTerm} and Definition~\ref{Def:I*}, we get that $\mathrm{val}_{{\bf S}, \mu}(\alpha) = f_i^{\bf S}(a_1, \ldots, a_r) = a$ iff $\mathrm{val}_{{\bf S}^*, \mu}(\alpha^*) = g_i^{{\bf S}^*}(\mathit{prime}(2)^{\log_2(2^{h(a_1)})}, \ldots, \mathit{prime}(r+1)^{\log_2(2^{h(a_r)})}) = 2^{h(a)}$.
\item If $\alpha$ is of the form $\multiset{s(\bar{x}, \bar{y}) \mid \varphi(\bar{x}, \bar{y})}_{\bar{y}}$ where $\bar{x} = (x_1, \ldots, x_n)$ and $\bar{y} = (y_1, \ldots, y_m)$, then for every $\bar{b} \in S^n$ it holds by induction hypothesis that $\mathrm{val}_{{\bf S}, \mu[\bar{x} \mapsto \bar{b}]}(s(\bar{x}, \bar{y})) = a_i$ and $\mathrm{val}_{{\bf S}, \mu[\bar{x} \mapsto \bar{b}]}(\varphi(\bar{x}, \bar{y})) = a_j$ iff $\mathrm{val}_{{\bf S}^*, \mu[\bar{x} \mapsto \bar{b}]}(s^*(\bar{x}, \bar{y})) = 2^{h(a_i)}$ and $\mathrm{val}_{{\bf S}^*, \mu[\bar{x} \mapsto \bar{b}]}(\varphi^*(\bar{x}, \bar{y})) = 2^{h(a_j)}$. Thus $\mathrm{val}_{{\bf S},\mu}(\alpha) = a$ iff $\mathrm{val}_{{\bf S}^*,\mu}(\multiset{h^{-1}(\log_2(s^*(\bar{x}, \bar{y}))) \mid h^{-1}(\log_2(\varphi^*(\bar{x}, \bar{y})))}_{\bar{y}}) = a$ iff $\mathrm{val}_{{\bf S}^*,\mu}(\alpha^*) = 2^{h(a)}$.   
\end{itemize}
\end{proof}

\begin{lemma}\label{lemma:properties}
Let $A$ be a parallel algorithm and $W$ be a bounded exploration witness for $A$. The following holds:
\begin{enumerate}[label=\roman{enumi}.]
\item For every ${\bf I}_i \in {\cal S}_{A}$ and every $f_i \in \Sigma$,  \[(f_i, (d_1, \ldots, d_n), d_0) \in \tau_A({\bf I}_i) - {\bf I}_i  \quad \text{iff}\]\[(g_i, (\mathit{prime}(2)^{h(d_1)}, \ldots, \mathit{prime}(n+1)^{h(d_n)}), \mathit{prime}(1)^{h(d_0)}) \in \tau_{A^*}({\bf I}^*_i) - {\bf I}^*_i.\]\label{updates*}
\item The set $W^* = \{\alpha^*_i \mid \alpha_i \in W\}$ of witness terms is a bounded exploration witness for the modified parallel algorithm $A^*$.\label{witnessSet*}
\end{enumerate}
\end{lemma}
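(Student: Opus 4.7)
The plan is to treat the two parts separately: (i) is essentially bookkeeping unfolding Definition~\ref{Def:I*}, while (ii) reduces to bounded exploration of $A$ via Lemma~\ref{similarresult} plus part~(i).

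For part (i), I would unfold the definition of an update: $(f_i,(d_1,\ldots,d_n),d_0) \in \tau_A({\bf I}_i) - {\bf I}_i$ means $f_i^{\tau_A({\bf I}_i)}(d_1,\ldots,d_n) = d_0$ and $f_i^{{\bf I}_i}(d_1,\ldots,d_n) \ne d_0$. By Definition~\ref{Def:I*}, the interpretation of $g_i$ in ${\bf I}_i^*$ is, for tuples of the form $(\mathit{prime}(2)^{h(d_1)},\ldots,\mathit{prime}(n+1)^{h(d_n)})$, equal to $2^{h(f_i^{{\bf I}_i}(d_1,\ldots,d_n))} = \mathit{prime}(1)^{h(f_i^{{\bf I}_i}(d_1,\ldots,d_n))}$. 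Since $h$ is a bijection on $I_i$, the prime-power encoding is injective, so the equivalence
\[
f_i^{{\bf I}_i}(d_1,\ldots,d_n) = d_0 \iff g_i^{{\bf I}_i^*}\bigl(\mathit{prime}(2)^{h(d_1)},\ldots,\mathit{prime}(n+1)^{h(d_n)}\bigr) = \mathit{prime}(1)^{h(d_0)}
\]
holds, and analogously with $\tau_A({\bf I}_i)$ in place of ${\bf I}_i$ (noting that condition~(i) of Definition~\ref{Def:I*} forces the same $h$ in ${\bf I}_i^*$ and in $(\tau_A({\bf I}_i))^*$, since the base sets coincide by the Abstract State postulate). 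Combined with clause~(ii) of Definition~\ref{Def:I*}, which imposes $\tau_{A^*}({\bf I}_i^*) = (\tau_A({\bf I}_i))^*$, this yields exactly the stated biconditional.

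For part (ii), let ${\bf S}_1^*, {\bf S}_2^*$ be computation states of $A^*$ corresponding to ${\bf I}_1^*, {\bf I}_2^*$ that coincide over $W^* = \{\alpha^* \mid \alpha \in W\}$. By Lemma~\ref{similarresult}, for every $\alpha \in W$ we have $\mathrm{val}_{{\bf S}_k^*}(\alpha^*) = 2^{h^{{\bf S}_k^*}(\mathrm{val}_{{\bf S}_k}(\alpha))}$ ($k=1,2$), so coincidence over $W^*$ forces $h^{{\bf S}_1^*}(\mathrm{val}_{{\bf S}_1}(\alpha)) = h^{{\bf S}_2^*}(\mathrm{val}_{{\bf S}_2}(\alpha))$ for every $\alpha \in W$. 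From the uniform way $h$ is fixed across states of $A^*$ (see the obstacle below), this forces $\mathrm{val}_{{\bf S}_1}(\alpha) = \mathrm{val}_{{\bf S}_2}(\alpha)$, i.e.\ ${\bf S}_1$ and ${\bf S}_2$ coincide over $W$. The Bounded Exploration postulate for $A$ then gives $\tau_A({\bf I}_1) - {\bf I}_1 = \tau_A({\bf I}_2) - {\bf I}_2$, and applying part~(i) in both directions translates this equality into $\tau_{A^*}({\bf I}_1^*) - {\bf I}_1^* = \tau_{A^*}({\bf I}_2^*) - {\bf I}_2^*$, which is exactly the Bounded Exploration conclusion for $A^*$.

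The main obstacle will be the second step of part~(ii): deducing equality of decoded values from equality of $h$-images in two different states. The naive reading of Definition~\ref{Def:I*} only guarantees $h^{{\bf S}_1^*} = h^{{\bf S}_2^*}$ when the base sets coincide, so to make the reduction clean one must either (a) stipulate, consistently with clause~(i) of Definition~\ref{Def:I*}, that $h$ is extended to the full background once and for all via a canonical enumeration of the shared reserve and background constructors, thereby agreeing on every shared element, or (b) invoke the isomorphism invariance guaranteed by the Abstract State postulate to replace ${\bf S}_2^*$ by an isomorphic copy whose $h$-function agrees with that of ${\bf S}_1^*$ on shared elements. Either route resolves the technical mismatch; once that is settled, Lemma~\ref{similarresult} and part~(i) stitch everything together, and no further work beyond routine unfolding is needed.
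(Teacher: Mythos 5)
Your proof follows essentially the same route as the paper's: part~(i) by unfolding Definition~\ref{Def:I*} and using that $\tau_{A^*}({\bf I}_i^*)$ is the starred image of $\tau_A({\bf I}_i)$, and part~(ii) by using Lemma~\ref{similarresult} to pull coincidence over $W^*$ back to coincidence over $W$, invoking the Bounded Exploration postulate for $A$, and translating the resulting update-set equality back via the encoding. The ``obstacle'' you flag is real --- the paper's proof silently assumes that the bijections $h$ of the two states agree on the shared elements, which Definition~\ref{Def:I*} only guarantees when the base sets coincide --- and either of your two repairs (fixing $h$ canonically on the common background, or passing to an isomorphic copy) closes that gap, so your version is if anything slightly more careful than the published one.
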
 

\begin{proof}
Let $\tau_A({\bf I}_i) = {\bf I}_j$. Since $(f_i, (d_1, \ldots, d_n), d_0) \in \tau_A({\bf I}_i) - {\bf I}_i$, we know that $f_i^{{\bf I}_j}(d_1, \ldots, d_b) = d_0$ and $f_i^{{\bf I}_i}(d_1, \ldots, d_b) \neq d_0$. Then, by Definition~\ref{Def:I*} we get that $g_i^{{\bf I}^*_j}(\mathit{prime}(2)^{h(d_1)}, \ldots, \mathit{prime}(n+1)^{h(d_n)}) = \mathit{prime}(1)^{h(d_0)}$ and that $g_i^{{\bf I}^*_i}(\mathit{prime}(2)^{h(d_1)}, \ldots, \mathit{prime}(n+1)^{h(d_n)}) \neq \mathit{prime}(1)^{h(d_0)}$. It follows that, $(g_i, (\mathit{prime}(2)^{h(d_1)}, \ldots, \mathit{prime}(n+1)^{h(d_n)}), \mathit{prime}(1)^{h(d_0)}) \in \tau_{A^*}({\bf I}^*_i) - {\bf I}^*_i$. The same argument can be used to prove the other direction of~(\ref{updates*}).

Regarding~(\ref{witnessSet*}), we proceed by contradiction. Assume that ${\bf I}^*_1$ and ${\bf I}^*_2$ are states of $A^*$ and that ${\bf S}^*_1$ and ${\bf S}^*_2$ are computation states corresponding to ${\bf I}^*_1$ and ${\bf I}^*_2$, respectively, such that ${\bf S}^*_1$ and ${\bf S}^*_2$ coincide on $W^*$ and $\tau_{A^*}({\bf I}^*_1) - {\bf I}^*_1 \neq \tau_{A^*}({\bf I}^*_2) - {\bf I}^*_2$. Since ${\bf S}^*_1$ and ${\bf S}^*_2$ coincide on $W^*$, it follows from Lemma~\ref{similarresult} and the construction of $W^*$ from $W$ that ${\bf S}_1$ and ${\bf S}_2$ coincide on $W$. Given that $W$ is a bounded exploration witness for $A$, it follows from the Bounded Exploration postulate that  $\tau_{A}({\bf I}_1) - {\bf I}_1 = \tau_{A}({\bf I}_2) - {\bf I}_2$. But then, by condition (ii) in Definition~\ref{Def:I*}, we get that also  $\tau_{A^*}({\bf I}^*_1) - {\bf I}^*_1 = \tau_{A^*}({\bf I}^*_2) - {\bf I}^*_2$ which contradicts our assumption.
\end{proof}

\subsection{A Key Lemma}

The following key lemma shows that updates composed by tuples of elements that share a same $\mathrm{FO}_{wo=}$-type (in a critical structure) are indistinguishable (by the algorithm) from one another. This implies that if the corresponding tuples of element in two different updates to a same dynamic function share the same $\mathrm{FO}_{wo=}$-type, then either both updates belong to the update set or neither of them does.

\begin{lemma}\label{also_in_update_set}
Let $A$ be a parallel algorithm, let $\bf I$ be a state of $A$, let $\bf S$ be a corresponding state of computation, let $(f,(a_1, \ldots, a_r), a_0) \in \tau_A({\bf I}) - {\bf I}$, let $\bar{a} = (a_0, \ldots, a_r)$ and let $W$ be a parallel bounded exploration witness for $A$. For every $(r+1)$-tuple of critical values $\bar{b} = (b_0, \ldots, b_r) \in (V_{{\bf S},W})^{r+1}$, if $\mathit{tp}^{\mathrm{FO}_{wo=}}_{{\bf S}|_{W}}(\bar{b}) = \mathit{tp}^{\mathrm{FO}_{wo=}}_{{\bf S}|_{W}}(\bar{a})$ then $(f, (b_1, \ldots, b_r), b_0)$ also belongs to $\tau_A({\bf I}) - {\bf I}$.
\end{lemma}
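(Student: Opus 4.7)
The plan is to reduce the general case to the \emph{isomorphism case} promised for the next subsection, using the modified algorithm $A^*$ of Definition~\ref{Def:I*} as a bridge. By Theorem~\ref{FOwo=Characterization}(ii), the hypothesis $\mathit{tp}^{\mathrm{FO}_{wo=}}_{{\bf S}|_{W}}(\bar a)=\mathit{tp}^{\mathrm{FO}_{wo=}}_{{\bf S}|_{W}}(\bar b)$ yields, for every $m\ge 0$, a sequence $(I_k)_{k\le m}:{\bf S}|_{W}\sim_m{\bf S}|_{W}$ containing $p_0=\{(a_0,b_0),\ldots,(a_r,b_r)\}$ in $I_m$. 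Since $V_{{\bf S},W}$ is finite, choosing $m$ larger than $|V_{{\bf S},W}|$ and alternately exhausting both copies of $V_{{\bf S},W}$ via the forth and back conditions extends $p_0$ to a bijection $\pi$ on $V_{{\bf S},W}$ that preserves every relation $R_{\alpha_i}^{{\bf S}|_{W}}$.

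Next, I transfer the setup to $A^*$. By Lemma~\ref{lemma:properties}(i), the update $(g,(\mathrm{prime}(2)^{h(a_1)},\ldots,\mathrm{prime}(r+1)^{h(a_r)}),\mathrm{prime}(1)^{h(a_0)})$ belongs to $\tau_{A^*}({\bf I}^*)-{\bf I}^*$, and by Lemma~\ref{lemma:properties}(ii), $W^*$ is a bounded exploration witness for $A^*$. Through the encoding $c\mapsto\mathrm{prime}(i)^{h(c)}$ (depending on argument position), the bijection $\pi$ lifts to a bijection $\pi^*$ of the critical values of ${\bf S}^*$ with respect to $W^*$ that sends the encoding $\bar a^*$ of $\bar a$ to the encoding $\bar b^*$ of $\bar b$; extended by the identity outside those critical values, $\pi^*$ becomes a permutation of the base set $I\cup\mathbb{N}^+$ of ${\bf I}^*$. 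Combining Lemma~\ref{similarresult} with the fact that distinct elements in $I$ are separated by distinct prime powers in $\mathbb{N}^+$, $\pi^*$ is not merely a relativeness correspondence but a genuine isomorphism of ${\bf S}^*$ onto a state ${\bf S}^{**}$ that coincides with ${\bf S}^*$ on every $\alpha^*\in W^*$ and satisfies $\pi^*(\bar a^*)=\bar b^*$.

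Applying the isomorphism-case argument of the next subsection to $A^*$ with the isomorphism $\pi^*:{\bf S}^*\to{\bf S}^{**}$ then yields $(g,(\mathrm{prime}(2)^{h(b_1)},\ldots,\mathrm{prime}(r+1)^{h(b_r)}),\mathrm{prime}(1)^{h(b_0)})\in \tau_{A^*}({\bf I}^*)-{\bf I}^*$, and a final appeal to Lemma~\ref{lemma:properties}(i) in the reverse direction translates this back to $(f,(b_1,\ldots,b_r),b_0)\in\tau_A({\bf I})-{\bf I}$. The main obstacle I expect is precisely the passage from a partial relativeness correspondence on ${\bf S}|_{W}$ (which \emph{a priori} need not be equality-preserving, since ${\bf S}|_{W}$ has no equality relation in its vocabulary) to a \emph{function} on the base set of ${\bf S}^*$ compatible with the static background: this is what the rigidity afforded by $h$ and the prime-power encoding in $A^*$ is designed to provide, but the extension step has to be carried out with care to ensure that $\pi^*$ is well-defined on all of $V_{{\bf S}^*,W^*}$, fixes every witness-term value uniformly, and leaves the background of computation intact.
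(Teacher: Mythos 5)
Your overall frame---pass to $A^*$, produce a permutation of the base set of ${\bf I}^*$ carrying the coded $\bar a$ to the coded $\bar b$, show that the image computation state coincides with ${\bf S}^*$ on $W^*$, then close with Lemma~\ref{lemma:properties} and the Bounded Exploration postulate---is the paper's frame. But your first substantive step has a genuine gap. You claim that, because $V_{{\bf S},W}$ is finite, the back-and-forth systems from Theorem~\ref{FOwo=Characterization}(ii) let you extend $p_0=\{(a_0,b_0),\ldots,(a_r,b_r)\}$ to a \emph{bijection} $\pi$ of $V_{{\bf S},W}$ preserving every $R_{\alpha_i}^{{\bf S}|_{W}}$. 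Partial relativeness correspondences are relations, not partial injections: a forth step may pair a fresh $a$ with a $b$ already matched to some other $a'$, so exhausting both sides yields only a total and surjective relativeness correspondence, which in general contains no bijection (Hall's condition can fail for such a relation). Worse, a bijection of the domain of ${\bf S}|_{W}$ preserving all relations in the ``iff'' sense and extending $p_0$ would be an automorphism of ${\bf S}|_{W}$ mapping $\bar a$ to $\bar b$, and equality of $\mathrm{FO}_{wo=}$-types does not imply membership in the same automorphism orbit: in the structure with domain $\{a,b_1,b_2,c,d\}$ and $E=\{(a,b_1),(a,b_2),(c,d)\}$, the elements $a$ and $c$ have the same equality-free type (equality-free logic cannot count successors), yet no automorphism maps one to the other. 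The prime-power coding of $A^*$ does not repair this, because the many-to-many character of the correspondence comes from the absence of equality in the logic, not from any ambiguity in the coding.

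The paper sidesteps the problem: its $\zeta$ is just the transposition exchanging $\mathit{prime}(i+1)^{h(a_i)}$ with $\mathit{prime}(i+1)^{h(b_i)}$ for $0\le i\le r$ and fixing everything else, which is trivially a bijection and trivially an isomorphism onto the image state ${\bf J}^*$. The real work---which your proposal asserts in one clause but does not carry out---is proving that ${\bf S}_{{\bf I}^*}$ and ${\bf S}_{{\bf J}^*}$ coincide on $W^*$, i.e.\ that every witness term evaluates to the \emph{same multiset} in both states. The paper does this by lifting the relativeness correspondences to partial automorphisms of ${\bf S}_{{\bf I}^*}|_{W^*}$ (here the coding genuinely helps, by recovering equality), deducing $\mathit{tp}^{\mathrm{FO}}_{{\bf S}_{{\bf I}^*}|_{W^*}}(\bar a^*)=\mathit{tp}^{\mathrm{FO}}_{{\bf S}_{{\bf I}^*}|_{W^*}}(\bar b^*)$, and then excluding any discrepancy in multiplicities by exhibiting explicit first-order formulas $\psi$ and $\psi'$ that count the witnesses of a value tuple in the critical structure. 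Without that multiplicity argument your assertion that the image state ``coincides with ${\bf S}^*$ on every $\alpha^*\in W^*$'' is unsupported, so the proposal cannot be accepted as it stands.
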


\begin{proof}

By contradiction. Assume 
that $(f, (b_1, \ldots, b_r), b_0) \not\in \tau_A({\bf I}) - {\bf I}$. 
Let ${\bf I}^*$ be the state of the modified parallel algorithm $A^*$ which corresponds to the state ${\bf I}$ of $A$ (see Definition~\ref{Def:I*}).
Let ${\bf J}^*$ be the state isomorphic to ${\bf I}^*$ induced by the automorphism $\zeta$ of ${\bf I}^*$ such that $\zeta(x) = \mathit{prime}(i+1)^{h(b_i)}$ if $x$ is $\mathit{prime}(i+1)^{h(a_i)}$ for some $0 \leq i \leq r$, $\zeta(x) = \mathit{prime}(i+1)^{h(a_i)}$ if $x$ is $\mathit{prime}(i+1)^{h(b_i)}$ for some $0 \leq i \leq r$, and $\zeta(x) = x$ otherwise.  

By the Abstract State postulate, ${\bf J}^*$ is also a state of $A^*$. Since by part (i) of Lemma~\ref{lemma:properties} \[(g, (\mathit{prime}(2)^{h(a_1)}, \ldots, \mathit{prime}(r+1)^{h(a_r)}), \mathit{prime}(1)^{h(a_0)}) \in \tau_{A^*}({\bf I}^*) - {\bf I}^*,\] we get by the isomorphism $\zeta$ that \[(g, (\mathit{prime}(2)^{h(b_1)}, \ldots, \mathit{prime}(r+1)^{h(b_r)}), \mathit{prime}(1)^{h(b_0)}) \in \tau_{A^*}({\bf J}^*) - {\bf J}^*.\]

Let ${\bf S}_{{\bf I}^*}$ and ${\bf S}_{{\bf J}^*}$ be computation states of $A^*$ corresponding to ${\bf I}^*$ and ${\bf J}^*$, respectively. 
We claim that ${\bf S}_{{\bf I}^*}$ and ${\bf S}_{{\bf J}^*}$ coincide on $W^*$, i.e., that $\text{val}_{{\bf S}_{{\bf I}^*}}(\alpha_i) = \text{val}_{{\bf S}_{{\bf J}^*}}(\alpha_i)$ for every  $\alpha_i \in W^*$. Then, by part~(ii) of Lemma~\ref{lemma:properties} and the parallel bounded exploration postulate, we get that $\tau_{A^*}({\bf I}^*) - {\bf I}^* = \tau_{A^*}({\bf J}^*) - {\bf J}^*$. But then also \[(f, (\mathit{prime}(2)^{h(b_1)}, \ldots, \mathit{prime}(r+1)^{h(b_r)}), \mathit{prime}(1)^{h(b_0)}) \in \tau_{A^*}({\bf I}^*) - {\bf I}^*.\]
Consequently, $(f, (b_1, \ldots, b_r), b_0) \in \tau_A({\bf I}) - {\bf I}$ (by the other direction of part~(i) of Lemma~\ref{lemma:properties}) which gives us the desired contradiction.

To finalize the proof, we need to show that our claim holds, i.e, that ${\bf S}_{{\bf I}^*}$ and ${\bf S}_{{\bf J}^*}$ coincide on $W^*$.

From the characterization in Theorem~\ref{FOwo=Characterization} and the fact that $\mathit{tp}^{\mathrm{FO}_{wo=}}_{{\bf S}|_{W}}(\bar{b}) = \mathit{tp}^{\mathrm{FO}_{wo=}}_{{\bf S}|_{W}}(\bar{a})$, we get that for every $m \geq 0$, there is a sequence of partial relativeness correspondences $(I_k)_{k \leq m} : {\bf S}|_W \sim_m {\bf S}|_W$ with $\{(a_0, b_0), \ldots, (a_r, b_r)\} \in I_m$. Thus for every $k \leq m$,  every $p_i \in I_k$, every $n$-ary relation symbol $R$ in the relational vocabulary of ${\bf S}|_W$ and every $(c_1,d_1), \ldots, (c_n, d_n) \in p_i$, it holds that $(c_1, \ldots, c_n) \in R^{{\bf S}|_W}$ iff $(d_1, \ldots, d_n) \in R^{{\bf S}|_W}$. By construction of ${\bf S}_{{\bf I}^*}|_{W^*}$ from ${\bf S}_{{\bf I}^*}$ and by part (i) of Lemma~\ref{lemma:properties}, this implies that there is an $x$ such that for every $y$ the following equation $(1)$ holds.
\begin{eqnarray}
&(\mathit{prime}(1)^{h(c_1)}, \ldots, \mathit{prime}(n-1)^{h(c_{n-1})}, x) \in R^{{\bf S}_{{\bf I}^*}|_{W^*}} \nonumber\\ &\text{iff} \\  &(\mathit{prime}(1)^{h(d_1)}, \ldots, \mathit{prime}(n-1)^{h(d_{n-1})}, y) \in R^{{\bf S}_{{\bf I}^*}|_{W^*}}.\nonumber
\end{eqnarray}
For the same reason, we also have that there is a $y$ such that for every $x$ equation $(1)$ again holds.
Also by construction of ${\bf S}_{{\bf I}^*}|_{W^*}$, for every $c'$ and $d'$ such that \[(\mathit{prime}(1)^{h(c_1)}, \ldots, \mathit{prime}(n-1)^{h(c_{n-1})}, c') \in R^{{\bf S}_{{\bf I}^*}|_{W^*}} \; \text{and}\] \[(\mathit{prime}(1)^{h(d_1)}, \ldots, \mathit{prime}(n-1)^{h(d_{n-1})}, d') \in R^{{\bf S}_{{\bf I}^*}|_{W^*}},\]
we have that \[\{(\mathit{prime}(1)^{h(c_1)}, \mathit{prime}(1)^{h(d_1)}) \ldots, (\mathit{prime}(n)^{h(c_n)},\mathit{prime}(n)^{h(d_n)}), (c',d')\}\] is a partial function which defines a partial automorphism on ${\bf S}_{{\bf I}^*}|_{W^*}$. Clearly, this means that for every $(I_k)_{k \leq m} : {\bf S}|_W \sim_m {\bf S}|_W$ with $\{(a_0, b_0), \ldots, (a_r, b_r)\} \in I_m$ and $m \geq 0$ we can build a sequence $I^*_0, \ldots, I^*_m$ of partial automorphisms on ${\bf S}_{{\bf I}^*}|_{W^*}$ which have the back and forth properties and such that \[\{(\mathit{prime}(1)^{h(a_0)}, \mathit{prime}(1)^{h(b_0)}) \ldots, (\mathit{prime}(r+1)^{h(a_r)},\mathit{prime}(r+1)^{h(b_r)})\} \in I^*_m.\] Thus by the classical characterization of first-order logic in terms of sequences of partial isomorphisms, we get that $\mathit{tp}^{\mathrm{FO}}_{{\bf S}_{{\bf I}^*}|_{W^*}}(\bar{b}^*) = \mathit{tp}^{\mathrm{FO}}_{{\bf S}_{{\bf I}^*}|_{W^*}}(\bar{a}^*)$ for $\bar{b}^* = (\mathit{prime}(1)^{h(b_0)},$ $\ldots, \mathit{prime}(r+1)^{h(b_r)})$ and $\bar{a}^* = (\mathit{prime}(1)^{h(a_0)}, \ldots, \mathit{prime}(r+1)^{h(a_r)})$. 

Now, we proceed by contradiction. Let us assume that there is an \[\alpha_i = \multiset{(t_0, \ldots, t_{n}) \mid \varphi(x_1, \ldots, x_m)} \in W^* \; \text{such that } \; \text{val}_{{\bf S}_{{\bf I}^*}}(\alpha_i) \neq \text{val}_{{\bf S}_{{\bf J}^*}}(\alpha_i).\] 
Then there is a tuple $\bar{c} = (c_0, \ldots, c_n) \in (S_{{\bf I}^*})^{n+1}$ (and therefore also in $(S_{{\bf j}^*})^{n+1}$) such that either \[\text{Mult}(\bar{c}, \text{val}_{{\bf S}_{{\bf I}^*}}(\alpha_i)) > \text{Mult}(\bar{c}, \text{val}_{{\bf S}_{{\bf J}^*}}(\alpha_i)) \text{ or}\] \[\text{Mult}(\bar{c}, \text{val}_{{\bf S}_{{\bf I}^*}}(\alpha_i)) < \text{Mult}(\bar{c}, \text{val}_{{\bf S}_{{\bf J}^*}}(\alpha_i)),\]
Let us assume that $\text{Mult}(\bar{c}, \text{val}_{{\bf S}_{{\bf I}^*}}(\alpha_i)) > \text{Mult}(\bar{c}, \text{val}_{{\bf S}_{{\bf J}^*}}(\alpha_i))$ and define: \\[0.2cm]  
$A = \{(d_1, \ldots, d_{m}) \in (S_{{\bf I}^*})^{m} \mid {{\bf S}_{{\bf I}^*}} \models \varphi(x_1, \ldots, x_{m})[d_1, \ldots, d_{m}] \text{ and }$\\[0.2cm]
\hspace*{1.4cm} $\text{val}_{{{\bf S}_{{\bf I}^*}},\mu[x_1 \mapsto d_1, \ldots, x_{m} \mapsto d_{m}]}(t_0) = c_0, \ldots, \text{val}_{{{\bf S}_{{\bf I}^*}},\mu[x_1 \mapsto d_1, \ldots, x_{m} \mapsto d_{m}]}(t_n) = c_n\}$.\\[0.2cm]
$B = \{(d_1, \ldots, d_{m}) \in (S_{{\bf J}^*})^{m} \mid {\bf S}_{{\bf J}^*} \models \varphi(x_1, \ldots, x_{m})[d_1, \ldots, d_{m}] \text{ and }$\\[0.2cm]
\hspace*{1.4cm} $\text{val}_{{{\bf S}_{{\bf J}^*}},\mu[x_1 \mapsto d_1, \ldots, x_{m} \mapsto d_{m}]}(t_0) = c_0, \ldots, \text{val}_{{{\bf S}_{{\bf J}^*}},\mu[x_1 \mapsto d_1, \ldots, x_{m} \mapsto d_{m}]}(t_n) = c_n\}$.\\[0.2cm]
Since $|A| > |B|$ and ${\bf S}_{{\bf I}^*} \simeq {\bf S}_{{\bf J}^*}$, there must be some tuple $(d_1, \ldots, d_{m}) \in A$ such that $(\zeta(d_1), \ldots, \zeta(d_{m})) \not\in B$. Furthermore, since \[{{\bf S}_{{\bf I}^*}} \models \varphi(x_1, \ldots, x_{m})[d_1, \ldots, d_{m}] \; \text{iff} \; {{\bf S}_{{\bf J}^*}} \models \varphi(x_1, \ldots, x_{m})[\zeta(d_1), \ldots, \zeta(d_{m})],\] it must hold that 
\[(\text{val}_{{\bf S}_{{\bf I}^*},\mu[x_1 \mapsto d_1, \ldots, x_{m} \mapsto d_{m}]}(t_0), \ldots, \text{val}_{{\bf S}_{{\bf I}^*},\mu[x_1 \mapsto d_1, \ldots, x_{m} \mapsto d_{m}]}(t_n)) = \bar{c} \neq \zeta(\bar{c}) = \]\[(\text{val}_{{\bf S}_{{\bf J}^*},\mu[x_1 \mapsto \zeta(d_1), \ldots, x_{m} \mapsto \zeta(d_{m})]}(t_0), \ldots, \text{val}_{{\bf S}_{{\bf J}^*},\mu[x_1 \mapsto \zeta(d_1), \ldots, x_{m} \mapsto \zeta(d_{m})]}(t_n)).\] 
Then, since $\zeta$ is the identity function on the set of elements which do not appear in $\bar{a}^*$ or $\bar{b}^*$, we know that there is at least one  $c_i$ that appears in $\bar{a}^*$ or $\bar{b}^*$ and such that $\zeta(c_i) \neq c_i$. Let us assume, again w.l.o.g., that this is the case for exactly one $c_i$ and that $c_i = c_0 = a_0^* = \mathit{prime}(1)^{h(a_0)}$. Also let \\[0.2cm]
$\psi(y_0, \ldots y_n) \equiv \exists \bar{z}_1 \ldots \bar{z}_{|A|} \bigg( \bigwedge_{1 \leq j < k \leq |A|} \bar{z}_j \neq \bar{z}_k \wedge$\\[0.2cm]
\hspace*{2.3cm} $\bigwedge_{1 \leq j \leq |A|} (\varphi[\bar{z}_j] \wedge t_0[\bar{z}_j] = y_0 \wedge \cdots \wedge t_n[\bar{z}_j] = y_n) \wedge$\\[0.2cm]
\hspace*{2.3cm} $\neg \exists \bar{z}' \Big( \bigwedge_{1 \leq j \leq |A|} \bar{z}' \neq \bar{z}_j \wedge \varphi[\bar{z}'] \wedge t_0[\bar{z}'] = y_0 \wedge \cdots \wedge t_n[\bar{z}'] = y_n \Big) \bigg)$,\\[0.2cm]
where for $z_j = (z_{j1}, \ldots, z_{jm})$, we use $\varphi[\bar{z}_j]$ and $t_0[\bar{z}_j], \ldots, t_n[\bar{z}_j]$ to denote the formula and the terms obtained by replacing in $\varphi$ and $t_0, \ldots, t_n$, respectively, every occurrence of a variable $x_i \in \{x_1, \ldots, x_m\}$ by $z_{ji}$. Likewise, we use $\bar{z}_j \neq \bar{z}_k$ to denote the formula $\neg(z_{j1} = z_{k1} \wedge \cdots \wedge z_{jm} = z_{km})$.
 
It follows that \[{{\bf S}_{{\bf I}^*}} \models \psi(y_0, \ldots, y_n)[a_0^*, c_1, \ldots, c_n] \text{ and } {{\bf S}_{{\bf J}^*}} \not\models \psi(y_0, \ldots, y_n)[a_0^*, c_1, \ldots, c_n],\] and since $\zeta^{-1}(a_0^*) = b_0^* = \mathit{prime}(1)^{h(b_0)}$, we get that
\[{{\bf S}_{{\bf I}^*}} \not\models \psi(y_0, \ldots, y_n)[b_0^*, \zeta^{-1}(c_1), \ldots, \zeta^{-1}(c_n)].\]
But then, for \\[0.2cm] 
$\psi'(y_0, \ldots y_n) \equiv \exists z_1 \ldots z_{|A|} \bigg( \bigwedge_{1 \leq j < k \leq |A|} z_j \neq z_k \wedge \bigwedge_{1 \leq j \leq |A|} R_{\alpha_i}(y_0, \ldots, y_n, z_j) \wedge$\\[0.2cm]
\hspace*{5cm} $\neg \exists z' \Big( \bigwedge_{1 \leq j \leq |A|} z' \neq z_j \wedge R_{\alpha_i}(y_0, \ldots, y_n, z')\Big) \bigg),$\\[0.2cm]
we get that \[{{\bf S}_{{\bf I}^*}}|_{W^*} \models \psi'(y_0, \ldots, y_n)[a_0^*, c_1, \ldots, c_n] \; \text{and} \] 
\[{{\bf S}_{{\bf I}^*}}|_{W^*} \not\models \psi'(y_0, \ldots, y_n)[b_0^*, \zeta^{-1}(c_1), \ldots, \zeta^{-1}(c_n)].\] This contradicts the fact that $\mathit{tp}^{\mathrm{FO}}_{{\bf S}_{{\bf I}^*}|_{W^*}}(\bar{b}^*) = \mathit{tp}^{\mathrm{FO}}_{{\bf S}_{{\bf I}^*}|_{W^*}}(\bar{a}^*)$. The same contradiction is obtained if we assume that
$\text{Mult}(\bar{c}, \text{val}_{{\bf S}_{{\bf I}^*}}(\alpha_i)) < \text{Mult}(\bar{c}, \text{val}_{{\bf S}_{{\bf J}^*}}(\alpha_i))$. Thus we have that  $\text{Mult}(\bar{c}, \text{val}_{{\bf S}_{{\bf I}^*}}(\alpha_i)) = \text{Mult}(\bar{c}, \text{val}_{{\bf S}_{{\bf J}^*}}(\alpha_i))$ which contradicts our assumption that there is an $\alpha_i \in W$ such that $\text{val}_{{\bf S}_{{\bf I}^*}}(\alpha_i) \neq \text{val}_{{\bf S}_{{\bf J}^*}}(\alpha_i)$. 
\end{proof}

\subsection{Isolating Formulae}

Although types are infinite sets of formulae, a {\em single} $\mathrm{FO}_{wo=}$-formula is equivalent to the $\mathrm{FO}_{wo=}$-type of a tuple over a given finite relational structure. The equivalence holds for all finite relational structures of the same schema.

\begin{lemma}[Isolating Formulae]\label{isolating_formula}
For  every relational vocabulary $\Sigma$ with no constants, for every finite structure ${\bf A}$ of schema $\Sigma$, for every $r \geq 0$, and for every $r$-tuple $\bar{a}$ over ${\bf A}$, there is a formula $\chi \in \mathit{tp}_{\bf A}^{\mathrm{FO}_{wo=}}(\bar{a})$  such that for any finite relational structure ${\bf B}$ of schema $\Sigma$ and for every $r$-tuple $\bar{b}$ over ${\bf B}$, ${\bf B}  \models \chi[\bar{b}]$ iff $\mathit{tp}_{\bf A}^{\mathrm{FO}_{wo=}}(\bar{a}) = \mathit{tp}_{\bf B}^{\mathrm{FO}_{wo=}}(\bar{b})$.
\end{lemma}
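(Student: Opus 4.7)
The plan is to construct $\chi$ as a Hintikka-style formula for $\mathrm{FO}_{wo=}$ of sufficiently large quantifier rank $m^*$, with $m^*$ chosen to exploit the finiteness of $\mathbf{A}$. First I would inductively define, for each finite structure $\mathbf{C}$ of vocabulary $\Sigma$, each tuple $\bar{c}$ over $\mathbf{C}$ and each $m \ge 0$, the rank-$m$ Hintikka formula $\chi^m_{\mathbf{C},\bar{c}}(\bar{x})$: at rank $0$ take the conjunction of all equality-free atomic formulae and their negations over $\bar{x}$, included positively or negatively according to whether they hold in $\mathbf{C}$ at $\bar{c}$; at rank $m+1$ take
\[
\bigwedge_{c \in C}\exists y\,\chi^m_{\mathbf{C},\bar{c}c}(\bar{x},y) \;\wedge\; \forall y \bigvee_{c \in C}\chi^m_{\mathbf{C},\bar{c}c}(\bar{x},y).
\]
Combining both clauses of Theorem~\ref{FOwo=Characterization}, a standard induction on $m$ then shows that $\mathbf{B} \models \chi^m_{\mathbf{A},\bar{a}}[\bar{b}]$ is equivalent to $\bar{a}$ and $\bar{b}$ agreeing on every equality-free formula of quantifier rank at most $m$ with at most $r$ distinct free variables.

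Next I would determine a quantifier rank $m^*$ for which $\chi := \chi^{m^*}_{\mathbf{A},\bar{a}}$ captures the full $\mathrm{FO}_{wo=}$-type. Since $\mathbf{A}$ is finite, only finitely many $\mathrm{FO}_{wo=}$-types of $r$-tuples are realised in $\mathbf{A}$; enumerate them as $T_1,\ldots,T_p$ with $T_1 = \mathit{tp}_{\mathbf{A}}^{\mathrm{FO}_{wo=}}(\bar{a})$, and for each $j \ne 1$ pick a separating formula $\varphi_j \in T_1 \setminus T_j$. Taking $m^*$ strictly larger than the quantifier ranks of all the $\varphi_j$'s already guarantees $\chi \in T_1$ and $\mathbf{B} \models \chi[\bar{b}] \Rightarrow \mathbf{B} \models \varphi_j[\bar{b}]$ for every $j \ne 1$, which rules out $\bar{b}$ having any of the types $T_2,\ldots,T_p$. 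The converse direction of the biconditional in the lemma is immediate: if $\mathit{tp}_{\mathbf{B}}^{\mathrm{FO}_{wo=}}(\bar{b}) = T_1$, then $\chi \in T_1$ forces $\mathbf{B} \models \chi[\bar{b}]$.

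The hard part will be that $\mathbf{B}$ may realise a genuinely new $\mathrm{FO}_{wo=}$-type not appearing in $\mathbf{A}$, which could in principle differ from $T_1$ only through a formula of rank exceeding $m^*$. To handle this I plan to strengthen the choice of $m^*$ via a stabilisation argument exploiting the finiteness of $\mathbf{A}$: for each tuple length $s$ bounded by the largest arity in $\Sigma$, the partition of $A^s$ induced by rank-$m$ Hintikka formulae can only refine with growing $m$ and hence stabilises at some finite rank, and taking $m^*$ above all these stabilisation ranks ensures that any $\mathbf{B}$ satisfying $\chi[\bar{b}]$ admits a surjective strong homomorphism $\pi : \mathbf{B} \to \mathbf{A}$ with $\pi(\bar{b}) = \bar{a}$. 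Since such a $\pi$ preserves every $\mathrm{FO}_{wo=}$-formula in both directions, it yields $\mathit{tp}_{\mathbf{B}}^{\mathrm{FO}_{wo=}}(\bar{b}) = T_1$, completing the proof.
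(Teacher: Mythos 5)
Your construction of the Hintikka formulas and your diagnosis of where the difficulty lies (a fresh $\mathrm{FO}_{wo=}$-type realised in $\mathbf{B}$ that separates from $\mathit{tp}_{\mathbf{A}}^{\mathrm{FO}_{wo=}}(\bar{a})$ only above rank $m^*$) both match the paper. But your proposed fix has a genuine gap. The stabilisation ranks you compute are ranks at which the Hintikka partition of tuples \emph{over $A$} stops refining; this is a property of $\mathbf{A}$ alone and says nothing about tuples over $\mathbf{B}$. Knowing only $\mathbf{B}\models\chi^{m^*}_{\mathbf{A},\bar a}[\bar b]$, you can survive $m^*$ rounds of the equality-free back-and-forth game, but to survive round $m^*+1$ you would need to upgrade facts of the form $\mathbf{B}\models\chi^{m^*-1}_{\mathbf{A},\bar a c}[\bar b d]$ (for $d\in B$) to rank $m^*$, and the stabilisation on $\mathbf{A}$ gives you no licence to do that for elements of $B$. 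This is exactly why the paper's isolating formula is not the bare Hintikka formula but $\varphi^{m^*}_{\bar a}(\bar x)\wedge\bigwedge_{\bar c}\forall\bar y\,(\varphi^{m^*}_{\bar a\bar c}(\bar x,\bar y)\rightarrow\varphi^{m^*+1}_{\bar a\bar c}(\bar x,\bar y))$: the extra conjunct forces the upgrade to hold \emph{inside $\mathbf{B}$}, which is what allows the back-and-forth system to be prolonged indefinitely and Theorem~\ref{FOwo=Characterization}(ii) to be applied. Relatedly, the tuple lengths over which you stabilise must go up to something like $|\mathcal{P}(A\times A)|$ (the paper's bound), not merely the maximal arity of $\Sigma$, since the game extends correspondences one element at a time to arbitrary finite length.

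The final step is also unsound as stated: a surjective strong homomorphism $\pi:\mathbf{B}\to\mathbf{A}$ with $\pi(\bar b)=\bar a$ need not exist even when $\mathit{tp}_{\mathbf{A}}^{\mathrm{FO}_{wo=}}(\bar a)=\mathit{tp}_{\mathbf{B}}^{\mathrm{FO}_{wo=}}(\bar b)$. Take $\Sigma=\{P\}$ unary, $\mathbf{A}$ with two elements both satisfying $P$, and $\mathbf{B}$ with a single element satisfying $P$: the structures are equality-free equivalent, yet no surjection $B\to A$ exists. Equality-free equivalence of finite structures corresponds to having isomorphic quotients (or, in the paper's terms, to the existence of arbitrarily long systems of partial relativeness correspondences), not to one structure mapping onto the other; you would need to replace the homomorphism claim by the back-and-forth criterion of Theorem~\ref{FOwo=Characterization}(ii), which is the route the paper takes.
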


\begin{proof}  
We define for every $m \in \mathbb{N}$, a formula $\varphi^m_{\bar{a}}$ with free variables $\bar{x} = (x_1, \ldots, x_r)$ and such that ${\bf A} \models \varphi^m_s[\bar{a}]$, which characterizes $\bar{a}$ completely up to equivalence on $\mathrm{FO}_{wo=}$ formulae with quantifier rank $\leq m$. The $\varphi^m_{\bar{a}}$ are defined by induction as follows:    
\begin{align}
\varphi^0_{\bar{a}}(\bar{x}) \equiv& \bigwedge \{\varphi(\bar{x}) \mid \varphi \; \text{is an equality free atomic or negated} \nonumber\\  & \quad \quad \text{atomic formula such that} \; {\bf A} \models \varphi[\bar{a}]\} \nonumber\\[0.4cm]
\varphi^{m+1}_{\bar{a}}(\bar{x}) \equiv& \; \bigwedge_{a \in A} \exists x_{r+1} (\varphi_{\bar{a}a}^{m}(\bar{x}, x_{r+1})) \; \wedge \label{varphiE}\\
 & \; \forall x_{r+1} \Big( \bigvee_{a \in A}\varphi_{\bar{a}a}^{m}(\bar{x}, x_{r+1}) \Big). \label{varphiA}
\end{align}
We prove first that 
\begin{align}
{\bf B} \models \varphi^m_{\bar{a}}[\bar{b}]  \quad \text{iff} \quad & \text{there is a sequence} \; (I_k)_{k \leq m} \; \text{such that} \label{equivalence}\\
& (I_k)_{k \leq m}:{\bf A} \sim_m {\bf B} \; \text{and} \; p = \{(a_1,b_1), \ldots, (a_r, b_r)\} \in I_m. \nonumber
\end{align}

The existence of $(I_k)_{k \leq m}:{\bf A} \sim_m {\bf B}$ with $p \in I_m$ implies by part~(i) of Theorem~\ref{FOwo=Characterization} that, for every equality-free formula $\varphi$ of quantifier rank $\leq m$,  ${\bf B} \models \varphi[b_1, \ldots, b_r]$ iff ${\bf A} \models \varphi[a_1, \ldots, a_r]$. Since the quantifier rank of $\varphi^m_{\bar{a}}$ is $m$ and ${\bf A} \models \varphi^m_{\bar{a}}[\bar{a}]$ (by construction), we get that ${\bf B} \models \varphi^m_{\bar{a}}[\bar{b}]$. 

The converse can be proven by induction on $m$ as follows:
\begin{itemize}
\item Basis ($m = 0$): Since ${\bf B} \models \varphi^0_{\bar{a}}[b_1, \ldots, b_r]$,  $p = \{(a_1,b_1), \ldots, (a_r, b_r)\}$ is a partial relativeness correspondence. Thus $I_0 = \{p\}$ is a nonempty set of partial relativeness correspondences, even if $\bar{a} = ()$; when $p$ is an empty relation since $\Sigma$ has no constants. 

\item Induction step ($m+1$):

Since ${\bf B} \models \varphi^{m+1}_{\bar{a}}[\bar{b}]$, we know the following:
\begin{itemize}
\item For every $a \in A$, there is a $b \in B$ such that  ${\bf B} \models \varphi^{m}_{\bar{a}a}[\bar{b}b]$ (by part~(\ref{varphiE}) in the definition of $\varphi^{m+1}_{\bar{a}}$). 
\item For every $b \in B$, there is an $a \in A$ such that  ${\bf B} \models \varphi^{m}_{\bar{a}a}[\bar{b}b]$ (by part~(\ref{varphiA}) in the definition of $\varphi^{m+1}_{\bar{a}}$). 
\end{itemize}
Let $I_{m+1} = \{ p \}$. Thus by the induction hypothesis, we have that the following holds:
\begin{itemize}
\item For every $a \in A$, there is a $b \in B$ and a sequence $(I^{\bar{a}a}_k)_{k\leq m}$ such that $(I^{\bar{a}a}_k)_{k\leq m} : {\bf A} \sim_m {\bf B}$ and  $p \cup \{(a,b)\} \in I^{\bar{a}a}_{m}$.
\item For every $b \in B$, there is an $a \in A$ and a sequence $(I^{\bar{b}b}_k)_{k\leq m}$ such that $(I^{\bar{b}b}_k)_{k\leq m} : {\bf A} \sim_m {\bf B}$ and  $p \cup \{(a,b)\} \in I^{\bar{b}b}_{m}$.
\end{itemize}

Let $I_j = \bigcup_{a \in A} I^{\bar{a}a}_j \; \cup \; \bigcup_{b \in B} I^{\bar{b}b}_j$ (for $0 \leq j \leq m$). Note that, in general, $m$-finite relativeness is preserved under this type of element-wise union. 
Thus we only need to check that the back and forth conditions hold for $I_{m+1}$ and $I_{m}$, i.e., we need to check the following:
\begin{itemize}
\item For every $a \in A$ there are $b \in B$ and $q \in I_{m}$ such that $q \supseteq p$ and $(a,b) \in q$ (forth condition).
\item For every $b \in B$ there are $a \in A$ and $q \in I_{m}$ such that $q \supseteq p$ and $(a,b) \in q$ (back condition).
\end{itemize}
These properties follow from parts~(\ref{varphiE}) and~(\ref{varphiA}) in the definition of $\varphi^{m+1}_{\bar{a}}$, respectively.
\end{itemize}

Let $\bar{c} \in A^n$ for some $n \geq 0$. Let $X^m_{\bar{c}} = \{ \bar{d} \in A^n \mid {\bf A} \models \varphi^m_{\bar{c}}[\bar{d}] \}$.
Since $X^m_{\bar{c}} \supseteq X^{m+1}_{\bar{c}}$ for every $m \geq 0$ and $\bf A$ is a finite structure, then there must be an $m^{\bar{c}}$ such that $X^{m^{\bar{c}}}_{\bar{c}} = X^m_{\bar{c}}$ for every $m > m^{\bar{c}}$. Let $m^*$ be the maximum $m^{\bar{c}}$ in $\{m^{\bar{c}} \mid \bar{c} \in A^{\leq |{\cal P}(A \times A)|}\}$. We use $A^{\leq |{\cal P}(A \times A)|}$ to denote the set of tuples of length less than or equal the cardinality of ${\cal P}(A \times A)$. We define the formula $\chi$ as follows:
\begin{equation}
\chi(\bar{x}) \equiv \varphi^{m^*}_{\bar{a}}(\bar{x}) \wedge \bigwedge_{(\bar{a},\bar{c}) \in A^{\leq |{\cal P}(A \times A)|}} \forall \bar{y} (\varphi^{m^*}_{\bar{a}\bar{c}}(\bar{x},\bar{y})  \rightarrow \varphi^{m^*+1}_{\bar{a}\bar{c}}(\bar{x},\bar{y})) \label{isolatingF}
\end{equation}
Finally, we show the following: 
\[{\bf B} \models \chi[\bar{b}] \quad \text{iff} \quad \mathit{tp}_{\bf A}^{\mathrm{FO}_{wo=}}(\bar{a}) = \mathit{tp}_{\bf B}^{\mathrm{FO}_{wo=}}(\bar{b}).\] 
We need to check that if $\varphi_{\bar{a}}^{m^*+1}[\bar{b}]$ then $\mathit{tp}_{\bf A}^{\mathrm{FO}_{wo=}}(\bar{a}) = \mathit{tp}_{\bf B}^{\mathrm{FO}_{wo=}}(\bar{b})$. The other direction is immediate. 

\begin{align*}
\text{Let} \; R =& \{ \{(a_1,b_1) \ldots, (a_r, b_r), (a_{r+1}, b_{r+1}), \ldots, (a_l,b_l)\} \mid l \geq r,\\ 
& \qquad \qquad \qquad (a_1, \ldots, a_r, a_{r+1}, \ldots, a_l) \in A^{\leq |{\cal P}(A \times B)|},\\
& \qquad \qquad \qquad (b_1, \ldots, b_r, b_{r+1}, \ldots, b_l) \in B^{\leq |{\cal P}(A \times B)|}\; \text{and}\\
& \qquad \qquad \qquad {\bf B} \models \varphi^{m^*+1}_{a_1 \ldots a_r a_{r+1} \ldots a_l}[b_1, \ldots, b_r, b_{r+1}, \ldots, b_l] \}
\end{align*}
Since $\varphi_{\bar{a}}^{m^*+1}[\bar{b}]$, the set $R$ is not empty. 
It follows from~(\ref{equivalence}) that for each $f \in R$, there is a sequence $(I^f_k)_{k \leq m^*+1}$ such that $(I^f_k)_{k \leq m^*+1} : {\bf A} \sim_{m^*+1} {\bf B}$ and $f \in I^f_{m^*+1}$. 

Let $(I_k)_{k \leq m^* + n}$ ($n \geq 1$) be the sequence where $I_k = \bigcup_{f \in R} I^f_k$ for $k \leq m^*+1$ and $I_k = I_{m^*+1}$ for $k > m^*+1$. We claim that, for every $n \geq 1$, it holds that $(I_k)_{k \leq {m^* + n}} : {\bf A} \sim_{m^*+n} {\bf B}$ and $p \in I_{m^*+n}$. We prove it for $n = 2$, the rest then follows. 

By definition we know that $p \in I_{m^*+2}$ and that every $I_k$ is a nonempty set of partial relativeness correspondences. We show that the back and forth conditions hold for $I_{m^*+2}$ and $I_{m^*+1}$. The rest then follows. 

Regarding the forth condition, consider any $f \in I_{m^*+2}$ and any $a \in A$. By definition $f \in I_{m^*+1}$ and $p \subseteq f$. Since we know that $(I_k)_{k \leq m^*+1}: {\bf A} \sim_{m^*+1} {\bf B}$, there is a $g \in I_{m^*}$ such that $f \subseteq g$ and $a \in \textit{dom}(g)$. Let $g = \{(a_1, b_1,), \ldots, (a_r, b_r), (a_{r+1}, b_{r+1}), \ldots, (a_l, b_l)\}$. Then, by the other direction of~(\ref{equivalence}), ${\bf B} \models \varphi^{m^*}_{a_1 \ldots a_r a_{r+1} \ldots a_l}[b_1, \ldots, b_r, b_{r+1}, \ldots, b_l]$. Therefore, by the implication in~(\ref{isolatingF}), ${\bf B} \models \varphi^{m^*+1}_{a_1 \ldots a_r a_{r+1} \ldots a_l}[b_1, \ldots, b_r, b_{r+1}, \ldots, b_l]$ and so $g \in R$. Since $g \in I^g_{m^*+1}$, it follows that $g \in I_{m^*+1}$, which proves that the forth condition is met. The same argument can be used to prove that the Back condition also holds.

The fact that $(I_k)_{k \leq {m^* + n}} : {\bf A} \sim_{m^*+n} {\bf B}$ and $p \in I_{m^*+n}$ for every $n \geq 1$, together with part~(ii) of Theorem~\ref{FOwo=Characterization}, allow us to conclude that $\mathit{tp}_{\bf A}^{\mathrm{FO}_{wo=}}(\bar{a}) = \mathit{tp}_{\bf B}^{\mathrm{FO}_{wo=}}(\bar{b})$.
\end{proof}

We say that the formula $\chi$ in Lemma~\ref{isolating_formula} {\em isolates} the $\mathit{tp}_{\bf A}^{\mathrm{FO}_{wo=}}(\bar{a})$. 

Let ${\bf A}$ be a finite structure. It is not difficult to see that if $X^m_{\bar{a}_i} = X^{m+1}_{\bar{a}_i}$ for every $\bar{a}_i \in A^{|{\cal P}(A \times A)|}$, then $X^m_{\bar{a}_i} = X^{m+1}_{\bar{a}_i}$ for every tuple $\bar{a}_i$ of elements from $A$. Since the relation $\bar{a}_j \in X^{m}_{\bar{a}_i}$ is an equivalence relation on tuples, the sets $X^{m}_{\bar{a}_i}$ determine a partition of tuples of a given length. Given that there are $|A|^{|{\cal P}(A \times A)|}$ tuples of length $|{\cal P}(A \times A)|$, we can derive the bound $m^* \leq |A|^{|{\cal P}(A \times A)|}$.

Given a formula $\chi$ which isolates the $\mathrm{FO}_{wo=}$-type of a critical tuple $\bar{a}$ in a critical structure ${\bf S}|_W$, we can write an equivalent term $t_\chi$ which evaluates to true in $\bf S$ only for those tuples which have the same $\mathrm{FO}_{wo=}$-type than $\bar{a}$ in ${\bf S}|_W$. 

\begin{lemma}[Isolating Terms]
\label{Lemma:IsolatingTerms}
Let $S$ be a state of computation of a parallel algorithm $A$ of vocabulary $\Sigma$, let $W$ be a bounded exploration witness for $A$, let $\bar{a}$ be an $r$-tuple in $({\bf S}|_W)^r$ and let $\chi$ be an isolating formula for the $\mathrm{FO}_{wo=}$-type of $\bar{a}$ in ${\bf S}|_W$. Then there is a term $t_\chi$ of vocabulary $\Sigma$ such that, for every $\bar{b} \in (V_{{\bf S},W})^r$, it holds that: \[\mathrm{val}_{{\bf S},\mu[\bar{x}\mapsto\bar{b}]}(t_\chi) = \texttt{true}^{\bf S} \quad \text{iff} \quad {\bf S}|_W \models \chi[\bar{b}].\]
\end{lemma}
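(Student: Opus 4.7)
The plan is to construct $t_\chi$ by structural induction on the equality-free first-order formula $\chi$ over the relational vocabulary $\Sigma_W$. For every subformula $\psi(\bar{y})$ of $\chi$, I will build a Boolean-valued term $t_\psi$ of vocabulary $\Sigma$ with the same free variables, such that for every assignment $\mu$ sending $\bar{y}$ to critical values, $\text{val}_{{\bf S},\mu}(t_\psi) = \texttt{true}^{\bf S}$ iff ${\bf S}|_W \models \psi[\mu(\bar{y})]$. Applying the construction to $\chi$ itself yields the desired $t_\chi$. The translations of the Boolean connectives $\neg$, $\wedge$, $\vee$ are immediate, since the background postulate guarantees the corresponding function symbols $\neg^{\bf B}$, $\wedge^{\bf B}$, $\vee^{\bf B}$ and their expected behaviour on $\texttt{true}^{\bf B}$ and $\texttt{false}^{\bf B}$.

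For the atomic case, fix an atomic subformula $R_{\alpha_i}(y_0,\ldots,y_n,z)$ where $\alpha_i = \multiset{(t_0,\ldots,t_n) \mid \varphi_i(x_1,\ldots,x_r)} \in W$. By construction of the critical structure, this atom holds iff there exists $(a_1,\ldots,a_r) \in I^r$ with ${\bf S} \models \varphi_i[\bar{a}]$, $\text{val}_{{\bf S},\mu[\bar{x}\mapsto\bar{a}]}(t_j) = y_j$ for $0\le j \le n$, and the concatenation $i\cdot y_0\cdots y_n\cdot a_1\cdots a_r$ equals $z$. I will express this as the non-emptiness of a multiset comprehension term that generates exactly those witness tuples $\bar{a}$ which align the components of $\alpha_i$ with $(y_0,\ldots,y_n)$ and with the string encoded by $z$, relying on the pair constructor and equality from the background. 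For quantifier subformulae $\exists u\,\psi(u,\bar{y})$ and $\forall u\,\psi(u,\bar{y})$, I use the comprehension-based equivalents $\multiset{u \mid t_\psi(u,\bar{y}) = \texttt{true}}_{\bar{y}} \neq \oslash$ and $\multiset{u \mid \neg t_\psi(u,\bar{y}) = \texttt{true}}_{\bar{y}} = \oslash$, respectively.

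A subtlety is that bound variables in $\chi$ range over the whole domain of ${\bf S}|_W$, which contains both critical values and the auxiliary concatenation strings used to encode multiplicities, whereas multiset comprehension variables in a term over $\Sigma$ range over the primary part $I$ of ${\bf S}$. I handle this by case analysis on the syntactic role of each bound variable in the atomic subformulae it occurs in: a variable appearing in a non-final argument position of some $R_{\alpha_j}$ ranges over critical values and is relativised by a disjunction of witness-occurrence conditions built from the terms in $W$ (whose closure under subterms was assumed at the start of the section); a variable appearing only in the final argument position is determined up to equality by the other arguments, so existential/universal quantification over it collapses to the non-emptiness/emptiness of the comprehension that already enumerates those witnesses.

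The main obstacle is precisely this bookkeeping between the two kinds of elements in $S|_W$: ensuring that the translated term faithfully mirrors ${\bf S}|_W$-semantics without inadvertently quantifying over elements of $S$ that have no counterpart in $S|_W$, and symmetrically without missing multiplicity strings that do belong to $S|_W$. Once the atomic case is translated correctly -- exploiting that the multiplicity string in the last coordinate of $R_{\alpha_i}$ is a deterministic function of the remaining data -- the inductive step for connectives and quantifiers is routine, and the closure of $W$ under subterms guarantees that every component term needed in the translation is itself witnessed in $W$.
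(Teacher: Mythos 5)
Your proposal matches the paper's proof in all essentials: both proceed by structural induction on the $\mathrm{FO}_{wo=}$-formula over $\Sigma_W$, translating an atom $R_{\alpha_i}$ via the defining comprehension of $\alpha_i$ (with the multiplicity string in the last coordinate absorbed into the comprehension variables of $\alpha_i$), connectives via the background Boolean functions, and quantifiers via (non-)emptiness of multiset comprehensions relativised to elements occurring in witness terms — precisely the paper's $t_{dom}$ device and its convention that a single quantified variable of $\chi$ may correspond to several variables of the translated term. No substantive difference in approach.
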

  
\begin{proof}
We define for every $\mathrm{FO}_{wo=}$-formula $\varphi$ of vocabulary $\Sigma_W$, a corresponding term $t_\varphi$. 
\begin{itemize}
\item If $\varphi(x_1, \ldots, x_r, y)$ is an atomic formula of the form $R_\alpha(x_1, \ldots, x_r, y)$ where $\alpha = \multiset{(t_1, \ldots, t_r) \mid \psi(y_{1}, \ldots, y_{k})}$, then $t_\varphi(x_1, \ldots, x_r, y_1, \ldots, y_k)$ is $t_1 = x_1 \wedge \cdots \wedge t_r = x_r \wedge \psi(y_{1}, \ldots, y_{k})$. 
\item If $\varphi$ is a formula of the form $\neg \psi$ or $\psi \wedge \beta$ then $t_\varphi$ is $\neg t_\psi$ or $t_\psi \wedge t_\beta$, respectively.
\item If $\varphi$ is a formula of the form $\exists x (\psi(x))$ or $\forall x (\psi(x))$, then $t_\varphi$ is the term $\multiset{(z_1, \ldots, z_m) \mid t_{dom}(z_1) \wedge \cdots \wedge t_{dom}(z_m) \wedge t_\psi(z_1, \ldots, z_m)} \neq \oslash$ or $\multiset{(z_1, \ldots, z_m) \mid \neg ((t_{dom}(z_1) \wedge \cdots \wedge t_{dom}(z_m)) \rightarrow t_\psi(z_1, \ldots, z_m))} = \oslash$, respectively, where $z_1, \ldots, z_m$ denote the free variable/s in $t_\psi$ that correspond to $x$ and  
\[
t_{dom}(z_i) \equiv \bigvee_{\multiset{(t_1, \ldots t_r) \mid \psi(x_1, \ldots, x_k)} \in W} \begin{pmatrix}\{\!\!\{(x_1, \ldots, x_k, y_1, \ldots, y_r) \mid \\ \psi(x_1, \ldots, x_k) \wedge t_1 = y_1 \wedge \\ \cdots \wedge t_r = y_r  \wedge (z_i = x_1 \vee \\\cdots \vee z_i = x_k \vee z_i = y_1 \vee\\ \cdots \vee z_i = y_r) \}\!\!\}_{z_i} \neq \oslash \end{pmatrix}
\].
\end{itemize}
It is an easy exercise to show by induction on $\varphi$ that for every tuple $\bar{b}$ of critical elements from $V_{{\bf S},W}$, ${\bf S}|_W \models \varphi[\bar{b}]$ iff $\mathrm{val}_{{\bf S},\mu[\bar{x}\mapsto\bar{b}]}(t_\varphi) = \texttt{true}^{\bf S}$. The isolating formula $\chi$ is just an instance of an $\mathrm{FO}_{wo=}$-formula of vocabulary $\Sigma_W$. 
\end{proof}

\subsection{Characterization}

With these tools, we can now show that every update set produced by a parallel algorithm can be programmed by a transition rule of a parallel ASM.  

\begin{definition}

For $(f, (a_1, \ldots, a_r), a_0) \in \tau_A({\bf I}) - {\bf I}$, ${\bf S}$ a state of computation corresponding to ${\bf I}$ and $W$ a parallel bounded exploration witness for $A$, let $\chi^{\bar{a}}(x_0, x_1, \ldots, x_r)$ be the isolating formula (in Lemma~\ref{isolating_formula}) for the $\mathrm{FO}_{wo=}$-type of the critical tuple $\bar{a} = (a_0, a_1, \ldots, a_r)$ in the critical structure ${\bf S}|_{C_W}$ and let $t^{\bar{a}}_{\chi}(x_0, x_1, \ldots, x_r)$ be its corresponding isolating term (in Lemma~\ref{Lemma:IsolatingTerms}). We define $r^{\bf S}_{A,W}$ as the parallel combination of the following set of update rules:\\[0.2cm]
$P^{\bf S}_{A} =\{\textbf{forall } x_0, x_1, \ldots, x_r \textbf{ with } t^{\bar{a}}_{\chi}(x_0, x_1, \ldots, x_r) \textbf{ do } f(x_1, \ldots, x_r) := x_0 \mid$\\[0.2cm]
\hspace*{1.2cm} $\bar{a} = (a_0, a_1, \ldots, a_r) \in (S|_{W})^{r+1} \text{ and } (f, (a_1, \ldots, a_r), a_0) \in \tau_A({\bf I}) - {\bf I}\}$

\end{definition}

\begin{corollary}
\label{corollary:asmRule}
If ${\bf S}$ is a computation state that corresponds to a state $\bf I$ of a parallel algorithm $A$ and $W$ is a witness set for $A$, then $\Delta(r^{\bf S}_{A,W}, {\bf S}) = \tau_A({\bf I}) - {\bf I}$.
\end{corollary}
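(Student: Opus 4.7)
My plan is to establish both set inclusions. For $\tau_A({\bf I}) - {\bf I} \subseteq \Delta(r^{\bf S}_{A,W}, {\bf S})$, I would take an arbitrary update $(f, (a_1, \ldots, a_r), a_0) \in \tau_A({\bf I}) - {\bf I}$. By Lemma~\ref{criticalelements}, $\bar{a} = (a_0, a_1, \ldots, a_r)$ is a critical tuple in $(V_{{\bf S},W})^{r+1}$, so the corresponding forall rule appears as a component of $r^{\bf S}_{A,W}$. Since $\chi^{\bar{a}}$ is an isolating formula for $\mathit{tp}^{\mathrm{FO}_{wo=}}_{{\bf S}|_W}(\bar{a})$, we have ${\bf S}|_W \models \chi^{\bar{a}}[\bar{a}]$, and Lemma~\ref{Lemma:IsolatingTerms} then yields $\mathrm{val}_{{\bf S}, \mu[\bar{x} \mapsto \bar{a}]}(t^{\bar{a}}_\chi) = \texttt{true}^{\bf S}$. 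By the semantics of the forall rule, the update is produced.

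For the reverse inclusion, take any $(f, (b_1, \ldots, b_r), b_0) \in \Delta(r^{\bf S}_{A,W}, {\bf S})$. Because $r^{\bf S}_{A,W}$ is a parallel composition of forall rules indexed by critical tuples, this update must come from some component labelled by $\bar{a} = (a_0, \ldots, a_r)$ with $(f, (a_1, \ldots, a_r), a_0) \in \tau_A({\bf I}) - {\bf I}$, witnessed by $\mathrm{val}_{{\bf S}, \mu[\bar{x} \mapsto \bar{b}]}(t^{\bar{a}}_\chi) = \texttt{true}^{\bf S}$. Provided $\bar{b} \in (V_{{\bf S},W})^{r+1}$, Lemma~\ref{Lemma:IsolatingTerms} gives ${\bf S}|_W \models \chi^{\bar{a}}[\bar{b}]$, Lemma~\ref{isolating_formula} then yields $\mathit{tp}^{\mathrm{FO}_{wo=}}_{{\bf S}|_W}(\bar{b}) = \mathit{tp}^{\mathrm{FO}_{wo=}}_{{\bf S}|_W}(\bar{a})$, and Lemma~\ref{also_in_update_set} delivers $(f, (b_1, \ldots, b_r), b_0) \in \tau_A({\bf I}) - {\bf I}$.

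The main obstacle is justifying the side condition that the firing tuple $\bar{b}$ lies in $(V_{{\bf S},W})^{r+1}$, since the forall formally ranges over the entire base set of $\bf I$ rather than just over critical values. I would dispatch this by inspecting the structure of $t^{\bar{a}}_\chi$: positive atomic subterms (coming from $R_{\alpha}$-literals present in $\chi^{\bar{a}}$) unfold via Lemma~\ref{Lemma:IsolatingTerms} into multiset comprehension assertions that pin each free variable to a component of some witness-term tuple, which forces those arguments to lie in $V_{{\bf S},W}$; the nested existential and universal quantifiers are already relativised by $t_{dom}$ to the critical domain. For the edge case where $\chi^{\bar{a}}$ realises a purely negative atomic type, I would augment the construction of $r^{\bf S}_{A,W}$ by conjoining explicit domain guards $t_{dom}(x_0) \wedge \cdots \wedge t_{dom}(x_r)$ into each forall rule; this leaves the forward direction unchanged (since critical tuples satisfy $t_{dom}$ by definition), while ensuring that only critical tuples can fire the rule and so closing the reverse direction.
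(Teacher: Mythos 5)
Your proof follows the same route as the paper's: the forward inclusion via Lemma~\ref{criticalelements} (plus the fact that $\bar{a}$ satisfies its own isolating formula, hence its isolating term), and the reverse inclusion via Lemmas~\ref{Lemma:IsolatingTerms}, \ref{isolating_formula} and~\ref{also_in_update_set}. The one place you diverge is also the place where you are more careful than the paper. The paper's proof of the reverse inclusion only observes that the semantics of assignment forces the firing tuple into $I^{r+1}$ and then immediately invokes Lemma~\ref{also_in_update_set}; but that lemma (and Lemma~\ref{Lemma:IsolatingTerms}) is stated only for tuples in $(V_{{\bf S},W})^{r+1}$, and membership in $I^{r+1}$ does not by itself give criticality. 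You correctly identify this as the real obstacle: since the \textbf{forall} ranges over all of $I$, a non-critical tuple could in principle satisfy $t^{\bar{a}}_{\chi}$ (e.g.\ when $\chi^{\bar{a}}$ realises a quantifier-free part consisting only of negated atoms), and then none of the cited lemmas applies to it. Your fix --- conjoining the guards $t_{dom}(x_0)\wedge\cdots\wedge t_{dom}(x_r)$ into each rule of $P^{\bf S}_A$, which is harmless for the forward direction because critical tuples satisfy $t_{dom}$ by construction --- is a legitimate and arguably necessary repair, and it is compatible with the rest of the development (Lemmas~\ref{L1}, \ref{L2} and~\ref{WSameUpdate} go through unchanged with the guarded rules). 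So: same decomposition and same key lemmas, but your version closes a step the paper leaves implicit.
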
  

\begin{proof}
Since ${\bf S}|_{W}$ is finite and the vocabulary of ${\bf S}$ has a finite number of function symbols of fixed arity, we get that the set $P^{\bf S}_{A}$ is finite too. By Lemma~\ref{criticalelements}, we clearly have that $\tau_A({\bf I}) - {\bf I} \subseteq \Delta(r^{\bf S}_{A,W}, {\bf S})$. On the other hand, by the semantics of the assignment rule in Definition~\ref{ASMrules}, we have that $((f, (a_1, \ldots, a_r)), a_0) \in \Delta(r^{\bf S}_{A,W}, {\bf S})$ only if $(a_0, a_1, \ldots, a_r) \in I^r$.  Thus by Lemma~\ref{also_in_update_set} we also have that $\Delta(r^{\bf S}_{A,W}, S) \subseteq \tau_A({\bf I}) - {\bf I}$.
\end{proof}

Note that the rule $r^{\bf S}_{A,W}$ in Corollary \ref{corollary:asmRule} only involves critical terms that appear in the chosen bounded exploration witness $W$. This also implies that the rule is by no means uniquely determined.

For two different states ${\bf S}$ and ${\bf S}'$ of a parallel algorithm $A$ with bounded exploration witness $W$, the rules $r^{\bf S}_{A,W}$ and $r^{{\bf S}'}_{A,W}$ can of course be quite different. Nevertheless, if ${\bf S}$ and ${\bf S}'$ coincide on $W$, then $r^{\bf S}_{A,W}$ and $r^{{\bf S}'}_{A,W}$ coincide.

\begin{lemma}\label{L1}
Let ${\bf I}$ and ${\bf I}'$ be states of a parallel algorithm $A$, let $W$ be a bounded exploration witness for $A$ and let ${\bf S}$ and ${\bf S}'$ be computation states of $A$ that coincide on $W$ and correspond to ${\bf I}$ and ${\bf I}'$, respectively. Then $\Delta(r^{\bf S}_{A,W}, {\bf S}') = \tau_A({\bf I}') - {\bf I}'$.
\end{lemma}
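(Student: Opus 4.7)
\textbf{Proof plan for Lemma \ref{L1}.} The strategy is to reduce to Corollary~\ref{corollary:asmRule} by showing that $r^{\bf S}_{A,W}$ produces identical update sets when fired in ${\bf S}$ and when fired in ${\bf S}'$. First I would invoke the Bounded Exploration postulate on the witness $W$: since ${\bf S}$ and ${\bf S}'$ coincide on $W$, it yields $\tau_A({\bf I}) - {\bf I} = \tau_A({\bf I}') - {\bf I}'$. Combined with Corollary~\ref{corollary:asmRule}, which gives $\Delta(r^{\bf S}_{A,W}, {\bf S}) = \tau_A({\bf I}) - {\bf I}$, it then suffices to establish the identity $\Delta(r^{\bf S}_{A,W}, {\bf S}') = \Delta(r^{\bf S}_{A,W}, {\bf S})$.

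The key step is to show that the critical substructures ${\bf S}|_W$ and ${\bf S}'|_W$ are canonically isomorphic via a bijection that is the identity on the common set of critical values. Since the set $V_{{\bf S},W}$ of critical values is defined in terms of the multiset values $\mathrm{val}_{\bf S}(\alpha_i)$ for $\alpha_i \in W$, and these coincide with $\mathrm{val}_{{\bf S}'}(\alpha_i)$ by hypothesis, we have $V_{{\bf S},W} = V_{{\bf S}',W}$. Moreover, the interpretation of each relation symbol $R_{\alpha_i}$ in its first $n+1$ coordinates is determined solely by the value of $\alpha_i$, so only the auxiliary labels of the last coordinate--used only to encode multiplicities--can differ; these labels can be matched one-for-one. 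It follows that for every tuple $\bar{b}$ of critical values, $\bar{b}$ has the same $\mathrm{FO}_{wo=}$-type in ${\bf S}|_W$ as in ${\bf S}'|_W$.

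From this isomorphism the argument closes quickly. By Lemma~\ref{Lemma:IsolatingTerms}, for every isolating term $t^{\bar a}_\chi$ appearing in $r^{\bf S}_{A,W}$ and every tuple $\bar b \in (V_{{\bf S}',W})^{r+1}$, the value $\mathrm{val}_{{\bf S}',\mu[\bar x \mapsto \bar b]}(t^{\bar a}_\chi) = \texttt{true}^{{\bf S}'}$ iff ${\bf S}'|_W \models \chi^{\bar a}[\bar b]$, which by the isolating property and the previous step happens precisely when $\bar b$ shares the $\mathrm{FO}_{wo=}$-type of $\bar a$ in ${\bf S}|_W$--exactly the tuples on which $t^{\bar a}_\chi$ evaluates to $\texttt{true}^{\bf S}$. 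Since $r^{\bf S}_{A,W}$ is a parallel combination of forall-guarded assignment rules each of the form $\textbf{forall}\;\bar x\;\textbf{with}\;t^{\bar a}_\chi(\bar x)\;\textbf{do}\;f(x_1,\ldots,x_r):=x_0$, we obtain $\Delta(r^{\bf S}_{A,W}, {\bf S}') = \Delta(r^{\bf S}_{A,W}, {\bf S}) = \tau_A({\bf I})-{\bf I} = \tau_A({\bf I}')-{\bf I}'$, as required. Consistency with the fact that these updates really are in $\tau_A({\bf I}')-{\bf I}'$ is underwritten by Lemma~\ref{also_in_update_set} applied to $({\bf I}', {\bf S}')$, since any tuple matching the type of a critical tuple defining an update in $\tau_A({\bf I}')-{\bf I}'$ must itself define such an update.

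The main obstacle, as I see it, lies in the careful treatment of the label coordinates in the definition of the critical structure: because ${\bf I}$ and ${\bf I}'$ may have disjoint primary parts, the strings $i\cdot b_0\cdots b_n\cdot a_1\cdots a_r$ occurring in $R^{{\bf S}|_W}_{\alpha_i}$ and $R^{{\bf S}'|_W}_{\alpha_i}$ need not coincide literally, so one has to argue that these labels are just multiplicity bookkeeping and that the canonical matching between them preserves $\mathrm{FO}_{wo=}$-equivalence on tuples consisting only of genuine critical values--which are the only tuples the isolating formulas ever need to discriminate.
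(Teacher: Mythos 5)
Your proposal is correct and follows essentially the same route as the paper's proof: both arguments combine the Bounded Exploration postulate (giving $\tau_A({\bf I})-{\bf I}=\tau_A({\bf I}')-{\bf I}'$), Corollary~\ref{corollary:asmRule}, and the observation that coincidence on $W$ forces ${\bf S}|_W$ and ${\bf S}'|_W$ to be isomorphic via a map fixing the common critical values, so that critical tuples have the same $\mathrm{FO}_{wo=}$-types in both critical structures. The only cosmetic difference is that the paper concludes the syntactic identity $r^{\bf S}_{A,W}=r^{{\bf S}'}_{A,W}$ and applies the corollary to ${\bf S}'$, whereas you apply the corollary to ${\bf S}$ and argue that the single rule $r^{\bf S}_{A,W}$ yields the same update set in both states; your extra care about the multiplicity-encoding label coordinates is a sound elaboration of the isomorphism the paper merely asserts.
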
 

\begin{proof}
By Corollary~\ref{corollary:asmRule}, we get $\Delta(r^{{\bf S}'}_{A,W}, {\bf S}') = \tau_A({\bf I}') - {\bf I}'$. Since ${\bf S}$ and ${\bf S}'$ coincide on $W$, it follows that $\tau_A({\bf I}) - {\bf I} = \tau_A({\bf I}') - {\bf I}'$ (by the Bounded Exploration postulate), and that ${\bf S}|_W$ and ${\bf S}'|_W$ are isomorphic by an isomorphism $\zeta$ such that $\zeta(a) = a$ for every critical value $a \in V_{{\bf S}|_W} = V_{{\bf S}'|_W}$. Hence, for every $(f, (a_1, \ldots, a_r), a_0) \in \tau_A({\bf I}) - {\bf I}$, $\mathit{tp}_{{\bf S}|_W}^{\mathrm{FO}_{wo=}}((a_0, a_1, \ldots, a_r)) = \mathit{tp}_{{\bf S}'|_W}^{\mathrm{FO}_{wo=}}((a_0, a_1, \ldots, a_r))$. Thus, $r^{\bf S}_{A,W} = r^{{\bf S}'}_{A,W}$ (by construction) and consequently $\Delta(r^{\bf S}_{A,W}, {\bf S}') = \Delta(r^{{\bf S}'}_{A,W}, {\bf S}') = \tau_A({\bf I}') - {\bf I}'$.
\end{proof}

\begin{lemma}\label{L2}
Let ${\bf I}$, ${\bf I}'$, and ${\bf I}''$ be states of a parallel algorithm $A$, let $W$ be a bounded exploration witness for $A$ and let ${\bf S}$, ${\bf S}'$, and ${\bf S}''$ be computation states of $A$ that correspond to ${\bf I}$, ${\bf I}'$, and ${\bf I}''$, respectively. If ${\bf S}' \simeq {\bf S}''$ and $\Delta(r^{\bf S}_{A, W}, {\bf S}'') = \tau_A({\bf I}'') - {\bf I}''$, then $\Delta(r^{\bf S}_{A, W}, {\bf S}') = \tau_A({\bf I}') - {\bf I}'$ 
\end{lemma}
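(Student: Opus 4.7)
The plan is to transport everything along the given isomorphism $\zeta:\mathbf{S}'\to\mathbf{S}''$ and exploit the fact that both $\tau_A$ and ASM rules respect isomorphisms.

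First I would observe that since the computation states $\mathbf{S}'$ and $\mathbf{S}''$ are assembled from the algorithm states $\mathbf{I}'$, $\mathbf{I}''$ in the canonical way prescribed by Definition~\ref{Def:CompState}, the isomorphism $\zeta$ restricts to an isomorphism $\zeta|_{\mathbf{I}'}:\mathbf{I}'\to\mathbf{I}''$ (the primary part and bridge functions are literally part of both structures; the background part is determined up to isomorphism by the base set via $\mathcal{K}$, so isomorphism of computation states forces isomorphism of the underlying algorithm states). By the Abstract State postulate this extends to an isomorphism between $\tau_A(\mathbf{I}')$ and $\tau_A(\mathbf{I}'')$, and hence by Lemma~\ref{lemma:IsoExtendToUpdateSets} we obtain
\[
\zeta\bigl(\tau_A(\mathbf{I}')-\mathbf{I}'\bigr)=\tau_A(\mathbf{I}'')-\mathbf{I}''.
\]

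The second step is to show that $\zeta\bigl(\Delta(r^{\mathbf{S}}_{A,W},\mathbf{S}')\bigr)=\Delta(r^{\mathbf{S}}_{A,W},\mathbf{S}'')$. The rule $r^{\mathbf{S}}_{A,W}$ is a purely syntactic object built (through parallel combination, forall constructs and assignments) from the isolating terms $t^{\bar{a}}_{\chi}$ of Lemma~\ref{Lemma:IsolatingTerms}, which in turn are terms of the vocabulary $\Sigma$ in the sense of Definition~\ref{Def:terms}. A straightforward structural induction --- first on terms (point terms, bridge terms, and multiset comprehension terms all evaluate equivariantly under $\zeta$ because $\zeta$ preserves every function symbol and maps tuples satisfying a Boolean-valued term to tuples satisfying it) and then on ASM rules --- shows that applying a fixed rule to isomorphic computation states yields isomorphic (i.e., $\zeta$-related) update sets. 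This is the only place that requires any work, and it is essentially the same computation that underlies the preservation-of-isomorphisms claim in the proof of Theorem~\ref{Th:Plausibility}.

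Combining the two displayed equations with the hypothesis $\Delta(r^{\mathbf{S}}_{A,W},\mathbf{S}'')=\tau_A(\mathbf{I}'')-\mathbf{I}''$ gives
\[
\zeta\bigl(\Delta(r^{\mathbf{S}}_{A,W},\mathbf{S}')\bigr)=\Delta(r^{\mathbf{S}}_{A,W},\mathbf{S}'')=\tau_A(\mathbf{I}'')-\mathbf{I}''=\zeta\bigl(\tau_A(\mathbf{I}')-\mathbf{I}'\bigr).
\]
Because $\zeta$ is a bijection and the action of $\zeta$ on update sets defined in Lemma~\ref{lemma:IsoExtendToUpdateSets} is injective, we may cancel $\zeta$ on both sides and conclude $\Delta(r^{\mathbf{S}}_{A,W},\mathbf{S}')=\tau_A(\mathbf{I}')-\mathbf{I}'$, as required. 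The main (and essentially only) obstacle is the isomorphism-invariance of the ASM rule semantics for rules involving multiset comprehension terms; everything else is bookkeeping that combines the Abstract State postulate with Lemma~\ref{lemma:IsoExtendToUpdateSets}.
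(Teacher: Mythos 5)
Your proposal is correct and follows essentially the same route as the paper: transport both update sets along $\zeta$ via Lemma~\ref{lemma:IsoExtendToUpdateSets} and the isomorphism-invariance of ASM rule semantics, then cancel $\zeta$. The only difference is that you spell out the structural induction establishing $\zeta(\Delta(r^{\bf S}_{A,W},{\bf S}'))=\Delta(r^{\bf S}_{A,W},{\bf S}'')$, which the paper's proof leaves implicit.
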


\begin{proof}
Let $\zeta$ be an isomorphism from ${\bf S}'$ to ${\bf S}''$. Extend it to locations and updates. Then, by Lemma~\ref{lemma:IsoExtendToUpdateSets}, we have that $\tau_A({\bf I}'') - {\bf I}'' = \zeta(\tau_A({\bf I}') - {\bf I}')$ and that $\Delta(r^{\bf S}_{A, W}, {\bf S}'') = \zeta(\Delta(r^{\bf S}_{A, W}, {\bf S}'))$. Since by assumption $\Delta(r^{\bf S}_{A, W}, {\bf S}'') = \tau_A({\bf I}'') - {\bf I}''$, we get $\zeta(\Delta(r^{\bf S}_{A, W}, {\bf S}')) = \zeta(\tau_A({\bf I}') - {\bf I}')$ and hence $\Delta(r^{\bf S}_{A, W}, {\bf S}') = \tau_A({\bf I}') - {\bf I}'$ as $\zeta$ is an isomorphism. 
\end{proof}

In our last lemma, we show that if ${\bf S}$ and ${\bf S}'$ are similar in the sense of the following definition, then the rule $r^{\bf S}_{A,W}$ when evaluated in ${\bf S}'$ produces the correct (according to the algorithm $A$) set of updates.  

\begin{definition} 
Let $W$ be a bounded exploration witness for a parallel algorithm $A$, we say that two states ${\bf S}$ and ${\bf S}'$ of computation of $A$ are \emph{$W$-similar} if for all $\alpha_i, \alpha_j \in W$, it holds that $\text{val}_{\bf S}(\alpha_i) = \text{val}_{\bf S}(\alpha_j)$ iff $\text{val}_{{\bf S}'}(\alpha_i) = \text{val}_{{\bf S}'}(\alpha_j)$.
\end{definition}

\begin{lemma}\label{WSameUpdate}
Let ${\bf I}$ and ${\bf I}'$ be states of a parallel algorithm $A$. Let $W$ be a bounded exploration witness for $A$. Let ${\bf S}$ and ${\bf S}'$ be states of computation of $A$ which correspond to ${\bf I}$ and ${\bf I}'$, respectively. If ${\bf S}$ and ${\bf S}'$ are $W$-similar, then $\Delta(r^{\bf S}_{A,W}, {\bf S}') = \tau_A({\bf I}') - {\bf I}'$.
\end{lemma}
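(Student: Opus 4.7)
The plan is to reduce to Lemma \ref{L1} via Lemma \ref{L2} by constructing an intermediate state of computation $\mathbf{S}''$ which is isomorphic to $\mathbf{S}'$ and which coincides with $\mathbf{S}$ on $W$. Once such an $\mathbf{S}''$ is available, Lemma \ref{L1} applied to $\mathbf{S}$ and $\mathbf{S}''$ yields $\Delta(r^{\mathbf{S}}_{A,W}, \mathbf{S}'') = \tau_A(\mathbf{I}'') - \mathbf{I}''$, where $\mathbf{I}''$ is the state of $A$ corresponding to $\mathbf{S}''$. Then Lemma \ref{L2}, applied with $\mathbf{S}' \simeq \mathbf{S}''$, transfers this equality back to $\mathbf{S}'$, giving the desired conclusion $\Delta(r^{\mathbf{S}}_{A,W}, \mathbf{S}') = \tau_A(\mathbf{I}') - \mathbf{I}'$.

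\textbf{Construction of $\mathbf{S}''$.} The intermediate state is produced by renaming $\mathbf{S}'$ through a carefully chosen bijection $\zeta$. The first ingredient is a bijection $\pi$ between the critical values $V_{\mathbf{S}',W}$ and $V_{\mathbf{S},W}$ that, when lifted pointwise to tuples and multisets, maps $\mathrm{val}_{\mathbf{S}'}(\alpha)$ to $\mathrm{val}_{\mathbf{S}}(\alpha)$ for every $\alpha \in W$. The $W$-similarity hypothesis, combined with the standing assumption that $W$ is closed under subterms, is what makes $\pi$ well defined: for a compound witness term $\multiset{(t_1,\ldots,t_n) \mid \varphi}_{\bar{y}} \in W$, subterm closure provides the projection witness terms $\multiset{t_i \mid \exists \bar{x}\, \varphi} \in W$ that track each coordinate individually, and the equality pattern among all such multisets (preserved in both states by $W$-similarity) pins down the combinatorial structure of the critical values sufficiently to determine $\pi$.

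\textbf{Finishing up.} Once $\pi$ has been defined on the critical values, it is extended to a full bijection $\zeta$ on the base set of $\mathbf{S}'$ by mapping non-critical elements bijectively to elements outside $V_{\mathbf{S},W}$; the countably infinite reserve guaranteed by Definition \ref{Def:theReserve} absorbs any cardinality discrepancy between the non-critical parts. The state $\mathbf{S}''$ is then the $\zeta$-image of $\mathbf{S}'$; by the Abstract State postulate it is a legitimate computation state of $A$, and its corresponding state $\mathbf{I}''$ satisfies $\tau_A(\mathbf{I}'') - \mathbf{I}'' = \zeta(\tau_A(\mathbf{I}') - \mathbf{I}')$ by Lemma \ref{lemma:IsoExtendToUpdateSets}. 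By the choice of $\pi$, $\mathbf{S}$ and $\mathbf{S}''$ coincide on $W$, so Lemmas \ref{L1} and \ref{L2} apply as described above.

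\textbf{Main obstacle.} The heart of the argument is the construction of the bijection $\pi$. The subtle point is that $W$-similarity only records when two whole witness-term evaluations are equal, yet we must extract from this a consistent renaming of individual critical values that simultaneously respects the multiplicities and coordinate patterns of every $\alpha \in W$. Subterm-closure is the essential lever: by decomposing each compound witness term into its projection terms, one obtains enough equality data to uniquely reconstruct how each critical value is witnessed across $W$, and hence to align the witness structure of $\mathbf{S}'$ with that of $\mathbf{S}$. Everything else in the proof is then a routine chaining of Lemmas \ref{L1}, \ref{L2}, and \ref{lemma:IsoExtendToUpdateSets} together with the Abstract State postulate.
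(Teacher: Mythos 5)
Your overall skeleton matches the paper's: build an intermediate computation state $\mathbf{S}''$ that is isomorphic to $\mathbf{S}'$ and coincides with $\mathbf{S}$ on $W$, then chain Lemma~\ref{L1} and Lemma~\ref{L2}. But the step you yourself identify as the heart of the argument --- the construction of a bijection $\pi$ between the \emph{individual critical values} $V_{\mathbf{S}',W}$ and $V_{\mathbf{S},W}$ whose pointwise lift carries $\mathrm{val}_{\mathbf{S}'}(\alpha)$ to $\mathrm{val}_{\mathbf{S}}(\alpha)$ for every $\alpha\in W$ --- does not follow from $W$-similarity and in general no such $\pi$ exists. $W$-similarity only records the equality pattern \emph{within} each state among the whole multiset values of the witness terms; it imposes no constraint whatsoever on the internal composition of those multisets across the two states. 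For a one-element witness set $W=\{\multiset{x\mid V(x)}\}$ any two states are vacuously $W$-similar, yet the value of the single witness term may be a multiset of cardinality $3$ in $\mathbf{S}$ and of cardinality $5$ in $\mathbf{S}'$; no pointwise bijection of critical values can map one onto the other. Subterm closure does not rescue this: the projection terms give you more within-state equality data, but still nothing that aligns element counts or coordinate patterns between the two states. So the object your construction rests on need not exist, and the argument collapses at that point.

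The paper sidesteps this entirely by never descending to individual critical values. After arranging (w.l.o.g.) that the base sets of $\mathbf{S}$ and $\mathbf{S}'$ are disjoint, it defines $\zeta$ directly on the witness-term values regarded as opaque elements of the base set: $\zeta(\mathrm{val}_{\mathbf{S}'}(\alpha_i)) = \mathrm{val}_{\mathbf{S}}(\alpha_i)$ for each $\alpha_i\in W$ and $\zeta(a)=a$ otherwise. Well-definedness and injectivity of this finite replacement are exactly what $W$-similarity delivers (equal values in $\mathbf{S}'$ go to equal values in $\mathbf{S}$, distinct to distinct), and disjointness prevents collisions with the identity part; since $\zeta$ fixes $I'$, the image $\mathbf{S}''$ is again a computation state corresponding to $\mathbf{I}'$ and coincides with $\mathbf{S}$ on $W$. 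If you replace your $\pi$-based construction with this wholesale swap of witness values, the rest of your argument (Lemma~\ref{L1}, then Lemma~\ref{L2} via $\mathbf{S}'\simeq\mathbf{S}''$) goes through as you describe.
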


\begin{proof}
W.l.o.g. we assume that the base sets of ${\bf I}$ and ${\bf I}'$ are disjoint. Otherwise we can always take an isomorphic copy of ${\bf I}'$ with no elements from ${\bf I}$. Consequently the base set of ${\bf S}$ is disjoint from the base set of ${\bf S'}$. Let $\zeta$ be a function that replaces in the base set of ${\bf S}'$, the values of each witness terms in $W$ with its corresponding value in $\bf S$, i.e., $\zeta(\mathrm{val}_{{\bf S}'}(\alpha_i)) = \mathrm{val}_{{\bf S}}(\alpha_i)$ for every $\alpha_i \in W$, and $\zeta(a) = a$ if $a$ is not the value of a witness term in $W$. Since $S$ and $S'$ are disjoint and $W$-similar, $\zeta$ is a well defined function and a bijection. Let ${\bf S}''$ be the isomorphic image of ${\bf S}'$ under $\zeta$. Since $\zeta(a) = a$ for all $a \in I'$ and ${\bf S}'' \simeq {\bf S}'$ , we have that ${\bf S}''$ is also a computation state of ${\bf I}'$. Clearly, ${\bf S}$ and ${\bf S}''$ coincide on $W$.  By Lemma~\ref{L1} we get that $\Delta(r^{\bf S}_{A,W}, {\bf S}'') = \tau_A({\bf I}') - {\bf I}'$. Finally, by Lemma~\ref{L2} we obtain $\Delta(r^{\bf S}_{A,W}, {\bf S}') = \tau_A({\bf I}') - {\bf I}'$ as claimed.   

\end{proof}

We can now prove our main characterization theorem. 

\begin{theorem}
For every parallel algorithm there is a behaviourally equivalent parallel ASM. 
\end{theorem}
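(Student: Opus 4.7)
The plan is to build the main rule of the desired ASM as a single big parallel combination of guarded rules of the form $r^{\bf S}_{A,W}$ from Corollary~\ref{corollary:asmRule}, one guarded rule per $W$-similarity class of states. The essential observation is that $W$-similarity has only finitely many equivalence classes: since $W = \{\alpha_1,\dots,\alpha_m\}$ is finite, the ``equality pattern'' $E_{\bf S} = \{(i,j) \mid \text{val}_{\bf S}(\alpha_i) = \text{val}_{\bf S}(\alpha_j)\}$ of any computation state $\bf S$ is one of at most $B_m$ equivalence relations on $\{1,\dots,m\}$, and two states are $W$-similar exactly when they share this pattern. Moreover, because computation states corresponding to the same state of $A$ are isomorphic by an isomorphism fixing $I$, the pattern $E_{\bf S}$ depends only on the underlying state $\bf I$ of $A$.

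First, I would enumerate all equality patterns $E_1,\dots,E_N$ that are actually realized by some state of $A$, and for each $E_k$ pick a representative state ${\bf I}_k$ of $A$ with a corresponding computation state ${\bf S}_k$, then set $r_k := r^{{\bf S}_k}_{A,W}$ as defined just before Corollary~\ref{corollary:asmRule}. Next, for each $E_k$ I would write a Boolean-valued guard term
\[
\varphi_k \;\equiv\; \bigwedge_{(i,j)\in E_k}(\alpha_i = \alpha_j)\ \wedge\ \bigwedge_{(i,j)\notin E_k}\neg(\alpha_i = \alpha_j),
\]
which is well-formed in the vocabulary of computation states because the Background postulate provides equality on all of $B$, and each $\alpha_i$ is a closed multiset comprehension term evaluating in $\bf B$. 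By construction, $\mathrm{val}_{\bf S}(\varphi_k) = \texttt{true}^{\bf S}$ iff $E_{\bf S} = E_k$, so in every computation state exactly one of the guards $\varphi_1,\dots,\varphi_N$ holds.

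The main rule of the ASM $\mathcal{M}$ is then
\[
r_{\mathcal{M}} \;:=\; \textbf{par}\ \textbf{if}\ \varphi_1\ \textbf{then}\ r_1\ \textbf{endif}\ \cdots\ \textbf{if}\ \varphi_N\ \textbf{then}\ r_N\ \textbf{endif}\ \textbf{endpar},
\]
and I define $\mathcal{M}$ by taking $\mathcal{S}_{\mathcal{M}} := \mathcal{S}_A$, $\mathcal{I}_{\mathcal{M}} := \mathcal{I}_A$, and the same background class as $A$; the closure under isomorphisms and the preservation properties required in Definition~\ref{ASM} transfer from $A$. To verify behavioural equivalence per Definition~\ref{BEquiv}, I must show $\tau_{\mathcal{M}}({\bf I}) = \tau_A({\bf I})$ for every ${\bf I} \in \mathcal{S}_A$. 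Given such $\bf I$, let $\bf S$ be a corresponding computation state and let $E_k = E_{\bf S}$ be its pattern; then $\varphi_k$ is the unique guard that holds in $\bf S$, so $\Delta(r_{\mathcal{M}},{\bf S}) = \Delta(r_k,{\bf S}) = \Delta(r^{{\bf S}_k}_{A,W},{\bf S})$. Since $\bf S$ and ${\bf S}_k$ are by construction $W$-similar, Lemma~\ref{WSameUpdate} yields $\Delta(r^{{\bf S}_k}_{A,W},{\bf S}) = \tau_A({\bf I}) - {\bf I}$, hence $\tau_{\mathcal{M}}({\bf I}) = {\bf I} + (\tau_A({\bf I})-{\bf I}) = \tau_A({\bf I})$.

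The main obstacle is conceptual rather than computational: it is convincing oneself that, even though every individual rule $r^{\bf S}_{A,W}$ depends on the state, the mapping ${\bf S}\mapsto r^{\bf S}_{A,W}$ is constant on each $W$-similarity class, and the finitely many classes can be separated by closed terms inside the ASM language. The first half of this obstacle is exactly what Lemma~\ref{L1} and Lemma~\ref{WSameUpdate} were designed to settle, and the second half reduces to the observation that equality on background elements and multiset-comparison of the $\alpha_i$ are primitives of the background vocabulary, so that the guards $\varphi_k$ can indeed be expressed as closed terms and combined with the rules $r_k$ into the single closed main rule above.
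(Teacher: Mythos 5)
Your proposal is correct and follows essentially the same route as the paper's own proof: finitely many $W$-similarity classes separated by closed guard terms built from the equality pattern of the witness terms, a representative rule $r^{{\bf S}_k}_{A,W}$ per class combined in a single \textbf{par} block, and Lemma~\ref{WSameUpdate} to show that the selected rule produces exactly $\tau_A({\bf I}')-{\bf I}'$. The only additions beyond the paper's argument are the explicit Bell-number bound on the number of equality patterns and the remark that the pattern depends only on the underlying state of $A$, both of which are harmless elaborations of the paper's observation that finiteness of $W$ yields finitely many similarity classes.
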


\begin{proof} 
Let $A$ be a parallel algorithm, let $W$ be a witness set for $A$, let ${\bf I}$ be a state of $A$ and let $\bf S$ be a computation state that corresponds to $\bf I$. Let $\varphi_{\bf S}$ denote the term which characterized the similarity type of ${\bf S}$ in the sense that for every computation state ${\bf S}'$ of $A$, $\text{val}_{{\bf S}'}(\varphi_{\bf S}) = \texttt{true}^{{\bf S}'}$ holds iff ${\bf S}$ and ${\bf S}'$ are $W$-similar, i.e., let 
\[\varphi_{\bf S} \equiv \bigwedge_{\substack{\alpha_i, \alpha_j \in W \\ \text{val}_{\bf S}(\alpha_i) = \text{val}_{\bf S}(\alpha_j)}} \alpha_i = \alpha_j \quad \wedge \bigwedge_{\substack{\alpha_i, \alpha_j \in W \\ \text{val}_{\bf S}(\alpha_i) \neq \text{val}_{\bf S}(\alpha_j)}} \neg (\alpha_i = \alpha_j)\] 
Since $W$ is finite, there is a finite set  ${\cal S} = \{{\bf S}_1, \ldots, {\bf S}_n\}$ of computation states of $A$ such that the following holds:
\begin{itemize} 
\item For every computation state ${\bf S}'$ of $A$, there is a computation state ${\bf S}_i \in {\cal S}$ which is $W$-similar to ${\bf S}'$. 
\item For every ${\bf S}_i, {\bf S}_j \in {\cal S}$, $val_{{\bf S}_i}(\varphi_{{\bf S}_j}) = \texttt{false}^{{\bf S}_i}$ and $val_{{\bf S}_j}(\varphi_{{\bf S}_i}) = \texttt{false}^{{\bf S}_j}$.
\end{itemize} 
The ASM rule that corresponds to the transition function $\tau_A$ of $A$ can then be defined as the parallel combination of the following rules:
\begin{align*}
&{\bf if} \quad \varphi_{{\bf S}_1} \quad {\bf then} \quad r^{{\bf S}_1}_{A, W} \quad {\bf endif}\\
&\vdots\\
&{\bf if} \quad \varphi_{{\bf S}_n} \quad {\bf then} \quad r^{{\bf S}_n}_{A, W} \quad {\bf endif}
\end{align*}
If ${\bf I}'$ is a state of $A$ and ${\bf S}'$ a state of computation of $A$ which corresponds to ${\bf I}'$, then there is exactly one state of computation ${\bf S}_i \in {\cal S}$ such that ${\bf S}_i$ and ${\bf S}'$ are $W$-similar. Hence, $\text{val}_{{\bf S}'}(\varphi_{{ \bf S}_i}) = \texttt{true}^{{\bf S}'}$ and by Lemma~\ref{WSameUpdate} $\Delta(r^{\bf S}_{A,W}, {\bf S}') = \tau_A({\bf I}') - {\bf I}'$. 
\end{proof}

\section{Conclusions}

In this article we revisited the problem of the ``parallel ASM thesis'' (see \cite{[BG03]}), i.e. to provide a machine-independent definition of parallel algorithm and a proof that these algorithms are faithfully captured by Abstract State Machines. The main motivation is the often uttered conviction that although the mathematical proof is correct, the definition of parallel algorithm given by Blass and Gurevich in \cite{[BG03]} is not convincing, as the postulates reside too much on the technical side and do not provide the same level of intuitive clarity as the postulates for sequential algorithms. Our intention was thus to prove the conjecture in \cite{[SW2012a]}, according to which four simplified postulates suffice to justify ASMs as a general model for parallel computation, i.e. to provide a more intuitive set of postulates and to formally prove that parallel algorithms as stipulated by these new postulates are indeed captured by ASMs.

As a matter of fact, postulates are always debatable, so we open the debate, whether the goal to provide an intuitively clear and acceptable characterisation of synchronous parallel algorithms has now been reached. Technically, the new set of postulates is equivalent to the one given by Blass and Gurevich, as both are captured exactly by ASMs.

The set of postulates for synchronous parallel algorithms presented in this article is rather close to the one used for sequential algorithms \cite{[Gurevich00]}, which has been widely accepted by the scientific community. There are two main differences. The first one is the addition of a background postulate analogous to the background postulate in \cite{[BG03]}. In a sense, this postulates makes all assumptions about the background of a computation explicit. It is only necessary, as there is a need to exploit tuples and multisets, which are not required in sequential algorithms. In a strict formal sense there is also a background for sequential algorithms, but the assumptions have been left implicit. The second one is the extension of the bounded exploration postulates, which still claims a finite set of exploration witness terms that determine update sets, but instead of simple ground terms now multiset comprehension terms are needed. By means of these the varying parallel branches in a parallel computation that depend not only on the algorithm but also on the state are captured and there is no need for a separate concept of ``proclet''. 

With the new parallel ASM thesis at hand we will now proceed further towards a theis for concurrent ASMs capturing asynchronous parallel algorithms. The work in \cite{boerger:ai2015} contains a first attempt in this direction, which so far is restricted to families of sequential algorithms. We believe that with the result in this article we can achieve an easy generalisation to families of parallel algorithms.

\bibliographystyle{elsarticle-num} 
\bibliography{thesis}

\end{document}